%% file: main.tex
\documentclass[sigconf, nonacm]{acmart}

\usepackage{amsmath, mathtools}
\usepackage{siunitx}
\usepackage{upgreek}
\usepackage{amsfonts}
\usepackage{enumitem}
\usepackage[normalem]{ulem}
\usepackage[linesnumbered,vlined,ruled,commentsnumbered]{algorithm2e}
\usepackage[para,online,flushleft]{threeparttable}
\usepackage{multirow}
\usepackage{multicol}
\usepackage{hyperref}
\usepackage{url}
\usepackage{bm}
\usepackage[labelfont=bf,skip=0pt]{caption,subfig}
\usepackage{flushend}

\usepackage{tabularx}
\usepackage{float}

\usepackage[english]{babel}
\usepackage{graphicx}
\usepackage{subcaption}

\usepackage{listings}
\usepackage{tablefootnote}
\usepackage{comment}
\usepackage{tikz}

\DeclareMathOperator*{\argmin}{arg\,min}

\newcommand{\G}{\boldsymbol{G}}
\newcommand{\E}{\boldsymbol{E}}
\newcommand{\Vnodes}{\mathcal{V}}
\newcommand{\Unodes}{\mathcal{U}}
\newcommand{\nodes}{\mathcal{D}}
\newcommand{\edges}{\mathcal{E}}
\newcommand{\pone}{\mathsf{WBC}}
\newcommand{\ptwo}{\mathsf{GWBC}}
\newcommand{\dpone}{\mathsf{dWBC}}
\newcommand{\dptwo}{\mathsf{dGWBC}}
\newcommand{\V}{\boldsymbol{V}}

\newcommand{\U}{\boldsymbol{U}}
\newcommand{\eps}{\varepsilon}
\renewcommand{\epsilon}{\varepsilon}
\renewcommand{\Re}{\mathbb{R}}

\newcommand{\cover}{{Q_\textsf{cover}}}

\newcommand{\dist}{\psi}

\newcommand{\node}{u}

\newcommand{\rects}{\mathcal{R}}
\newcommand{\rect}{r}

\newcommand{\edgecover}{\mathcal{S}}
\renewcommand{\cover}{S}
\newcommand{\base}{\alpha}

\allowdisplaybreaks

\newcommand{\baseone}{\textsf{Role}}
\newcommand{\basetwo}{\textsf{AMBEA}}
\newcommand{\basethree}{CPGR}
\newcommand{\ours}{\textsf{RT-BC}}

\newcommand{\ratio}{\hat{\phi}}
\newcommand{\bestRatio}{\phi^*}
\newcommand{\bestS}{S'}

\begin{document}

\title{Compact Representations of Geometric Bipartite Graphs via Weighted Biclique Covers}

\author{Aryan Esmailpour}
\orcid{0009-0000-3798-9578}
\affiliation{%
  \institution{Department of Computer Science University of Illinois Chicago}
  \country{USA}}

\author{Khoi Le}
\orcid{0009-0004-1953-8043}
\affiliation{%
  \institution{Department of Computer Science University of Illinois Chicago}
  \country{USA}}

\author{Stavros Sintos}
\orcid{0000-0002-2114-8886}
\affiliation{%
  \institution{Department of Computer Science University of Illinois Chicago}
  \country{USA}}






\input{abstract}
\maketitle

\input{intro}

\input{hardness}

\input{approx}

\input{exp2}

\input{relwork}
\input{conclusion}



\bibliographystyle{plainurl}
\balance
\bibliography{ref}

\newpage
\input{appendix}

\end{document}

%% file: abstract.tex
\begin{abstract}

Bipartite graphs are a fundamental representation for relational data arising in recommendation systems, market-basket analysis, social networks, and communication graphs. A key challenge in these settings is to store and transmit large bipartite graphs compactly while preserving exact structural and path information.
We study biclique-based representations of bipartite graphs $\G=(\V,\U,\E)$, where the edge set is encoded using a collection of complete bipartite subgraphs. We focus on the Weighted Biclique Covering problem, which minimizes the total number of vertices used across all bicliques, and introduce a generalized variant that additionally penalizes the number of bicliques, capturing practical overheads in storage, transmission, and model complexity.

While the weighted biclique covering problem is known to be $\mathsf{NP}$-Complete, we show that the generalized variant is also $\mathsf{NP}$-Complete.
Despite this hardness, many real-world bipartite graphs admit low-dimensional geometric embeddings or can be well approximated by them. Leveraging this observation, we develop the first approximation algorithms with provable guarantees for the (generalized) weighted biclique covering problem on geometric bipartite graphs.
Specifically, for $\delta$-disk bipartite graphs in low-dimensional $\ell_\infty^d$ spaces, we design a polynomial-time algorithm that achieves an $O(\log |\U| \cdot \log^d |\V|)$-approximation, combining ideas from greedy set cover, geometric range searching, and densest subgraph optimization. We also show how our algorithms extend to $\ell_\base^d$ metrics for any $\base\geq 1$.
Finally, we evaluate our algorithms on real-world bipartite datasets and show that they efficiently compute significantly smaller biclique-based representations than natural baselines, while scaling to large graphs.
\end{abstract}

%% file: intro.tex
\section{Introduction} \label{sec:intro}

Bipartite graphs provide a natural and expressive framework for modeling relational data arising in a wide range of applications, including social and biological networks, job matching, dating platforms, recommendation systems, and resource allocation~\cite{wellman1997structural}. As modern datasets continue to grow in scale, the corresponding graphs often contain massive numbers of vertices and edges, making both storage and algorithmic processing increasingly challenging. In particular, the cost of running classical graph algorithms can become prohibitive as graph size and density increase.

Graph representation and compression techniques have therefore emerged as essential tools for mitigating these challenges. A common goal of such methods is to reduce the size of the graph, typically by decreasing the number of explicitly stored edges, while preserving its essential structural properties. For many downstream tasks, however, it is crucial that the compressed representation preserves path information, namely reachability and shortest-path relationships between vertices. When this information is retained, the compressed graph can be used directly by path-sensitive algorithms
such as matching, flow, and shortest-path computations, leading to substantial improvements in efficiency for large and dense graphs.

\input{intro_figure}

\paragraph{Motivating scenario}
Consider a cloud computing environment consisting of multiple computing nodes connected by a network. A bipartite graph $\G(\V,\U,\E)$ is initially stored on a machine~$A$, but an expensive downstream graph algorithm, such as shortest paths, matching, or flow computation, must be executed on a computing node $B$. A natural approach is to transfer the graph from $A$ to $B$ and run the algorithm on the explicit edge representation. However, transferring the entire graph requires sending at least $|\E|$ objects over the network, and the subsequent running time of many classical graph algorithms also depends directly on the number of edges. For large and dense graphs, both the communication cost and the computational cost on $B$ can therefore become prohibitively expensive. This motivates the need for a compact representation that can be transmitted more efficiently and, at the same time, supports faster execution of downstream graph algorithms by replacing the original graph with a smaller path-preserving representation.

A variety of graph compression and representation techniques~\cite{bannach2024faster,buehrer2008scalable,dhulipala2016compressing,rossi2018graphzip, fan2021making} can be applied prior to transfer. However, most existing approaches incur substantial overhead when reconstructing complete path information, limiting their usefulness for downstream graph analytics. To address this limitation, prior work~\cite{FEDERcompress} and more recently~\cite{chavan2026speeding} proposed a biclique-based representation that preserves exact path information. Specifically, the edge set of the bipartite graph $\G$ is \emph{partitioned} into complete bipartite subgraphs (bicliques). For a biclique $\cover \subseteq \G$ with vertex sets $\V_\cover \subseteq \V$ and $\U_\cover \subseteq \U$, the $|\V_\cover|\cdot|\U_\cover|$ edges are replaced by a single auxiliary node $w_\cover$ connected to all vertices in $\V_\cover \cup \U_\cover$. This transformation replaces $|\V_\cover|\cdot|\U_\cover|$ edges by $|\V_\cover|+|\U_\cover|$ edges while preserving path information in the resulting graph. The size of biclique $\cover$ in the new representation is defined by $|\V_\cover|+|\U_\cover|$.

\paragraph{Limitations of biclique partitions}
While biclique partitions can yield compact graph representations, this line of work suffers from two important limitations. First, the requirement that the bicliques form a \emph{partition} of the edge set is overly restrictive. Under this requirement, each edge of $\G$ must belong to exactly one biclique, severely limiting the flexibility of the representation. Second, prior work does not explicitly study the optimization problem of constructing a representation of minimum size.

In contrast, we observe that requiring only a \emph{cover} of the edges by bicliques (allowing edges to appear in multiple bicliques) is sufficient to preserve path information using the same node-replacement construction. 
Furthermore, covering the edges by bicliques can lead to strictly smaller representations because every partition is a cover, however not every cover is a partition. 
Consequently, allowing coverings can only improve the quality of the resulting representation. Figure~\ref{fig:biclique_compression_tb_final} presents an example illustrating that a biclique cover yields a smaller representation.


\paragraph{Weighted biclique covering}
Motivated by these observations, we study the problem of covering the edges of a bipartite graph $\G$ with bicliques so as to minimize the size of the resulting representation, which is known  as the \emph{Weighted Biclique Cover} problem ($\pone$)~\cite{tarjan1975complexity}. Prior work on $\pone$ has primarily only focused on upper and lower bounds on the optimal solution size \cite{tarjan1975complexity,chung1983decomposition,davoodi2016edge,tuza1984covering}. 
The problem has also been studied in the role-mining literature under the name \emph{edge-RMP}~\cite{edge-RMP-hardness}, where it is shown to be $\mathsf{NP}$-complete. Existing approaches for $\pone$ (and equivalently edge-RMP)~\cite{role-mining-survey,newbery1989edge,edge-RMP-hardness} rely on heuristics and do not provide theoretical guarantees.

\paragraph{A generalized cost model}
While we provide the first approximation algorithms for the $\pone$ problem on certain classes of graphs, minimizing only the total size of the selected bicliques does not fully capture several realistic deployment scenarios. In many practical settings, an additional fixed cost is incurred for each selected biclique, independent of its size.
Returning to the cloud computing scenario, suppose that the edges of each biclique are transferred as a single packet. Each packet incurs a fixed overhead that is independent of its payload (e.g., a network header). Let $c$ denote this per-biclique overhead. To better capture such costs, we introduce a generalized objective that accounts for both the total size of the selected bicliques and their number. We formalize this extension as the \emph{Generalized Weighted Biclique Cover} problem ($\ptwo$), whose goal is to minimize the sum of the sizes of the selected bicliques plus $c$ times their number.
Although the $\ptwo$ problem has not been explicitly defined in prior work, it naturally interpolates between two known problems. When $c=0$, $\ptwo$ reduces to the $\pone$ problem. In contrast, for sufficiently large values of $c$, $\ptwo$ reduces to the \emph{bipartite dimension problem}, which seeks to minimize the number of bicliques needed to cover all edges of the input graph.
While inapproximability results and exact algorithms for specific graph classes are known for the bipartite dimension problem on specific graph classes~\cite{simon1990approximate,gruber2007inapproximability,chalermsook2014nearly,feige1998zero,khot2006better,zuckerman2006linear,amilhastre1998complexity,lubiw1991weighted,muller1996edge,chandran2017parameterized}, these results do not extend to $\ptwo$.



\paragraph{Geometric structure and our approach}
Despite the long history of $\pone$ and its relevance to data compression and distributed computation, no approximation algorithms with theoretical guarantees are known, either for $\pone$ or for its generalized variant $\ptwo$.
In this paper, we build on the observation that many graphs arising in practice are naturally \emph{disk graphs}~\cite{atminas2018forbidden, lahn2021faster,lokshtanov2024bipartizing} (also referred to as \emph{proximity graphs}~\cite{agarwal2024reporting}) or \emph{intersection graphs}~\cite{har2025approximating, har2022approximation, jana2023maximum} or can be well approximated as such. 
In these graphs, nodes are embedded as points in a geometric space, and edges connect pairs of nodes that are sufficiently close. 
For example, social networks and co-authorship graphs can often be embedded, with small error, into the space of user profiles with low intrinsic dimension~\cite{verbeek2014metric, yang2012defining}. 
More generally, for many networks across diverse domains (e.g., social, transportation, Internet), suitable embeddings exist that approximately preserve structural properties and shortest paths of the original graphs~\cite{verbeek2014metric, zhao2010orion, zhao2011efficient}. 
Although these graphs can be represented implicitly via their geometric embeddings, most classical graph algorithms require an explicit edge-based representation. Consequently, even geometrically embedded graphs must often be materialized and transmitted as conventional graphs.

In this paper, we leverage the geometric structure of bipartite disk and intersection graphs to design the first approximation algorithms with provable guarantees for both $\pone$ and $\ptwo$.

\vspace{-1em}
\subsection{Problem Definitions and our Contributions}
 Let $\G(\V, \U, \E)$ be a bipartite graph with two sets of vertices $\V$ and $\U$, with $n=|\V|+|\U|$, and a set of edges $\E\subseteq \V\times \U$, with $m=|\E|$.
 For a subgraph $G'\subseteq G$, let $\Vnodes(G')$ (or $\Unodes(G')$) be the vertices of $\V$ (or $\U$) that are included in $G'$. We also define $\nodes(G')=\Vnodes(G')\cup\Unodes(G')$, and $|\nodes(G')|=|\Vnodes(G')|+|\Unodes(G')|$.
 Let $\edges(G')$ be the set of edges in the subgraph $G'$.

\begin{definition}(Biclique)
    A subgraph $\cover\subseteq \G$ of a given bipartite graph $\G$ is a biclique (also called complete bipartite subgraph) if for every pair $v \in \Vnodes(\cover)$ and $u \in \Unodes(\cover)$, $(v,u)\in \edges(\cover)$.
\end{definition}

\begin{definition}(Biclique Edge Cover)
    A collection of subgraphs $\edgecover = \{\cover_1, \dots, \cover_\kappa\}$, is called a \textit{biclique edge covering} of a given graph $\G$, if for every $i=1,\ldots,\kappa$, $\cover_i$ is a biclique of $\G$, and for each $e \in \edges(G)$ there exists at least one $\cover_j \in \edgecover$, such that $e \in \edges(\cover_j)$.
\end{definition}

Next, we define two optimization problems studied in this paper.

\begin{definition}\label{def:wbc} (Weighted Biclique Covering $(\pone)$)
Given a bipartite graph $\G(\V, \U, \E)$ the goal is to construct a biclique edge cover $\edgecover = \{\cover_1, \dots, \cover_\kappa\}$ for $\G$, with minimum cost $\displaystyle \mu(\edgecover) =\!\!\! \sum_{\cover_i \in \edgecover}|\nodes(\cover_i)|$.
\end{definition}
\begin{definition}\label{def:gwbc}
    (Generalized Weighted Biclique Covering$(\ptwo)$) Given a bipartite graph $\G$ and a rational parameter $c \geq 0$ the goal is to construct a biclique edge cover $\edgecover = \{\cover_1, \dots, \cover_\kappa\}$ for $\G$, with minimum cost $\sigma(\edgecover) = c\cdot \kappa + \sum_{\cover_i \in \edgecover}|\nodes(\cover_i)|$. 
\end{definition}
In order to study the hardness we also define the decision version of the problems: Given a parameter $\tau>0$ the goal is to decide whether there exists a biclique edge cover $\edgecover$ such that $\mu(\edgecover)\leq \tau$ for the $\pone$ problem and $\sigma(\edgecover)\leq \tau$ for the $\ptwo$ problem. We refer to these problems as $\dpone$ and $\dptwo$, respectively.

For a real number $\beta>1$, an algorithm is a $\beta$-approximation algorithm for the $\ptwo$ (resp. $\pone$) problem if it returns a biclique edge cover $\edgecover$ such that $\sigma(\edgecover)\leq \beta\cdot \sigma(\edgecover^*)$ (resp. $\mu(\edgecover)\leq \beta\cdot \mu(\edgecover^*)$), where $\edgecover^*$ is the biclique edge cover with the minimum cost.

For any $\base\geq 1$, let $\dist_\base:\Re^d\times \Re^d\rightarrow \Re$ be the distance function in $\ell^d_\base$ metric, defined as
follows. For every pair of points 
$v,u\in \Re^d$, $\dist_\base(v,u)=\left(\sum_{j=\{1,\ldots,d\}}|v(j)-u(j)|^\base\right)^{1/\base}$ where $v(j)$ (resp. $u(j)$) is the $j$-th coordinate of point $v$ (resp. $u$). For $\base=\infty$ we have, $\dist(v,u)=\max_{j=\{1,\ldots, d\}}|v_j-u_j|$.


Next, we formally define the $\delta$-disk bipartite graph and the intersection bipartite graph. 
\begin{definition}($\delta$-disk bipartite graph)
A bipartite graph $\G(\V,\U,\E)$ is a $\delta$-disk bipartite graph in $\ell^d_\base$ metric if $\V$ and $\U$ are two finite sets of points in $\Re^d$, and for any pair of points $v\in \V$ and $u\in\U$ there exists an edge $(v,u)\in \E$ if and only if $\dist_\base(v,u)\leq \delta$.
\end{definition}

\newcommand{\ball}{\bigcirc}

A \emph{hyper-rectangle} $r$ in $\Re^d$ is defined as $r=\times_{j=1}^{d} I_j$, where each $I_j$ is an interval specifying the range along the $j$-th coordinate. A hyper-rectangle is called a \emph{box} if all its side lengths are equal, that is, if all intervals $I_j$ have the same length.

\begin{definition}(Intersection bipartite graph)
Graph $\G(\V\!,\!\U\!,\!\E\!,\! \rects_V\!,\! \rects_U\!)$ is an intersection bipartite graph in $\Re^d$ if every node $v\in \V$ ($u\in \U$) is associated with hyper-rectangle $r_v\in \rects_V$ ($r_u\in \rects_U$) in $\Re^d$,
and for any pair of points $v\in \V$ and $u\in \U$ there exists an edge $(v,u)\in \E$ if and only if $r_v\cap r_u\neq\emptyset$.
\end{definition}

\vspace{-0.5em}
\paragraph{Our Contributions}
Our results are summarized as follows.

$\bullet$ In Section~\ref{sec:hardnessnew}, we show that $\dptwo$ is $\mathsf{NP}$-complete.

$\bullet$ To the best of our knowledge, no approximation algorithms with provable guarantees have previously been studied for the $\pone$ problem, nor for its generalization $\ptwo$, either on general graphs or on restricted graph classes. 
In Section~\ref{sec:algs}, we present the first polynomial time $O(\log^{d+1} n)$-approximation algorithm for the $\ptwo$ (and $\pone)$ problem on $\delta$-disk graphs in the $\ell_{\infty}^d$ metric.

$\bullet$ In Section~\ref{sec:ext}, we extend our algorithmic framework to the more general class of intersection bipartite graphs. This extension yields also the first approximation algorithms for the $\pone$ and $\ptwo$ problems on interval graphs and $k$-nearest-neighbor bipartite graphs. Moreover, we discuss how our techniques for the $\pone$ and $\ptwo$ problems on $\delta$-disk graphs generalize to the $\ell_{\base}^d$ metric for any $\base \geq 1$. In particular, we present a practical $O(\lambda \log n)$-approximation algorithm that applies to any $\ell_\base^d$ metric, where $\lambda$ is the \emph{range cover size} parameter defined in Subsection~\ref{subsubsec:practical}.

$\bullet$ We implement our practical algorithm for $\delta$-disk graphs under both the $\ell_{\infty}^d$ and $\ell_2^d$ metrics, for both $\pone$ and $\ptwo$. In Section~\ref{sec:Newexp}, we evaluate our methods on real-world datasets and show that they consistently outperform all baselines, producing (generalized) weighted biclique covers that are significantly smaller (often by up to a factor of four) while achieving comparable running times and substantially lower memory usage.


%% file: intro_figure.tex
\usetikzlibrary{positioning, calc, shapes.symbols, shapes.geometric, arrows.meta, backgrounds, fit}

\begin{figure}[t]
    \centering
    \begin{tikzpicture}[
        scale=0.85, transform shape,
        vertex/.style={circle, draw, fill=white, thick, minimum size=14pt, inner sep=0pt, font=\footnotesize\bfseries},
        rededge/.style={line width=1.5pt, red!70, cap=round},
        blueedge/.style={line width=1.5pt, blue!70, cap=round},
        purpleedge/.style={line width=2.5pt, violet!80, cap=round},
        redgroup/.style={draw=red!60, fill=red!30, thick, dashed, rounded corners, inner sep=5pt, fill opacity=0.2},
        bluegroup/.style={draw=blue!60, fill=blue!30, thick, dashed, rounded corners, inner sep=5pt, fill opacity=0.2}
    ]

    \node[vertex] (u1) at (0.5, 2.5) {1};
    \node[vertex] (u2) at (3.0, 2.5) {2};
    \node[vertex] (u3) at (5.5, 2.5) {3};
    \node[font=\bfseries] at (-0.5, 2.5) {$\V$};

    \foreach \i/\l in {0/a, 1/b, 2/c, 3/d, 4/e, 5/f} {
        \node[vertex] (v\l) at (\i*1.1 + 0.25, 0) {\l};
    }
    \node[font=\bfseries] at (-0.5, 0) {$\U$};

    \begin{pgfonlayer}{background}
        \node[redgroup, fit=(u1) (u2) (va) (vd)] {};
        \node[bluegroup, fit=(u2) (u3) (vc) (vf)] {};
    \end{pgfonlayer}

    \foreach \target in {va, vb, vc, vd} \draw[rededge] (u1) -- (\target);
    \foreach \target in {va, vb} \draw[rededge] (u2) -- (\target);

    \foreach \target in {vc, vd, ve, vf} \draw[blueedge] (u3) -- (\target);
    \foreach \target in {ve, vf} \draw[blueedge] (u2) -- (\target);

    \draw[purpleedge] (u2) -- (vc);
    \draw[purpleedge] (u2) -- (vd);

    \end{tikzpicture}
    \caption{Example of graph compression using biclique edge covers. The red edges belong to the first biclique ($K_{2,4}$ defined by $\{1,2\} \times \{a,b,c,d\}$), and the blue edges belong to the second biclique ($K_{2,4}$ defined by $\{2,3\} \times \{c,d,e,f\}$). The purple edges belong to both bicliques. The total size of this cover is $6+6 = \mathbf{12}$. An optimal biclique partition (disallowing overlaps) is given by $\{1, 2\} \times \{a, b, c, d\}$, $\{2, 3\} \times \{e, f\}$, and $\{3\} \times \{c, d\}$, which results in a total size of $6 + 4 + 3 = \mathbf{13}$. 
    }
    \label{fig:biclique_compression_tb_final}
    \vspace{-1.5em}
\end{figure}

%% file: hardness.tex
\section{Hardness of GWBC}
\label{sec:hardnessnew}
In Appendix~\ref{sec:gwbc-hardold}, we show a simpler proof that holds for any constant $c$ reducing from the $\dpone$ problem. In this section we show a more involved proof that holds for any $c\geq 0$.
It is known that covering the edges of a bipartite graph with the
minimum number of bicliques (the bipartite dimension problem) is
NP-complete~\cite{orlin1977contentment}. Similarly, minimizing the
total size of the bicliques in a cover (the dWBC problem) is also
NP-complete~\cite{edge-RMP-hardness}. In this section, we show that
dGWBC is NP-complete for any rational value $c\geq 0$.

We use the reduction from Vertex Cover introduced
in~\cite{edge-RMP-hardness} for proving the hardness of dWBC.
Let $J=(W,F)$ be an instance of Vertex Cover, where
$W=\{w_1,\ldots,w_n\}$, $|F|=m$, together with an integer $K$.
The goal is to decide whether $J$ admits a vertex cover of size at
most $K$.

We construct a bipartite graph $\G=(\V,\U,\E)$ as follows.
For every vertex $w_i\in W$, we introduce two vertices
$x_i,y_i\in\V$ and one vertex $z_i\in\U$, and add the edges
$(x_i,z_i)$, and $(y_i,z_i)$.
Thus,
$N(z_i)=\{x_i,y_i\}$.

Next, for every edge $e=\{w_i,w_j\}\in F$, with $i<j$, we
introduce four new vertices
$p_e^1,p_e^2,p_e^3,p_e^4\in\V$
and five new vertices
$u_e^1,u_e^2,u_e^3,u_e^4,u_e^5\in\U$.
Their neighborhoods are defined as follows:
$N(u_e^1)=\{x_i,p_e^1,p_e^2\},
N(u_e^2)=\{y_j,p_e^2,p_e^3\},
N(u_e^3)=\{y_i,p_e^3,p_e^4\},
N(u_e^4)=\{x_j,p_e^4,p_e^1\},
N(u_e^5)=\{p_e^1,p_e^2,p_e^3,p_e^4\}$.
The vertices $p_e^1,\ldots,p_e^4$ are private to the gadget
corresponding to $e$, i.e., they do not occur in the gadget of any
other edge of $J$. This completes the construction of $\G$.

The above construction is exactly the bipartite graph used in the
reduction of~\cite{edge-RMP-hardness}. We next describe a useful
normal form for biclique covers of $\G$.

Consider any biclique edge cover $\edgecover$ of $\G$. Without loss
of generality, for every $w_i\in W$, the two edges incident to $z_i$
are covered in one of the following two ways:
\begin{enumerate}
    \item one biclique contains $z_i$ together with both $x_i$ and
    $y_i$; or
    \item two bicliques containing $z_i$ are used, one whose
    $\V$-side is $\{x_i\}$ and one whose $\V$-side is $\{y_i\}$.
\end{enumerate}

Indeed, since $N(z_i)=\{x_i,y_i\}$, any biclique containing $z_i$
has $\V$-side equal to $\{x_i\}$, $\{y_i\}$, or
$\{x_i,y_i\}$. Bicliques having the same singleton $\V$-side can
be merged by taking the union of their $\U$-sides. Such a merge
does not increase either the total biclique size or the number of
bicliques.

Furthermore, suppose that the cover contains the biclique
$\{x_i,y_i\}\times\{z_i\}$ and also some biclique whose $\V$-side
is $\{x_i\}$. We may add $z_i$ to the $\U$-side of the latter
biclique and replace $\{x_i,y_i\}\times\{z_i\}$ by
$\{y_i\}\times\{z_i\}$. This preserves all covered edges, does not
change the number of bicliques, and does not increase their total
size. The same argument applies if there exists a biclique whose
$\V$-side is $\{y_i\}$.

Thus, we may additionally assume that whenever the first case
occurs, there is no other biclique in $\edgecover$ whose $\V$-side
is exactly $\{x_i\}$ or exactly $\{y_i\}$. Since $c\geq0$, all of
these transformations do not increase
$\sigma(\edgecover)=\mu(\edgecover)+c|\edgecover|$.

We call $w_i$ \emph{selected} if the second case occurs.

The following observation gives the lower bounds on the number of
bicliques that will be needed for the additional
$c|\edgecover|$ term in the GWBC objective. All missing proofs in this section are shown in Appendix~\ref{appndx:NPhard}.

\begin{lemma}
\label{lem:gadget-biclique-count}
Suppose that $k'$ vertices of $J$ are selected. Then the bicliques
covering the vertex gadgets require at least $n+k'$ bicliques.
Moreover, every edge gadget requires at least four additional
bicliques. If neither endpoint of an edge $e\in F$ is selected,
then its gadget requires at least five additional bicliques.
These bounds can be summed over the edge gadgets.
\end{lemma}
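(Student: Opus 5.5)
The plan is to identify, for each part of the construction, a set of edges no two of which can lie in a common biclique (a \emph{fooling set}). Each such edge then needs its own biclique. Separately, I will argue that the bicliques charged to different gadgets are distinct, so the bounds add up.

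\emph{Vertex gadgets.} Any biclique containing $z_i$ has its $\V$-side inside $N(z_i)=\{x_i,y_i\}$. For $i\neq j$ we have $N(z_i)\cap N(z_j)=\emptyset$, so no biclique contains both $z_i$ and $z_j$. Hence the bicliques containing $z_i$ are disjoint from those containing $z_j$. By the normal form, an unselected $w_i$ contributes at least one biclique containing $z_i$, and a selected $w_i$ contributes two, with $\V$-sides $\{x_i\}$ and $\{y_i\}$. Summing over $i$ gives at least $n+k'$ bicliques that each contain some $z$-vertex.

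\emph{Edge gadgets, general bound.} Call a biclique \emph{$e$-private} if it contains a vertex $p_e^a$. Since $N(z_i)$ contains no $p$-vertex, an $e$-private biclique contains no $z_i$, so it is not among the bicliques counted above. Since the $p_e^a$ are private to the gadget of $e$, an $e$-private biclique is not $e'$-private for any $e'\neq e$. The counts can therefore be summed over gadgets. Inside a gadget, $N(u_e^a)\cap\{p_e^1,\dots,p_e^4\}=\{p_e^a,p_e^{a+1}\}$ (indices mod $4$) for $a\le 4$. Consider the four edges $(p_e^a,u_e^a)$, $a=1,\dots,4$. Two of them, for $a\neq b$, lie in a common biclique only if $u_e^a\sim p_e^b$ and $u_e^b\sim p_e^a$. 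That would force $b=a+1$ and $a=b+1$ mod $4$, which is impossible. So these four edges are pairwise incompatible, and each is covered by an $e$-private biclique, giving at least four.

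\emph{Edge gadgets, no endpoint selected.} This is the step I expect to be the main obstacle, because it must use the normal form. If $w_i$ is unselected, the cover contains $\{x_i,y_i\}\times\{z_i\}$ and no biclique whose $\V$-side is exactly $\{x_i\}$ or $\{y_i\}$; the same holds for $w_j$. Any biclique covering $(x_i,u_e^1)$ has $\V$-side inside $N(u_e^1)=\{x_i,p_e^1,p_e^2\}$, and that side is not exactly $\{x_i\}$. So it contains a $p$-vertex and is $e$-private. The same holds for $(y_j,u_e^2)$, $(y_i,u_e^3)$ and $(x_j,u_e^4)$.

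Now take the five edges $(x_i,u_e^1)$, $(y_j,u_e^2)$, $(y_i,u_e^3)$, $(x_j,u_e^4)$ and $(p_e^1,u_e^5)$. These are pairwise incompatible:
\begin{itemize}
\item each of $x_i,y_j,y_i,x_j$ is adjacent to exactly one of $u_e^1,\dots,u_e^4$, so the first four are pairwise incompatible;
\item $u_e^5$ has no neighbour among the $x$'s and $y$'s, so the fifth edge is incompatible with each of the first four.
\end{itemize}
All five edges are covered by $e$-private bicliques, which yields at least five for this gadget.

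Combining the three parts, the vertex-gadget bicliques and the $e$-private bicliques for the various $e$ form disjoint families. This gives the stated lower bounds and shows they can be summed.
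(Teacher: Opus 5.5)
Your proof is correct and follows essentially the same route as the paper. It uses the same fooling sets $\{(p_e^a,u_e^a)\}_{a=1}^{4}$ and $\{(x_i,u_e^1),(y_j,u_e^2),(y_i,u_e^3),(x_j,u_e^4),(p_e^1,u_e^5)\}$, the same appeal to the normal form to force the bicliques covering $(x_i,u_e^1)$, $(y_j,u_e^2)$, $(y_i,u_e^3)$, $(x_j,u_e^4)$ to contain a private vertex, and the same private-vertex disjointness argument to make the counts additive; your explicit mod-$4$ check for the edges $(p_e^a,u_e^a)$ just spells out the paper's one-line justification.
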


We also use the corresponding lower bounds on the total biclique
size established by the reverse-direction analysis of the reduction
in~\cite{edge-RMP-hardness}. In our terminology, if $k'$ vertices are
selected, the vertex gadgets contribute at least
$3(n - k') + 4k' = 3n+k'$
to the total biclique size. In addition, every edge gadget
contributes at least $18$, and an edge gadget whose two endpoints
are both unselected contributes at least $20$. These are incremental
per-gadget bounds beyond the contribution of the vertex gadgets.

For the dGWBC instance, we set the target value to
\begin{align*}
\tau_c
&=3n+K+18m+c(n+K+4m) \notag\\
&=3n+18m+c(n+4m)+(1+c)K.
\end{align*}

We next establish the correctness of the reduction.

\begin{lemma}
\label{lem:gwbc-hardness}
The graph $J$ has a vertex cover of size at most $K$ if and only if
the constructed graph $\G$ admits a biclique edge cover
$\edgecover$ satisfying
$\sigma(\edgecover)\leq\tau_c$.
\end{lemma}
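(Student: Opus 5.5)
The proof splits into the two directions of the equivalence. In both, the size term $\mu(\edgecover)$ is accounted for by the analysis of~\cite{edge-RMP-hardness}, and the count term $|\edgecover|$ by Lemma~\ref{lem:gadget-biclique-count}.

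\emph{Forward direction.} Let $C\subseteq W$ be a vertex cover with $|C|\le K$; by adding vertices if necessary we may assume $|C|=K$, since the target only grows with $K$ in the intended accounting. We then build the cover from the forward construction of~\cite{edge-RMP-hardness}, making the vertices in $C$ selected.
\begin{itemize}
\item For each $w_i\notin C$ use the single biclique $\{x_i,y_i\}\times\{z_i\}$, of size $3$.
\item For each $w_i\in C$ use two bicliques, one with $\V$-side $\{x_i\}$ and one with $\V$-side $\{y_i\}$. Each initially has $\U$-side $\{z_i\}$, for a total size of $4$.
\item For each edge $e=\{w_i,w_j\}$, at least one endpoint lies in $C$; say $w_i\in C$, the case $w_j\in C$ being symmetric. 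We append $u_e^1$ to the $\U$-side of the biclique with $\V$-side $\{x_i\}$ and $u_e^3$ to the one with $\V$-side $\{y_i\}$, which covers the edges $(x_i,u_e^1)$ and $(y_i,u_e^3)$ at cost $2$.
\item The remaining gadget edges are covered with exactly four new bicliques: $\{x_j,p_e^4,p_e^1\}\times\{u_e^4\}$, $\{y_j,p_e^2,p_e^3\}\times\{u_e^2\}$, $\{p_e^1,p_e^2\}\times\{u_e^1,u_e^5\}$ and $\{p_e^3,p_e^4\}\times\{u_e^3,u_e^5\}$. Their sizes are $4+4+4+4=16$, so together with the appended $2$ the gadget contributes exactly $18$.
\end{itemize}
The total size is $3n+K+18m$, and the number of bicliques is $n+K+4m$. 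Hence $\sigma(\edgecover)=\tau_c$. I will verify edge coverage of each gadget explicitly. In particular, all of $N(u_e^5)$ is covered by the two $2\times2$ bicliques.

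\emph{Reverse direction.} Let $\edgecover$ satisfy $\sigma(\edgecover)\le\tau_c$, and put it in the normal form described before Lemma~\ref{lem:gadget-biclique-count}; this does not increase $\sigma$. Let $k'$ be the number of selected vertices and $t$ the number of edges of $J$ with both endpoints unselected. The size bounds from~\cite{edge-RMP-hardness} give $\mu(\edgecover)\ge 3n+k'+18m+2t$, and Lemma~\ref{lem:gadget-biclique-count} gives $|\edgecover|\ge n+k'+4m+t$. Adding them,
\[
\sigma(\edgecover)\ \ge\ 3n+18m+c(n+4m)+(1+c)k'+(2+c)t .
\]
Comparing with $\tau_c$ yields $(1+c)k'+(2+c)t\le(1+c)K$. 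For each uncovered edge, add one of its endpoints to the selected set. The result is a vertex cover of size at most $k'+t$. Since $2+c\ge1+c>0$, we have $(1+c)(k'+t)\le(1+c)k'+(2+c)t\le(1+c)K$, and dividing by $1+c$ gives $k'+t\le K$.

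\emph{Main obstacle.} The delicate point is that the size and count lower bounds must hold \emph{simultaneously} for the same cover, and must be additive across gadgets without double counting. Bicliques can be shared between a vertex gadget and several edge gadgets, for example the $\{x_i\}$-bicliques absorbing the $u_e^1$ vertices. I must therefore attribute each biclique, and each unit of size, to a unique gadget.
\begin{itemize}
\item For sizes, the incremental per-gadget accounting of~\cite{edge-RMP-hardness} already handles this: $u$-vertices are private to their gadget.
\item For counts, I would rely on Lemma~\ref{lem:gadget-biclique-count}, which states that these bounds sum.
\end{itemize}
I will check that in the normal form no biclique containing some $p_e^\ell$ can also serve a different gadget. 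This holds because the $p$-vertices are private to their gadget, and the edges at $u_e^5$ force dedicated bicliques.
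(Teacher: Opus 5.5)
Your proposal is correct and follows the paper's proof step for step. The forward direction is the same: you append $u_e^1,u_e^3$ to the singleton-$\V$-side bicliques of the covering endpoint and add the same four bicliques, for exactly $18$ per gadget. The reverse argument is also the same: you combine the cited size bounds with Lemma~\ref{lem:gadget-biclique-count} to get $(1+c)k'+(2+c)t\le(1+c)K$, hence $k'+t\le K$. One small point: padding $C$ to size exactly $K$ is unnecessary, and impossible if $K>n$. The paper simply keeps $|C|=k\le K$ and concludes $\sigma(\edgecover)=3n+18m+c(n+4m)+(1+c)k\le\tau_c$.
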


Finally, we conclude to the next theorem.

\begin{theorem}
\label{thm:gwbc-hardness}
For any rational value $c\geq0$, the dGWBC problem is NP-complete.
\end{theorem}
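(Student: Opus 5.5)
We prove Theorem~\ref{thm:gwbc-hardness} in the usual two parts: membership in $\mathsf{NP}$, and $\mathsf{NP}$-hardness obtained from Lemma~\ref{lem:gwbc-hardness}.

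\textbf{Membership in $\mathsf{NP}$.} A certificate is a collection of bicliques $\edgecover=\{\cover_1,\ldots,\cover_\kappa\}$. We may assume $\kappa\le m$: if some biclique covers no edge that is uncovered by the others, it can be discarded, and this does not increase $\sigma$ because $c\ge 0$ and sizes are nonnegative. Each biclique is stored as its two vertex sets, so the certificate has polynomial size. The verifier checks three things in polynomial time:
\begin{itemize}
\item every pair in $\Vnodes(\cover_i)\times\Unodes(\cover_i)$ is an edge of $\E$;
\item every edge of $\E$ lies in some $\cover_i$;
\item $c\kappa+\sum_i|\nodes(\cover_i)|\le\tau$.
\end{itemize}
The last comparison uses rational arithmetic on polynomial-length numbers.

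\textbf{$\mathsf{NP}$-hardness.} Fix a rational $c\ge0$. We reduce from Vertex Cover, which is $\mathsf{NP}$-complete. Given an instance $(J,K)$, build $\G$ as described in this section. It has $3n+9m$ vertices and $2n+16m$ edges. Set the threshold to $\tau_c=3n+18m+c(n+4m)+(1+c)K$. Because $c$ is a fixed rational, $\tau_c$ has polynomial encoding length, so the mapping runs in polynomial time. By Lemma~\ref{lem:gwbc-hardness}, $J$ has a vertex cover of size at most $K$ if and only if $\G$ has a biclique edge cover with $\sigma(\edgecover)\le\tau_c$. This is a valid many-one reduction, and together with membership it gives $\mathsf{NP}$-completeness for every rational $c\ge0$.

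The real content is Lemma~\ref{lem:gwbc-hardness}, which we may assume. Had it needed proof, the plan would be as follows.

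\emph{Forward direction.} From a vertex cover $C$ with $|C|=K$, use the construction of~\cite{edge-RMP-hardness} to get a cover of size $3n+K+18m$ that uses exactly $n+K+4m$ bicliques. This works because in each edge gadget, a selected endpoint lets four bicliques absorb the $x_i$ or $y_j$ edges.

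\emph{Reverse direction.} Normalize the cover and let $k'$ be the number of selected vertices and $t$ the number of uncovered edges of $J$. Adding the size bounds to $c$ times the count bounds of Lemma~\ref{lem:gadget-biclique-count} gives
\[
\sigma\ge 3n+18m+c(n+4m)+(1+c)(k'+t).
\]
Hence $k'+t\le K$. Adding one endpoint of each uncovered edge to the selected vertices then yields a vertex cover of size at most $K$.

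The main obstacle is this reverse step. It requires the size and count lower bounds to hold simultaneously for the same normalized cover, so that they can be added. This is why the normalization must not increase either quantity.
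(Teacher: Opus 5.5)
Your proposal is correct and matches the paper's proof essentially step for step. Membership in NP uses a verifier on a certificate with at most $|\E|$ bicliques after redundant ones are discarded, and NP-hardness uses the Vertex Cover reduction with threshold $\tau_c$ together with an appeal to Lemma~\ref{lem:gwbc-hardness}; your optional sketch of that lemma's reverse direction mirrors the paper's argument as well, including summing the size and count lower bounds on the same normalized cover and using $(2+c)t\geq(1+c)t$.
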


%% file: approx.tex
\newcommand{\pbg}{$\delta$-disk bipartite graph}
\vspace{-1em}
\section{Approximation Algorithms}
\label{sec:algs}
\newcommand{\rangetree}{\mathcal{T}}
\newcommand{\canonical}{\mathcal{C}}
We design an approximation algorithm for the $\ptwo$ problem for any $c\geq 0$ in \emph{$\delta$-disk bipartite graphs} in $\ell^d_\infty$ metric, where $d=O(1)$.
Since our algorithm also works for $c=0$, it is also valid for the $\pone$ problem.
In the $\ell^d_\infty$ metric a ball with center a point $u\in \Re^d$ and radius $\delta$ is a box $\rho_u$ in $\Re^d$ with center $u$ and equal size side lengths $2\delta$. It is straightforward that if $\G$ is a \pbg\ then for every $v\in \V$ and $u\in \U$, $(v,u)\in \E$ if and only if $v\in \rho_u$.
Throughout the section, we consider that $n_V=|\V|$, $n_U=|\U|$, $n=n_V+n_U$, $m=|\E|$, and without loss of generality $n_V\leq  n_U$.

\paragraph{High level ideas}
Before we proceed with the technical details we give the high level ideas of our algorithm. We first focus on $\pone$ and then extend to $\ptwo$.
We start by reducing $\pone$ to an instance of the weighted set cover problem.
Each edge corresponds to an element, and each biclique to a set including its edges, with weight equal to the number of nodes in the biclique.
We apply the classical greedy $O(\log n)=O(\log n_U)$-approximation algorithm for weighted set cover, which also yields an $O(\log n_U)$-approximation for $\pone$.
Let $\Delta$ denote the set of uncovered edges at the beginning of an iteration of the greedy algorithm.
The algorithm selects the biclique $\cover^*$ that minimizes $\frac{|\nodes(\cover^*)|}{|\edges(\cover^*)\cap\Delta|} $.
However, a fundamental obstacle arises: the number of bicliques in a graph may be exponential in the number of nodes.
As a result, a naive implementation of the greedy algorithm is computationally expensive.

A key observation is that the greedy algorithm does not require selecting the optimal biclique in each iteration.
It suffices to find a biclique $\cover$ such that
$\frac{|\nodes(\cover)|}{|\edges(\cover)\cap\Delta|} 
\leq
\hat{\gamma}\cdot \frac{|\nodes(\cover^*)|}{|\edges(\cover^*)\cap\Delta|}$.
In this case, the greedy algorithm yields an $O(\hat{\gamma}\cdot \log n_U)$-approximation.

A natural first attempt is to compute a \emph{densest subgraph} of $G$ restricted to uncovered edges.
The densest subgraph problem finds a subgraph that maximizes the ratio of the number of edges over the number of nodes in polynomial time~\cite{goldberg1984finding}. Although this objective matches the greedy criterion for our algorithm, the resulting subgraph is not necessarily a biclique.
To address this, we exploit the geometric structure of $\delta$-disk graphs using a range tree~\cite{de2008computational} built over $\V$. 
A range tree creates $O(n_V\log^{d-1}n_V)$ hyper-rectangles $\canonical$, such that for any hyper-rectangle $\rho$ in $\Re^d$ there are always $O(\log^d n_V)$ 
pairwise disjoint hyper-rectangles in $\canonical$ that contains exactly the points in $\rho\cap \V$. 

Using this property we show that there should exist a hyper-rectangle $x\in \canonical$ and a subgraph $G_{x}(V_x,U_x,E_x)$ such that i) $V_x\subseteq \V$, ii) $U_x$ is the set of points in $\U$ within distance $\delta$ from all points in $V_x$, iii) $E_x$ is the set of uncovered edges between $V_x$ and $U_x$, and iv)
the densest subgraph of $G_x$ contains at least $\frac{1}{O(\log^d n_V)}\cdot|\edges(\cover^*)\cap \Delta|$ edges. Using these observations, we show that the procedure of iterating over all $x\in \canonical$ and computing the densest subgraph in $G_x$ will lead to an $O(\log(n_U)\log^{d}(n_V))$-approximation algorithm for the $\pone$ problem in polynomial time.

For the $\ptwo$ problem, we proceed in a similar manner. As before, we reduce $\ptwo$ to an instance of the weighted set cover problem; however, in this case, the weight of each biclique (set) is defined as the number of its vertices plus the parameter $c$. 
Due to the presence of the parameter $c$, the classical densest subgraph formulation is no longer appropriate. Instead, we introduce the \emph{$c$-densest subgraph} problem, in which, given a graph, the objective is to find a subgraph that minimizes the ratio of the number of vertices plus $c$ to the number of edges. We show how known results in graph theory yield efficient exact or approximate solutions for the $c$-densest subgraph problem.


\vspace{-1em}
\subsection{Expensive approx. algorithm for $\ptwo$}
\label{subsec:expApprox}
Given a bipartite graph $\G(\V,\U,\E)$, we map $\ptwo$ to the well-known \emph{weighted set cover problem}. 
In the weighted set cover problem we are given a universe $\mathcal{A}$ of $N$ items and a family of subsets $\boldsymbol{T}=\{T_1,\ldots, T_{M}\}$ of $\mathcal{A}$ such that each subset $T_i$ is associated with a weight $w_i\geq 0$. The goal is to select a family of sets $T'\subseteq \boldsymbol{T}$ such that every element $a\in \mathcal{A}$ is covered by at least one set in $T'$ and $\sum_{T_i\in T'}w_i$ is minimized. It is known that a greedy algorithm that in each iteration selects the set $T_i$ with the smallest ratio $\frac{w_i}{b_i}$, where $b_i$ is the number of uncovered items in $\mathcal{A}$ contained in $T_i$, is an $O(\log N)$-approximation for the weighted set cover problem~\cite{chvatal1979greedy}.

We can straightforwardly map our problem to an instance of the weighted set cover problem. The universe is defined as $\mathcal{A}=\E$, i.e., all edges we must cover. For every biclique $\cover$ in $\G$ we construct a set $T_{\cover}=\edges(\cover)$ with weight $w_{\cover}=c+|\nodes(\cover)|$.
For every set cover in the set system $(\mathcal{A},\boldsymbol{T})$ of weight $W$ there exists a biclique cover $\edgecover$ in $\G$ with cost $\sigma(\edgecover)=W$, and vice versa.
Thus, the greedy algorithm that chooses in each iteration the biclique $\cover$ that minimizes $\frac{w_{\cover}}{b_{\cover}}$, where $b_{\cover}$ is the number of uncovered edges in $\edges(\cover)$, returns an $O(\log n_U)$-approximation for the $\ptwo$ problem. Even more generally, any algorithm that chooses in each iteration a biclique $\cover$ such that $\frac{w_{\cover}}{b_{\cover}}\leq \hat{\gamma}\cdot \min_{\cover'}\frac{w_{\cover'}}{b_{S'}}$, returns an $O(\hat{\gamma}\cdot \log n_U)$-approximation for the $\ptwo$ problem.
In the next section, for \pbg s we show how we can compute in polynomial time an $O(\log^d n_V)$-approximation of the biclique with the minimum ratio of weight over number of newly covered elements.


\vspace{-1em}
\subsection{Fast approx. algorithm for $\ptwo$}
\label{subsec:fastapprox}

Before describing our algorithm, we first introduce two technical tools that are central to our approach.

\subsubsection{Range tree~\cite{de2008computational}}
A range tree is a tree-based geometric data structure that is used for range aggregation queries in $\Re^d$, where $d=O(1)$. While a range tree has various interesting properties, we only mention the main property that we need for our algorithm. Let $P$ be a set of points in $\Re^d$ and let $\rangetree$ be the range tree constructed over $P$. The range tree $\rangetree$ constructs a set $\canonical$ of $O(|P|\cdot \log^{d-1}(|P|))$ canonical axis-aligned hyper-rectangles in $\Re^d$. Let $\rho$ be any hyper-rectangle in $\Re^d$. We know that there exists a subset of $O(\log^d |P|)$
pairwise disjoint hyper-rectangles $C(\rho)\subseteq \canonical$ such that $\bigcup_{x\in C(\rho)}(P\cap x)=P\cap \rho$. 
For every $x\in \canonical$ let $\bar{x}$ be the minimum volume axis-aligned hyper-rectangle such that $x\cap P=\bar{x}\cap P$. Without loss of generality, we assume every $x$ in $\canonical$ is minimal, i.e., $x=\bar{x}$.


\subsubsection{$c$-densest subgraph}
Given a bipartite graph $\G(\V,\U,
\E)$, the problem of computing the densest subgraph asks to compute a subgraph $X\subseteq \G$ such that $\frac{|\edges(X)|}{|\nodes(X)|}$ is maximized. 
Two common techniques to solve the densest subgraph problem are the max-flow based algorithm~\cite{goldberg1984finding}, and an LP-based algorithm~\cite{charikar2000greedy}.
There exists also a greedy $2$-approximation algorithm that runs in time linear to the size of the input graph~\cite{charikar2000greedy}.
For the $\ptwo$ problem we are interested in a variation of the densest subgraph problem that we call \emph{$c$-densest subgraph problem}. 
Given a bipartite graph $\G(\V,\U,\E)$ and a rational number $c\geq 0$, the goal is 
compute a subgraph $X\subseteq \G$ such that $\frac{|\edges(X)|}{|\nodes(X)|+c}$ is maximized.

We defer to 
Appendix~\ref{appndx:cdensest} 
the details of how the known exact max-flow–based algorithm for the standard densest subgraph problem can be extended to obtain an exact polynomial-time algorithm for the $c$-densest subgraph problem. For our purposes even a $O(1)$-approximation algorithm for the $c$-densest subgraph problem is sufficient.
As shown in Appendix~\ref{appndx:cdensest}, Chekuri et al.~\cite{chekuri2022densest} gives an $O(1)$-approximation for the $c$-densest subgraph problem in linear time.
Putting everything together, we get the next result.

\renewcommand{\deg}{\mathsf{deg}}
\vspace{-0.5em}
\begin{lemma}
\label{lem:denssub}
    Given a bipartite graph $\G(\V,\U,\E)$ and a rational number $c\geq 0$, there exists an exact algorithm for the $c$-densest subgraph problem that runs in $O((|\V|+|\U|)\cdot |\E|\cdot \log(\max\{|\V|,|\U|,c\}))$ time. There also exists an $O(1)$-approximation algorithm with running time $O(|\V|+|\U|+|\E|)$.
\end{lemma}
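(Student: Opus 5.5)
The plan has two parts: an exact algorithm obtained by adapting Goldberg's parametric max-flow method, and a linear-time approximation obtained from Chekuri et al.

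\textbf{Exact algorithm.} We binary search over the candidate density $\lambda$ and, for each $\lambda$, decide whether some nonempty subgraph $X$ satisfies $|\edges(X)|-\lambda(|\nodes(X)|+c)>0$.

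Induced subgraphs are optimal, so we may take $X$ to be induced by its vertex set $D$. We then minimize $f_\lambda(D)=\lambda|D|-|\edges(D)|$ over nonempty $D$ and compare the minimum with $-\lambda c$. Since $c$ is only an additive constant independent of $D$, the same min-cut network as in \cite{goldberg1984finding} works. Concretely, the network has a source $s$, a sink $t$, and a node per vertex of $\G$. We add an arc $s\to v$ of capacity $m$, an arc $v\to t$ of capacity $m+2\lambda-\deg(v)$, and arcs of capacity $1$ in both directions for each edge of $\G$.

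The capacity of the cut $(\{s\}\cup D,\cdot)$ is then $mn+2(\lambda|D|-|\edges(D)|)$. Hence some nonempty $D$ has $f_\lambda(D)<-\lambda c$ iff the minimum cut is below $mn-2\lambda c$. There is one subtlety here. The empty set gives cut value $mn$, which is already compared against the smaller threshold $mn-2\lambda c$. So for $c>0$ the empty-set issue disappears, and for $c=0$ we use Goldberg's standard strict-inequality argument.

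Next we bound the number of binary-search steps. Densities have the form $a/(b+c)$ with integers $a\le m$ and $b\le n$. Write $c=p/q$; such densities can be scaled to $qa/(qb+p)$. Two distinct values of this form differ by at least $1/\mathrm{poly}(n,c)$, so $O(\log\max\{|\V|,|\U|,c\})$ iterations suffice, after which the surviving cut yields the optimal $X$. Each iteration is a max-flow computation on $O(n)$ nodes and $O(m)$ arcs, costing $O(nm)$ with Orlin's algorithm. This gives the stated $O((|\V|+|\U|)|\E|\log(\max\{|\V|,|\U|,c\}))$ bound.

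\textbf{Approximation algorithm.} Here we reduce to the result of Chekuri et al.~\cite{chekuri2022densest}. Their framework handles densest-subgraph variants whose objective is $|\edges(X)|/g(\nodes(X))$ with $g$ a monotone submodular (in particular, affine-plus-constant) function, via greedy peeling with an $O(1)$ guarantee in near-linear time. The function $g(D)=|D|+c$ for nonempty $D$ is monotone submodular, so the framework applies.

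If we want a self-contained linear-time argument instead, we can run Charikar's peeling and return the best prefix under the $c$-objective. The analysis goes as follows. Let $X^*$ be optimal with density $\lambda^*=|\edges(X^*)|/(|\nodes(X^*)|+c)$. Every vertex of $X^*$ has degree at least $\lambda^*$ inside $X^*$, because removing a vertex of smaller degree would increase the ratio; this step is valid since the denominator stays at least $c+1$ when $X^*$ has more than one vertex. Consider the moment peeling first removes a vertex of $X^*$. The current graph then has minimum degree at least $\lambda^*$, so it has at least $\lambda^*|D|/2$ edges, where $D$ is its vertex set. It also contains $X^*$, so it has at least $\lambda^*(|X^*|+c)$ edges. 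Averaging these two lower bounds gives $|\edges|\ge(\lambda^*/4)(|D|+c)$, an $O(1)$-approximation.

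The main obstacle is this last $c$-dependent step. The usual argument only yields $|\edges|/|D|$, and $c$ has to be absorbed by using $X^*\subseteq D$, as done above. Degenerate single-vertex or empty optima also need checking; they have zero edges and are trivial.
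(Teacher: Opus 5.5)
Your proposal is correct, and it differs from the paper's proof mainly in how the approximation bound is obtained.

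\textbf{Exact algorithm.} You use the other standard form of Goldberg's network: source arcs of capacity $m$ and sink arcs of capacity $m+2\lambda-\deg(v)$, so the cut of $\{s\}\cup D$ is $mn+2(\lambda|D|-|\edges(D)|)$. The paper uses source arcs $\deg(v)$ and sink arcs $2\gamma$, compares the cut with $2|\E|-2\gamma c$, and justifies the binary search by monotonicity of its function $f(\gamma)$. The two networks are interchangeable. Your observation that, for $c>0$, the shifted threshold automatically excludes $D=\emptyset$ handles the empty set more cleanly than the paper's separate ``$X\neq\emptyset$'' check.

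\textbf{Approximation algorithm.} The paper imports the $1/3$ peeling guarantee of Chekuri et al.\ for $|\edges(X)|/g(|\nodes(X)|)$ with concave $g$. It argues that the only use of $g(0)=0$ there is that $t/g(t)$ is nondecreasing, which still holds for $g(t)=t+c$. You instead prove directly that the best peeling prefix is a $1/4$-approximation. You use two facts: every vertex of $X^*$ has degree at least $\lambda^*$ in $X^*$, and, when the first vertex of $X^*$ is peeled, both $|\edges(D)|\ge\lambda^*|D|/2$ and $|\edges(D)|\ge\lambda^* c$ hold.

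Your route is self-contained. It does not require re-checking another paper's analysis under a modified hypothesis, which also makes your loose description of the Chekuri et al.\ framework irrelevant. The paper's route gives the better constant. You can recover $1/3$ yourself by taking the maximum of your two bounds instead of their average, since $\max\{x/2,\,c\}\ge (x+c)/3$ for all $x,c\ge 0$.

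\textbf{One point to tighten: the binary-search count.} The claim that distinct candidate densities differ by at least $1/\mathrm{poly}(n,c)$ holds for integer $c$, but not for an arbitrary rational $c=p/q$. In general the gap is only guaranteed to be $1/(qn+p)^2$. Your iteration count therefore picks up a $\log q$ term that is not bounded by $\log\max\{|\V|,|\U|,c\}$; take $c=2^{-k}$ with large $k$, for example.

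The fix is to binary search over the explicitly sorted list of the $O(mn)$ ratios $a/(b+c)$ with $0\le a\le m$ and $1\le b\le n$. This is consistent with the paper's remark that $\gamma$ takes only $O(|\E|(|\V|+|\U|))$ values. Sorting costs $O(mn\log n)$, and the search then needs only $O(\log n)$ max-flow calls. The total stays within the stated bound regardless of the denominator of $c$.
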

Let $\mathsf{DenSub}(\G,c)$ denote the procedure that returns an $O(1)$-approximation of the $c$-densest subgraph problem on a bipartite graph $\G$.

\begin{algorithm}[t]
\caption{\ours}
\label{alg:gwbc-rangetree}

\KwIn{A $\delta$-disk bipartite graph $\G(\V,\U,\E)$ in $\ell_\infty^d$, and $c\geq 0$}
\KwOut{A biclique edge cover $\edgecover$ of $\G$}

$\edgecover\gets \emptyset,\quad i=1$\;
Initialize hash table $B$ with $B(v,u)\gets 1$ for every $(v,u)\in \E$\;

Build $\rangetree$ on $\V$ and let $\canonical$ be its canonical hyper-rectangles\;

\While{there exists $(v,u)\in \E$ with $B(v,u)=1$}{
    $\bestRatio \gets +\infty,\quad \bestS \gets \emptyset$\;

    \ForEach{$x\in \canonical$}{
        $V_x \gets \V\cap x,\quad U_x \gets \{u\in \U \mid x\subseteq \rho_u\}$\;
        $E_x \gets \{(v,u)\in \E \mid v\in V_x,\ u\in U_x,\ B(v,u)=1\}$\;


        Construct $G_x(V_x,U_x,E_x)$\;
        $S^{(x)} \gets \mathsf{DenSub}(G_x,c),\quad b_x \gets |E(S^{(x)})|$\;

        \lIf{$b_x=0$}{\textbf{continue}}

        $\ratio \gets \frac{|\nodes(S^{(x)})|+c}{b_x}$\;
        \lIf{$\ratio < \bestRatio$}{$\bestRatio \gets \ratio, \quad \bestS \gets S^{(x)}$}
    }

    Define biclique $S_i$ induced by $\nodes(\bestS)$\;
    \ForEach{$v\in \Vnodes(S_i),\, u\in \Unodes(S_i)$}{
        $B(v,u)\gets 0$\;
    }

    $\edgecover\gets \edgecover\cup\{S_i\}$\;
    $i\gets i+1$\;
}

\Return{$\edgecover$}\;
\end{algorithm}
\vspace{-0.5em}
\subsubsection{Approximation algorithm}
\label{subsec:mainalg}
As we mentioned in Subsection~\ref{subsec:expApprox} we will implement an approximate version of the greedy algorithm, where in each iteration $i$  it computes a biclique $S_i$ such that $\frac{w_{S_i}}{b_{S_i}}\leq O(\log^d n_V)\min_{S'_i}\frac{w_{S'_i}}{b_{S'_i}}$, where $w_{S_i}=|\nodes(S_i)|+c$ and $b_{S_i}$ is the number of edges in $\edges(S_i)$ that have not been covered by bicliques already selected in the previous iterations.

We initialize a hash table $B$ such that for each edge $(v,u)\in\E$, we set $B(v,u)=1$ if $(v,u)$ is not yet covered by the current biclique cover $\edgecover$, and $B(v,u)=0$ otherwise.
We also construct a range tree $\rangetree$ over  $\V$ and let $\canonical$ be the set of its canonical hyper-rectangles.

Next, consider any iteration of the greedy algorithm, where $\edgecover$ contains the currently selected bicliques and hash-table $B$ is up to date.
For example, consider the beginning of the $i$-th iteration where $\edgecover=\{S_1,\ldots, S_{i-1}\}$.
We go through every canonical hyper-rectangle $x\in \canonical$. We construct the bipartite graph $G_x(V_x,U_x,E_x)$ where $V_x=\V\cap x$, $U_x=\{u\in \U\mid x\subseteq \rho_u\}$, and $E_x=\{(v,u)\in \E\mid v\in V_x, u\in U_x, B(v,u)=1\}$. Recall that $\rho_u$ is a box in $\Re^d$ with center $u$ and equal side lengths $2\delta$. In other words, we create the bipartite graph that contains all vertices from $\V$ in $x$, all vertices from $\U$ whose boxes completely cover $x$ and all uncovered edges between these vertices.
We call $\mathsf{DenSub}(G_x,c)$ and let $S^{(x)}$ be the returned subgraph of $G_x$. 
We set $x_i=\argmin_{x\in \canonical}\frac{|\nodes(S^{(x)})|+c}{b_{S^{(x)}}}$, and let $S_i$ be the biclique
such that $\nodes(S_i)=\nodes(S^{(x_i)})$ and $\edges(S_i)$ consists of all edges in $\E$ between $\Vnodes(S^{(x_i)})$ and $\Unodes(S^{(x_i)})$.
As we show in the correctness proof, $S_i$ is a valid biclique of $\G$.
For every $v\in\Vnodes(S_i)$ and $u\in\Unodes(S_i)$ we set $B(v,u)=0$. We add $S_i$ in $\edgecover$ and continue with the next iteration of the greedy algorithm. The greedy algorithm terminates when all edges are covered, i.e., $B(v',u')=0$ for every edge $(v',u')\in\E$.
We present the pseudocode of our algorithm in Algorithm~\ref{alg:gwbc-rangetree}.


\paragraph{Correctness and analysis}
Recall that without loss of generality we consider that $n_V\leq n_U$.
\begin{lemma}
    \label{lem:corr}
    $\sigma(\edgecover)\leq O(\log(n_U)\log^{d}(n_V))\sigma(\edgecover^*)$, where $\edgecover^*$ is the biclique edge cover that minimizes $\sigma(\edgecover^*)$.
\end{lemma}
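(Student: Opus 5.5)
The plan is to reduce the claim to the approximate-greedy guarantee for weighted set cover stated in Subsection~\ref{subsec:expApprox}. It suffices to show two things about every iteration $i$ of Algorithm~\ref{alg:gwbc-rangetree}. First, $S_i$ is a valid biclique of $\G$. Second, it satisfies $\frac{w_{S_i}}{b_{S_i}}\leq \hat{\gamma}\cdot\min_{S'}\frac{w_{S'}}{b_{S'}}$ with $\hat{\gamma}=O(\log^d n_V)$. Since the universe has $m\leq n_V n_U\leq n_U^2$ elements, the greedy analysis then gives $O(\hat{\gamma}\log m)=O(\log(n_U)\log^d(n_V))$.

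\textbf{Validity.} Fix any $x\in\canonical$, and take any $v\in V_x$ and $u\in U_x$. Then $v\in x\subseteq\rho_u$, so $(v,u)\in\E$. Hence the full vertex set of $G_x$ induces a biclique of $\G$, and so does any subset of it. In particular, $S_i$ (all edges of $\E$ between $\Vnodes(S^{(x_i)})$ and $\Unodes(S^{(x_i)})$) is a biclique. Moreover $\edges(S^{(x_i)})\subseteq\edges(S_i)$ are all uncovered edges. So $b_{S_i}\geq b_{x_i}$ and $w_{S_i}=|\nodes(S^{(x_i)})|+c$, which gives $\frac{w_{S_i}}{b_{S_i}}\leq \ratio(x_i)$.

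\textbf{Ratio bound.} Let $S'$ be the biclique minimizing $\frac{|\nodes(S')|+c}{b_{S'}}$ at the current iteration, with $V'=\Vnodes(S')$ and $U'=\Unodes(S')$. Let $\rho=\bigcap_{u\in U'}\rho_u$, which is an axis-aligned box containing $V'$. Replacing $V'$ by $\V\cap\rho$ keeps a biclique with $U'$ but may increase the weight. So instead I take the minimal bounding box $\rho'$ of $V'$, which still lies in $\rho$. I decompose $\rho'$ into the disjoint canonical pieces $C(\rho')$, with $|C(\rho')|=k=O(\log^d n_V)$.

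The key point is to restrict each piece back to $V'$. For $x\in C(\rho')$, let $V'_x=V'\cap x$, and consider the subgraph $H_x$ of $G_x$ on $V'_x\cup U'$ with its uncovered edges. Note $U'\subseteq U_x$, since $x\subseteq\rho'\subseteq\rho_u$ for all $u\in U'$, and $V'_x\subseteq V_x$, so $H_x$ is a subgraph of $G_x$. The uncovered edges of $S'$ split disjointly among the $H_x$, so $\sum_x b(H_x)=b_{S'}$, while $\sum_x(|\nodes(H_x)|+c)\leq |V'|+k(|U'|+c)\leq k\,w_{S'}$. By averaging, some $x$ has $\frac{|\nodes(H_x)|+c}{b(H_x)}\leq k\frac{w_{S'}}{b_{S'}}$. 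Because $\mathsf{DenSub}$ is an $O(1)$-approximation (Lemma~\ref{lem:denssub}), $\ratio(x)\leq O(1)\cdot k\cdot\frac{w_{S'}}{b_{S'}}$. Since $x_i$ minimizes $\ratio$, the same bound holds for $S_i$.

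\textbf{Main obstacle.} The delicate step is the averaging. It charges $|U'|+c$ once per canonical piece, which is exactly where the $\log^d n_V$ factor comes from. One must also check that restricting to $V'\cap x$, rather than $\V\cap x$, keeps everything inside $G_x$, so that the densest-subgraph oracle over $G_x$ dominates $H_x$. Finally, I will confirm that pieces $x$ with $b(H_x)=0$ can be discarded in the averaging without harm, and that the chain $O(\log m)=O(\log n_U)$ uses $n_V\leq n_U$.
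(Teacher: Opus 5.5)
Your proposal is correct and follows the paper's proof essentially step for step. Validity comes from $x\subseteq\rho_u$. You then decompose a box containing $\Vnodes(S_i^*)$ into $O(\log^d n_V)$ canonical pieces, restrict $\Vnodes(S_i^*)$ to one piece while keeping all of $\Unodes(S_i^*)$, and apply the $O(1)$-approximate $\mathsf{DenSub}$ on $G_x$.

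The differences are cosmetic. You choose the piece by an averaging (mediant) argument, whereas the paper takes the piece $z'$ maximizing $\xi_z$; both lose the same factor $k$. You decompose the bounding box of $\Vnodes(S_i^*)$ instead of $y=\bigcap_{u}\rho_u$, which makes no difference. Note that your step $x\subseteq\rho'$ needs the paper's convention that canonical rectangles are minimal ($x=\bar{x}$), as does the paper's own claim $U_{z'}\supseteq\Unodes(S_i^*)$.
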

\begin{proof}
    From Subsection~\ref{subsec:expApprox} it is sufficient to show that in each iteration $i$ of our algorithm we select a biclique $S_i$ such that $\frac{|\nodes(S_i)|+c}{b_{S_i}}\leq O(\log^d n_V) \frac{|\nodes(S_i^*)|+c}{b_{S_i^*}}$, where $b_{S_i}=|\edges(S_i)\setminus (\bigcup_{j\in[i-1]} \edges(S_j))|$ and $S^*_i$ is the biclique that minimizes $ \frac{|\nodes(S_i^*)|+c}{b_{S_i^*}}$.

    First, it is straightforward to see that $S_i$ is a valid biclique in $\G$. In fact, for every $x\in \canonical$ if $v\in \Vnodes(S^{(x)})$ and $u\in \Unodes(S^{(x)})$, it holds that $(v,u)\in\E$. Indeed, notice that by the construction of $V_x$ and $U_x$, for every $v\in V_x$ and $u\in U_x$, $\dist(v,u)\leq \delta$. Hence for every vertex $v$ in the subset of $V_x$ that is selected in $S^{(x)}$ and every vertex $u$ in the subset of $U_x$ that is selected in $S^{(x)}$ it holds that $(v,u)\in \E$.

We note that $y=\bigcap_{u\in \Unodes(S_i^*)}\rho_u$ is a hyper-rectangle in $\Re^d$ and since $S_i^*$ is a biclique it holds that $\Vnodes(S_i^*)\subseteq y$.
From the definition of a range tree, let $C(y)$ be the set of $O(\log^d n_V)$ canonical hyper-rectangles of $\rangetree$ that cover all points in $\Vnodes(S_i^*)$. For every hyper-rectangle $z\in C(y)$ let $\xi_z=|((\Vnodes(S_i^*)\cap z)\times \Unodes(S_i^*))\setminus (\bigcup_{j\in[i-1]}\edges(S_j))|$, i.e., the number of uncovered edges among vertices of $\Vnodes(S_i^*)$ in $z$ and vertices of $\Unodes(S_i^*)$. By definition notice that $b_{S_i^*}=\sum_{z\in C(y)}\xi_z$. Let $z'$ be the canonical hyper-rectangle in $C(y)$ such that $\xi_{z'}=\max_{z\in C(y)}\xi_z$. Since there are $O(\log^d n_V)$ canonical hyper-rectangles in $C(y)$, we have that $\xi_{z'}\geq \frac{b_{S_i^*}}{O(\log^d n_V)}$. Furthermore, $|\Vnodes(S_i^*)\cap z'|\leq |\Vnodes(S_i^*)|$. Hence,
    \begin{equation}\label{eq:approx1}
    \frac{|\Vnodes(S_i^*)\cap z'|+|\Unodes(S_i^*)|+c}{\xi_{z'}}\leq O(\log^d n_V)\frac{|\nodes(S_i^*)|+c}{b_{S_i^*}}.
    \end{equation}

    Next, notice that $z'$ is a canonical hyper-rectangle in $\canonical$ that our algorithm visits and computes the subgraph $S^{(z')}$. By definition, 
    \begin{equation}\label{eq:approx2}
        \frac{|\nodes(S_i)|+c}{b_{S_i}}\leq \frac{|\nodes(S^{(z')})|+c}{b_{S^{(z')}}}.
    \end{equation}

Finally, notice that $V_{z'}\supseteq \Vnodes(S_i^*)\cap z'$,  $U_{z'}\supseteq \Unodes(S_i^*)$ and by construction, $E_{z'}$ contains all uncovered edges between the vertices in $V_{z'}$ and $U_{z'}$.
The procedure $\mathsf{DenSub}(G_{z'},c)$ computes the subgraph $S^{(z')}\subseteq G_{z'}$ that approximately maximizes  the ratio $\frac{|\edges_{G_{z'}}(S^{(z')})|}{|\nodes(S^{(z')})|+c}$, up to a constant factor, where $\edges_{G_{z'}}(S^{(z')})$ are the edges in $E_{z'}$ between nodes in $\Vnodes(S^{(z')})$ and $\Unodes(S^{(z')})$. By definition every edge in $E_{z'}$ is an uncovered edge of $\E$, so equivalently $\mathsf{DenSub}(G_{z'},c)$ computes the subgraph $S^{(z')}\subseteq G_{z'}$ that approximately minimizes 
$\frac{|\nodes(S^{(z')})|+c}{|\edges_{G_{z'}}(S^{(z')})|}=\frac{|\nodes(S^{(z')})|+c}{b_{S^{(z')}}}$, up to a constant multiplicative factor.
Hence, by the near-optimality of $\mathsf{DenSub}(G_{z'},c)$ we have that
\begin{equation}\label{eq:approx3}
    \frac{|\nodes(S^{(z')})|+c}{b_{S^{(z')}}}\leq O(1)\cdot\frac{|\Vnodes(S_i^*)\cap z'|+|\Unodes(S_i^*)|+c}{\xi_{z'}}.
\end{equation}

From Equations~\eqref{eq:approx1},~\eqref{eq:approx2}, and~\eqref{eq:approx3}, we get
\begin{align*}
    \frac{|\nodes(S_i)|+c}{b_{S_i}}\leq O(\log^d n_V)\cdot\frac{|\nodes(S_i^*)|+c}{b_{S_i^*}}.
\end{align*}
\vspace{-1em}
\end{proof}

\begin{lemma}
    \label{lem:runtime}
    Our algorithm runs in $O(n_V\cdot(n_U+m)\cdot m\cdot \log^{d-1}(n_V))$ time and uses $O(n_U+m+n_V\cdot\log^{d-1}n_V)$ space.
\end{lemma}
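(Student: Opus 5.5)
The bound is the product of three factors: the number of greedy iterations, the number of canonical hyper-rectangles visited in each iteration, and the cost of processing one of them. The space bound comes from listing the data structures kept alive at any one time.

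\emph{Number of iterations.} In iteration $i$ the selected ratio $\frac{|\nodes(S_i)|+c}{b_{S_i}}$ is finite: the loop only runs while some edge has $B(v,u)=1$, and the proof of Lemma~\ref{lem:corr} guarantees some $z'$ with $\xi_{z'}>0$, so $b_{S_i}\ge 1$. Every iteration therefore sets at least one uncovered edge to $B=0$, and there are at most $m$ iterations.

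\emph{Cost of one canonical hyper-rectangle.} Fix an iteration and $x\in\canonical$.
\begin{itemize}
\item $V_x=\V\cap x$ can be read off the range tree in $O(|V_x|)$ time; trivially $|V_x|\le n_V$.
\item $U_x=\{u\in\U : x\subseteq\rho_u\}$ is found by testing each $u$ in $O(d)=O(1)$ time, since containment of two axis-aligned boxes is a per-coordinate check. This costs $O(n_U)$.
\item $E_x$ is built by scanning the adjacency lists of the vertices in $V_x$. For each edge $(v,u)$ we test membership of $u$ in $U_x$ (kept as a marked array) and look up $B(v,u)$ in the hash table, each in $O(1)$ expected time. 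This costs $O(\sum_{v\in V_x}\deg(v))\le O(m)$.
\item $\mathsf{DenSub}(G_x,c)$ runs in $O(|V_x|+|U_x|+|E_x|)=O(n_V+n_U+m)$ time by Lemma~\ref{lem:denssub}.
\end{itemize}
So each $x$ costs $O(n_U+m)$, using $n_V\le n_U$.

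\emph{Total time.} Since $|\canonical|=O(n_V\log^{d-1}n_V)$, one iteration costs $O(n_V\log^{d-1}(n_V)\,(n_U+m))$. The bookkeeping after the inner loop is cheaper: building $S_i$ and resetting $B$ takes $O(m)$ if we iterate over edges incident to $\Vnodes(S_i)$ rather than over all pairs. Multiplying by $m$ iterations gives $O(n_V\cdot(n_U+m)\cdot m\cdot\log^{d-1}n_V)$. Building the range tree costs $O(n_V\log^{d}n_V)$, which is dominated.

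\emph{Space.}
\begin{itemize}
\item The range tree and $\canonical$ use $O(n_V\log^{d-1}n_V)$.
\item The input graph and the hash table $B$ use $O(n_V+n_U+m)$.
\item The graph $G_x$, the $\mathsf{DenSub}$ workspace and $\bestS$ use $O(n_U+m)$. This memory is freed and reused across different $x$, so it is never held for more than one $x$ at a time.
\item The output $\edgecover$ should be stored compactly, or be charged as output, since the explicit edges of all bicliques could exceed $m$. Storing only the vertex sets, the bicliques cover distinct new edges and so have total size $O(m+n)$. Alternatively we argue the output is excluded.
\end{itemize}
This gives $O(n_U+m+n_V\log^{d-1}n_V)$.

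The main obstacle is getting $E_x$ and the $B$-updates within the stated budget. A naive enumeration of all pairs $V_x\times U_x$ would cost $O(n_Vn_U)$ per $x$ and break the bound. The fix is to drive every such loop by adjacency lists, so the per-$x$ cost stays $O(n_U+m)$, and to rely on $O(1)$ hash-table lookups.
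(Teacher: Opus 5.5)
Your proof is correct and follows the paper's argument: at most $m$ greedy iterations because each covers a new edge, $O(n_V\log^{d-1}n_V)$ canonical hyper-rectangles per iteration, $O(n_U+m)$ work per hyper-rectangle to build $G_x$ and run the linear-time $\mathsf{DenSub}$, and linear space apart from the range tree. Two small remarks: the ``main obstacle'' you identify is not real, because $v\in x\subseteq\rho_u$ gives $V_x\times U_x\subseteq\E$, so $|V_x|\cdot|U_x|\le m$ and even naive pair enumeration costs only $O(m)$ per hyper-rectangle; and your $O(m+n)$ bound on the stored output does not follow merely from each biclique covering a new edge, but it does hold once isolated vertices are discarded from $\bestS$ (which only lowers its ratio), since then $|\nodes(S_i)|\le 2b_{S_i}$ and $\sum_i b_{S_i}=m$.
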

\begin{proof}
    The range tree $\rangetree$ is constructed in $O(n_V\log^{d-1}n_V)$ and there are $O(n_V\log^{d-1}n_V)$ canonical hyper-rectangles. For each $x\in \canonical$ we create the graph $G_x$ in $O(n_V+n_U+m)$ time and the procedure $\mathsf{DenSub}(G_x,c)$ is executed in $O(n_U+m)$ time (Lemma~\ref{lem:denssub}). Hence each iteration of our greedy algorithm runs in $O(n_V\cdot (n_U+ m)\cdot \log^{d-1} (n_V))$ time. In each iteration of the greedy algorithm at least one new edge is covered so the overall time of our algorithm is bounded by $O(n_V\cdot (n_U+ m)\cdot m\cdot \log^{d-1} (n_V))$. Finally, our algorithm constructs the range tree $\rangetree$ with space $O(n_V\log^{d-1}n_V)$. All the other steps of the algorithm including the algorithm for the $c$-densest subgraph problem (Lemma~\ref{lem:denssub}) use linear space with respect to the size of the graph $\G$.
    \end{proof}


Putting everything together we conclude to Theorem~\ref{thm:approx}.

\vspace{-0.5em}
\begin{theorem}
\label{thm:approx}
    Given a \pbg\ $\G(\V,\U,\E)$ in $\ell^d_\infty$, where $d=O(1)$, such that $n_V=|\V|$, $n_U=|\U|$ (and without loss of generality $n_V\leq n_U$), and $|\E|=m$,  there exists an $O(\log(n_U)\cdot\log^{d} (n_V))$-approximation algorithm for the $\ptwo$ (and $\pone$) problem for any $c\geq 0$, that runs in $O(n_V\cdot (n_U+m)\cdot m\cdot \log^{d-1}(n_V))$ time and uses $O(n_U+m+n_V\log^{d-1}n_V)$ space. 
\end{theorem}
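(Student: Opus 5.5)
The theorem follows by assembling Lemma~\ref{lem:corr} and Lemma~\ref{lem:runtime}. Both are already stated for Algorithm~\ref{alg:gwbc-rangetree}, so the plan is to check that their hypotheses and outputs fit together, and that the $c=0$ case covers $\pone$.

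\textbf{Validity.} First, the output $\edgecover$ must be a biclique edge cover of $\G$. Each $S_i$ is a biclique: in the proof of Lemma~\ref{lem:corr}, every $v\in V_x$ lies in $x\subseteq\rho_u$ for every $u\in U_x$, so all pairs between $\Vnodes(S_i)$ and $\Unodes(S_i)$ are edges of $\G$. The loop ends only when $B(v,u)=0$ for every edge. An entry of $B$ is set to $0$ only when the edge lies inside some selected $S_i$, so every edge is covered at termination.

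\textbf{Termination.} The algorithm must cover at least one new edge per iteration. Suppose an edge $(v,u)$ is still uncovered. The canonical decomposition of any box containing $v$ gives some $x\ni v$ whose graph $G_x$ contains $(v,u)$. For instance, take the leaf-level canonical rectangle $x=\bar{x}=\{v\}$; then $x\subseteq\rho_u$, so $u\in U_x$ and $E_x\neq\emptyset$. On this $G_x$, an $O(1)$-approximate $\mathsf{DenSub}$ returns a subgraph with $b_x\geq 1$, since the optimum ratio is positive. Hence $\bestS$ is nonempty and $S_i$ covers at least one uncovered edge. I expect this nondegeneracy to be the only subtle point: it needs the approximation guarantee of Lemma~\ref{lem:denssub} to force a strictly positive edge count whenever $E_x\neq\emptyset$. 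Because of it, there are at most $m$ iterations.

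\textbf{Approximation and resources.} Lemma~\ref{lem:corr} shows that each iteration selects a biclique whose ratio is within $O(\log^d n_V)$ of the best one. By the reduction in Subsection~\ref{subsec:expApprox}, the greedy weighted set cover analysis then yields $\sigma(\edgecover)\leq O(\log(n_U)\log^d(n_V))\,\sigma(\edgecover^*)$. The universe has size $m\le n_Vn_U\le n_U^2$, so $\log m=O(\log n_U)$. Setting $c=0$ makes $\sigma=\mu$, which gives the claim for $\pone$. The time and space bounds are exactly those of Lemma~\ref{lem:runtime}, using the $O(n_V\log^{d-1}n_V)$ canonical rectangles per iteration, linear-time $\mathsf{DenSub}$, and at most $m$ iterations. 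Combining these gives Theorem~\ref{thm:approx}.
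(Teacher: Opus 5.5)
Your proposal is correct and follows the paper's own route. Theorem~\ref{thm:approx} is obtained by combining Lemma~\ref{lem:corr} with Lemma~\ref{lem:runtime}: the per-iteration $O(\log^d n_V)$ ratio guarantee is plugged into the approximate greedy set-cover bound, using $\log m=O(\log n_U)$, and setting $c=0$ gives $\pone$.

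Your explicit termination argument via the singleton canonical rectangle $\{v\}\subseteq\rho_u$ is a useful addition. The paper's runtime lemma only asserts that each iteration covers at least one new edge. One small fix: your phrase ``any box containing $v$'' should read ``a box inside $\rho_u$ containing $v$'', but the singleton choice already settles the point.
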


\vspace{-0.5em}
\section{Extensions}
\label{sec:ext}
We show how the algorithm from Subsection~\ref{subsec:mainalg} can be extended to intersection bipartite graphs. 
We also describe how to adapt the algorithm to $\delta$-disk bipartite graphs in $\ell_\base^d$ for $\base \geq 1$.

\vspace{-0.5em}
\subsection{Intersection bipartite graphs}
\label{subsec:inbigraph}
Let $\G(\V,\U,\E, \rects_V, \rects_U)$ be an intersection bipartite graph. 
Our algorithm is an extension of the algorithm in Section~\ref{subsec:mainalg}.
The high level idea is that 
for each hyper-rectangle $\rect_v\in \rects_V$ we construct a point $p_v$ in $\Re^{2d}$ and for each $\rect_u\in \rects_U$ a hyper-rectangle $\hat{\rho}_u$ in $\Re^{2d}$, such that $\rect_v\cap\rect_u\neq\emptyset$ if and only if $p_v\in \hat{\rho}_u$.
A hyper-rectangle $\rect$ in $\Re^d$ is defined as $\rect=[\rect^{(1)-}, \rect^{(1)+}]\times\ldots\times [\rect^{(d)-}, \rect^{(d)+}]$, where $[\rect^{(j)-}, \rect^{(j)+}]$ is the interval of $\rect$ in $j$-th dimension.
For every $v\in V$ let $p_v$ be the point in $\Re^{2d}$ defined by the corners of $\rect_v$: 
$p_v=(\rect_v^{(1)-}\!\!,\ldots, \rect_v^{(d)-}\!\!, \rect_v^{(1)+}, \ldots, \rect_v^{(d)+}).$
Let $P=\{p_v\mid v\in \V\}$ and let $\rangetree$ be the range tree constructed on $P$. 
For every $r_u\in \rects_U$, we construct a hyper-rectangle $\hat{\rho}_u=(-\infty,r_u^{(1)+}]\times (-\infty,r_u^{(2)+}]\times\ldots\times (-\infty,r_u^{(d)+}]\times [r_u^{(1)-},\infty)\times [r_u^{(2)-},\infty)\times\ldots\times [r_u^{(d)-},\infty)$.
For every $x\in \canonical$, we define $V_x=\{v\in \V\mid p_v\in x\}$, and $U_x=\{u\in\U\mid x\subseteq \hat{\rho}_u\}$. All the other steps of the algorithm follow verbatim.

The correctness of the method and extensions to interval and $k$-NN bipartite graphs are given in 
Appendix~\ref{appndx:extintersection}.

\vspace{-0.3em}
\begin{theorem}
\label{thm:approxG}
    Given an intersection bipartite graph $\G(\V\!,\!\U\!,\!\E\!,\rects_V\!,\!\rects_U)$ in $\Re^d$, where $d=O(1)$, such that $n_V=|\V|$, $n_U=|\U|$ (and without loss of generality $n_V\leq n_U$), and $|\E|=m$,  there exists an $O(\log(n_U)\cdot\log^{2d} (n_V))$-approximation algorithm for the $\ptwo$ (and $\pone$) problem for any $c\geq 0$, that runs in $O(n_V\cdot (n_U+m)\cdot m\cdot \log^{2d-1}(n_V))$ time and uses $O(n_U+m+n_V\log^{2d-1}n_V)$ space. 
\end{theorem}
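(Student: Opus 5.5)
The plan is to reduce Theorem~\ref{thm:approxG} to the analysis of Theorem~\ref{thm:approx}, i.e.\ to Lemmas~\ref{lem:corr} and~\ref{lem:runtime}, now carried out in dimension $2d$. The first step is the geometric equivalence: for every $v\in\V$ and $u\in\U$ we have $\rect_v\cap\rect_u\neq\emptyset$ if and only if $p_v\in\hat{\rho}_u$. Two axis-aligned hyper-rectangles intersect iff their intervals intersect in every coordinate $j$. This holds iff $\rect_v^{(j)-}\leq\rect_u^{(j)+}$ and $\rect_v^{(j)+}\geq\rect_u^{(j)-}$ for all $j$. These are exactly the $2d$ coordinate constraints defining $\hat{\rho}_u$ (the first $d$ coordinates of $p_v$ are bounded above by $\rect_u^{(j)+}$, and the last $d$ are bounded below by $\rect_u^{(j)-}$). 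Hence $\G$ is precisely the bipartite graph whose edges are the pairs with $p_v\in\hat{\rho}_u$, where each $\hat{\rho}_u$ is an (unbounded) axis-aligned hyper-rectangle in $\Re^{2d}$.

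Next I would check that every ingredient of the proof of Lemma~\ref{lem:corr} used only this point/hyper-rectangle incidence structure, not the fact that the $\rho_u$ were congruent boxes.

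Validity of the returned biclique is one such ingredient. If $v\in V_x$ and $u\in U_x$, then $p_v\in x\subseteq\hat{\rho}_u$, so $(v,u)\in\E$. Hence $S_i$, the biclique induced on $\nodes(S^{(x_i)})$, is a biclique of $\G$.

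The approximation step is the other ingredient. For the optimal ratio biclique $S_i^*$, let $y=\bigcap_{u\in\Unodes(S_i^*)}\hat{\rho}_u$. This is again an axis-aligned hyper-rectangle in $\Re^{2d}$, possibly unbounded; the range tree decomposition applies to unbounded ranges by clipping to the bounding box of $P$. It contains $\{p_v : v\in\Vnodes(S_i^*)\}$. The range tree on $P\subseteq\Re^{2d}$ yields a set $C(y)$ of $O(\log^{2d}n_V)$ disjoint canonical hyper-rectangles covering $P\cap y$.

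For each $z\in C(y)$ we have $z\subseteq y\subseteq\hat{\rho}_u$ for all $u\in\Unodes(S_i^*)$, so $U_z\supseteq\Unodes(S_i^*)$; this is where the minimality assumption $x=\bar{x}$ and the containment $z\subseteq y$ are used. We also have $V_z\supseteq\{v\in\Vnodes(S_i^*) : p_v\in z\}$. The averaging argument then picks $z'$ carrying at least a $1/O(\log^{2d}n_V)$ fraction of the uncovered edges of $S_i^*$. Equations~\eqref{eq:approx1}--\eqref{eq:approx3} go through verbatim, with $O(1)$ loss from $\mathsf{DenSub}$ (Lemma~\ref{lem:denssub}). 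Each greedy step is thus an $O(\log^{2d}n_V)$-approximate ratio choice, and Subsection~\ref{subsec:expApprox} gives the overall $O(\log n_U\cdot\log^{2d}n_V)$ bound.

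For the running time and space, the $2d$-dimensional range tree has $O(n_V\log^{2d-1}n_V)$ canonical hyper-rectangles and is built in comparable time and space. Per canonical box, building $G_x$ and running $\mathsf{DenSub}$ costs $O(n_V+n_U+m)$, and there are at most $m$ greedy iterations. This gives $O(n_V(n_U+m)m\log^{2d-1}n_V)$ time and $O(n_U+m+n_V\log^{2d-1}n_V)$ space, exactly as in Lemma~\ref{lem:runtime}.

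I expect the main obstacle to be the careful handling of the containment $z\subseteq\hat{\rho}_u$ for canonical boxes when the $\hat{\rho}_u$ are unbounded and $x$ is taken minimal. One must confirm that replacing $z$ by $\bar{z}$ preserves $z\subseteq y$. It does, since $\bar{z}\subseteq z$ when $\bar{z}$ is taken as the bounding box of $P\cap z$. The remaining steps are direct transcriptions of the earlier proof.
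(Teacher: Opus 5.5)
Your proposal is correct and follows the paper's argument. The paper likewise proves $\rect_v\cap\rect_u\neq\emptyset\iff p_v\in\widehat{\rho}_u$ coordinatewise and observes that $y=\bigcap_{u\in\Unodes(S_i^*)}\widehat{\rho}_u$ is a hyper-rectangle in $\Re^{2d}$ containing $\{p_v\mid v\in\Vnodes(S_i^*)\}$. It then reruns Lemma~\ref{lem:corr} and the counting of Lemma~\ref{lem:runtime} with a $2d$-dimensional range tree and $O(\log^{2d}n_V)$ canonical pieces.

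One correction to your last paragraph. A minimal canonical $z\in C(y)$ satisfies $z\subseteq y$ not because $\bar{z}\subseteq z$ (the unclipped canonical region need not lie in $y$). The actual reason is that $P\cap z\subseteq y$ and $y$ is an axis-aligned box, so the bounding box of $P\cap z$ lies in $y$.
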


\vspace{-1em}
\subsection{$\delta$-disk graph for any $\ell_\base^d$}

We next discuss how the algorithm from Theorem~\ref{thm:approx} extends to $\ell_\base^d$ for any $\base \geq 1$. In the $\ell_\infty^d$ metric, an edge $(v,u)\in\E$ exists if and only if $v$ lies in an axis-aligned box $\rho_u$ of side length $2\delta$ centered at $u$. More generally, for any $\base$, we define the ball $\bigcirc_u=\{p\in\Re^d \mid \dist_\base(u,p)\leq\delta\}$, and observe that $(v,u)\in\E$ if and only if $v\in\bigcirc_u$.
The correctness of our algorithm relies on range trees, which decompose any hyper-rectangle into $O(\log^{d} n_V)$ canonical regions. However, for $\base\neq\infty$, the region $\bigcirc_u$ is not a hyper-rectangle, and therefore range trees cannot be applied.
We show two approaches for handling metrics $\ell_\base^d$ for any  $\base\geq 1$. The first is a theoretical approach based on the notion of \emph{$\eps$-bicliques} that are graph structures that approximate standard bicliques. 
We present the theoretical approach in 
Appendix~\ref{appndx:extellalpha}. 
The second is a practical approach, which we also employ in our experimental evaluation.

\subsubsection{Practical approach}
\label{subsubsec:practical}
The theoretical approach constructs an $\eps$-biclique edge cover rather than an exact biclique cover and relies on an advanced geometric data structure (BBD tree). 
In practice, however, when our goal is to compute a biclique edge cover for general metrics $\ell_\base^d$ with $\base \geq 1$, we instead employ an \emph{$R$-tree}~\cite{Guttman1984} or a \emph{k-d tree}~\cite{bentley1975multidimensional}. The $R$-tree and k-d tree are the most widely used geometric data structures for range aggregation queries, where the query range can be any convex region.
The $R$-tree (or k-d tree) constructed over $\V$ induces a collection $\canonical$ of $O(|\V|)$ canonical axis-aligned hyper-rectangles in $\Re^d$. For any convex region $\rho \subseteq \Re^d$, there exists a subset $C(\rho)\subseteq \canonical$ of $\lambda$ pairwise disjoint hyper-rectangles such that
$\bigcup_{x\in C(\rho)}(\V\cap x)=\V\cap \rho$.
While for range trees and BBD-trees the parameter $\lambda$ is provably bounded, in the case of $R$-trees it can be as large as $O(n_V)$ in the worst case. Nevertheless, in practice $\lambda$ is typically much smaller and is often observed to be $O(\log n_V)$.
We refer to $\lambda$ as the \emph{range cover size}.

For any given $\base \geq 1$, we construct an $R$-tree or k-d tree over $\V$ and execute the algorithm from Subsection~\ref{subsec:mainalg}, iterating over the canonical hyper-rectangles of the $R$-tree (resp. k-d tree) instead of those of the range tree.
Using the same analysis, but assuming a range cover size of $\lambda$ instead of $O(\log^{d} n_V)$, we obtain the following result.

\vspace{-0.2em}
\begin{theorem}
\label{thm:approxpractical}
    Given a \pbg\ $\G(\V,\U,\E)$ in $\ell_\base^d$, where $\base\geq 1$ and $d=O(1)$, such that $n_V=|\V|$, $n_U=|\U|$, and $|\E|=m$,  there exists an $O(\lambda \cdot\log(n_U))$-approximation algorithm for the $\ptwo$ (and $\pone$) problem for any $c\geq 0$, where $\lambda$ is the range cover size, that runs in $O(n_V\cdot (n_U+m)\cdot m)$ time and uses $O(n_U+n_V+m)$ space. 
\end{theorem}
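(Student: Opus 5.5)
The proof repeats the argument of Lemma~\ref{lem:corr} and Lemma~\ref{lem:runtime}, with the range tree replaced by an $R$-tree (or k-d tree) over $\V$. By Subsection~\ref{subsec:expApprox}, it suffices to show one thing: in every greedy iteration the chosen biclique $S_i$ satisfies $\frac{|\nodes(S_i)|+c}{b_{S_i}}\leq O(\lambda)\cdot\frac{|\nodes(S_i^*)|+c}{b_{S_i^*}}$. The set-cover greedy analysis then gives an $O(\lambda\log n_U)$ factor; if $n_V>n_U$, the universe size is still at most $n_Vn_U$, so $\log$ of it is $O(\log n)$.

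\textbf{Validity.} First I show that each $S_i$ is a biclique of $\G$. For every canonical box $x$, the algorithm sets $U_x=\{u\in\U\mid x\subseteq \bigcirc_u\}$. Hence every $v\in V_x\subseteq x$ lies in $\bigcirc_u$, so $(v,u)\in\E$, and the induced subgraph on $\nodes(S^{(x)})$ is complete bipartite. Note that the containment test $x\subseteq\bigcirc_u$ is easy because $\bigcirc_u$ is convex: it holds if and only if all $2^d$ corners of $x$ lie in $\bigcirc_u$, which takes $O(1)$ time for $d=O(1)$.

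\textbf{Approximation.} Set $y=\bigcap_{u\in\Unodes(S_i^*)}\bigcirc_u$. This is convex as an intersection of convex $\ell_\base$ balls, and it contains $\Vnodes(S_i^*)$. By the range cover property there are at most $\lambda$ disjoint canonical boxes $C(y)$ whose union, restricted to $\V$, is exactly $\V\cap y\supseteq\Vnodes(S_i^*)$.

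This is the step needing care. Each $z\in C(y)$ is minimal and its points lie in $y$, but minimality alone does not give $z\subseteq y$. To obtain $z\subseteq\bigcirc_u$, I assume (or phrase the definition of $\lambda$ so) that the cover boxes satisfy $z\subseteq\rho$. This holds for the standard R-tree/k-d tree query decomposition, which reports a node wholesale only when its bounding box lies inside the query. With that, $\Unodes(S_i^*)\subseteq U_z$ for each $z\in C(y)$.

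Pigeonhole over the at most $\lambda$ boxes then yields a $z'$ carrying at least $b_{S_i^*}/\lambda$ uncovered edges of $S_i^*$. Equations \eqref{eq:approx1}--\eqref{eq:approx3} go through verbatim with $\lambda$ in place of $O(\log^d n_V)$, using the $O(1)$-approximate $\mathsf{DenSub}$ from Lemma~\ref{lem:denssub}.

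\textbf{Complexity.} An $R$-tree or k-d tree has $O(n_V)$ canonical boxes and uses $O(n_V)$ space. In each iteration, building $G_x$ and running $\mathsf{DenSub}$ costs $O(n_V+n_U+m)$ per box, so one iteration costs $O(n_V(n_U+m))$, using $n_V\leq m+n_U$ up to trivial isolated vertices. There are at most $m$ iterations, since each covers a new edge. This gives $O(n_V(n_U+m)m)$ time, and the space is $O(n_U+n_V+m)$ for the tree, the hash table $B$, and the current $G_x$.
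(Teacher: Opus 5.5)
Your proposal is correct and follows the paper's route: the paper's proof is a one-line appeal to rerunning Lemma~\ref{lem:corr} and Lemma~\ref{lem:runtime} with the $R$-tree/k-d tree boxes and $\lambda$ in place of $O(\log^d n_V)$. You are more careful than the paper, which never addresses that minimality of a canonical box $z$ no longer implies $z\subseteq y$ once $y=\bigcap_{u\in\Unodes(S_i^*)}\bigcirc_u$ is not an axis-aligned box.

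Your patch does strengthen the paper's definition of $\lambda$, which only requires $\bigcup_{x\in C(\rho)}(\V\cap x)=\V\cap\rho$. To get the bound for $\lambda$ exactly as defined, test containment on points instead: set $U_x=\{u\in\U\mid \V\cap x\subseteq\bigcirc_u\}$, as the paper does in its BBD-tree variant. This keeps validity and gives $\Unodes(S_i^*)\subseteq U_z$ directly from $z\cap\V\subseteq y$. It can also be computed in $O(n_U+m)$ per box by counting each $u$'s neighbors in $V_x$, so the time bound is unchanged.
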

\vspace{-0.5em}
\paragraph{Remark 1}
Due to its practicality in all our experiments we use the algorithm from Theorem~\ref{thm:approxpractical}.
\vspace{-0.5em}

\paragraph{Remark 2} 
The running times stated in our theorems are loose worst-case upper bounds. In practice, performance is significantly better. Specifically, each greedy iteration typically covers many edges (rather than a single edge), the edges in $G_x$ are much smaller than $m$, and only a small subset of canonical hyper-rectangles needs to be examined per iteration. If the greedy algorithm runs for $\kappa_1$ iterations, each $G_x$ contains at most $\kappa_2$ edges, and at most $\kappa_3$ hyper-rectangles are visited per iteration, a more realistic running time is $O(\kappa_3 (n_U+\kappa_2)\kappa_1)$.


%% file: exp2.tex
\vspace{-0.8em}
\section{Experiments}
\label{sec:Newexp}
\vspace{-0.3em}
In this section, we implement the practical algorithm from Theorem~\ref{thm:approxpractical} and compare its effectiveness and efficiency against standard baseline methods for $\pone$ and $\ptwo$ on both real and semi-synthetic datasets. Specifically, for each algorithm we evaluate the following metrics: (i) the total size of the returned biclique edge cover, (ii) the maximum memory consumption, and (iii) the running time required to compute a biclique edge cover.
All datasets and our code can be found in~\cite{ourcode}.

\vspace{-1.1em}
\subsection{Datasets}
\vspace{-0.3em}
We evaluate our algorithms on three real geometric datasets \textbf{Adults}~\cite{adult_2}, \textbf{Default of Credit Card Clients}~\cite{default_of_credit_card_clients_350}, \textbf{MAGIC Gamma Telescope}~\cite{magic_gamma_telescope_159}, two real bipartite graphs \textbf{MovieLens100K}~\cite{grouplens_movielens_100k}, \textbf{MovieLens1M}~\cite{grouplens_movielens_1m}, two matrix datasets \textbf{WorldCities}, \textbf{162bit} from~\cite{davis2011university} corresponding to real bipartite graphs, and one semi-synthetic dataset \textbf{POPSIM}~\cite{nguyen2023popsimindividuallevelpopulationsimulator}.

In the datasets \textbf{Adults}, \textbf{Credit}, \textbf{Gamma}, and \textbf{POPSIM}, each data point lies in $\mathbb{R}^d$ and is associated with a categorical attribute.
In all cases, we normalize the coordinates so that they lie in the range $[0,1]$.
We select a categorical attribute to partition the input points into two disjoint sets, $\V$ and $\U$, thereby defining the bipartition. 
We consider five values of $\delta$ and construct the corresponding $\delta$-disk bipartite graphs. For a fixed value of $\delta$, a vertex $v\in \V$ and a vertex $u\in \U$ are connected by an edge if $\dist_\base(v,u)\leq \delta$. In our experiments, we use the $\ell_2^d$ and $\ell_\infty^d$ metric space ($\base=\{2,\infty\}$).
For \textbf{POPSIM}, we construct two additional geometric bipartite graphs: a large and very sparse graph, called \textbf{POPSIM-S}, in which the average degree is $3.5$, and a large and very dense graph, called \textbf{POPSIM-D}, in which the average degree is greater than $325$. We show all statistics of our datasets in Table~\ref{tab:datasetsell2}.

While \textbf{Adults}, \textbf{Credit}, \textbf{Gamma}, and \textbf{POPSIM} datasets are inherently geometric, this is not the case for the real bipartite graphs \textbf{MovieLens100K} and \textbf{MovieLens1M}. In these datasets, one side of the bipartite graph corresponds to users, the other to movies, and edges represent user ratings. Similarly, the bipartite graphs derived by the matrix datasets \textbf{WorldCities} and \textbf{162bit} are not geometric.
However, our algorithm \ours\ operates on geometric bipartite graphs. Therefore, we embed the original bipartite graphs into a geometric space while attempting to preserve their edge structure with minimal distortion.
More specifically, given a bipartite graph $G(U,V,E)$, we construct an embedding $G'$ in $\mathbb{R}^d$, for a sufficiently large dimension $d$, such that every $u \in U$ and $v \in V$ is mapped to points $p_u,p_v \in \mathbb{R}^d$, and edges are induced by a distance threshold, i.e., $(u,v)\in E \iff \dist_\alpha(p_u,p_v)\le r$.
Our objective is to ensure that $G'$ approximates $G$ as closely as possible, so that the experimental evaluation reflects the performance of \ours\ rather than artifacts introduced by the embedding process.
In practice, we fix $\alpha=2$ (Euclidean distance) and $r=1$, and optimize over the embedding coordinates in dimension $d$ using a gradient-based optimization procedure. We enforce that all original edges of $G$ are preserved in $G'$, while minimizing the number of additional induced edges, ideally introducing none. Our approach is inspired by the geometric graph embedding framework of~\cite{maehara1984space}.

For both \textbf{MovieLens100K} and \textbf{MovieLens1M}, we construct embeddings in $\Re^{256}$ that are highly accurate, introducing only at most $18$ additional edges overall.
For \textbf{WorldCities} and \textbf{162bit}, we construct embeddings in $\mathbb{R}^{256}$ introducing no additional edges.

We report summary statistics (including the number of vertices, number of edges, dimensionality, and grouping attribute) for all datasets (for $\base=2$) and all values of $\delta$ in Table~\ref{tab:datasetsell2}. We also show the statistics for $\base=\infty$ in 
Table~\ref{tab:datasetsellinfty}.
\begin{table}[t]
\centering
\small
\setlength{\tabcolsep}{3pt}
\renewcommand{\arraystretch}{0.9}
\begin{tabular}{|c|c|c|c|c|c|}

\hline
Dataset & $\delta$ & $m$ & $n$ & $d$ & Group Attribute \\
\hline
\multirow{5}{*}{Adults} & $0.05$ & \num{47963} & \multirow{5}{*}{\num{14846}} & \multirow{5}{*}{5} & \multirow{5}{*}{Sex} \\ \cline{2-3}
                     & $0.1$ & \num{445724} &                     &                    & \\ \cline{2-3}
                     & $0.15$ & \num{1415555} &                     &                  &   \\ \cline{2-3}
                     & $0.2$ & \num{3016343} &                     &                   &  \\ \cline{2-3}
                     & $0.25$ & \num{5498540} &                     &                  &   \\ \hline
\multirow{5}{*}{Credit} & $0.002$ & \num{889671} & \multirow{5}{*}{\num{28726}} & \multirow{5}{*}{6} & \multirow{5}{*}{Sex} \\ \cline{2-3}
                     & $0.003$ & \num{1739709} &                     &                 &    \\ \cline{2-3}
                     & $0.004$ & \num{2710136} &                     &                 &    \\ \cline{2-3}
                     & $0.005$ & \num{3750745} &                     &                 &    \\ \cline{2-3}
                     & $0.006$ & \num{4824359} &                     &                 &    \\ \hline
\multirow{5}{*}{Gamma} & $0.13$ & \num{194737} & \multirow{5}{*}{\num{18905}} & \multirow{5}{*}{10} & \multirow{5}{*}{Class} \\ \cline{2-3}
                     & $0.14$ & \num{284626} &                     &                 &    \\ \cline{2-3}
                     & $0.15$ & \num{399339} &                     &                 &    \\ \cline{2-3}
                     & $0.16$ & \num{541050} &                     &                 &    \\ \cline{2-3}
                     & $0.17$ & \num{711158} &                     &                 &    \\ \hline
\multirow{5}{*}{POPSIM} & $0.04$ & \num{1468984} & \multirow{5}{*}{8,591} & \multirow{5}{*}{2} & \multirow{5}{*}{Race} \\ \cline{2-3}
                     & $0.06$ & \num{2660732} &                     &                 &    \\ \cline{2-3}
                     & $0.08$ & \num{3801451} &                     &                 &    \\ \cline{2-3}
                     & $0.1$ & \num{4857704} &                     &                  &   \\ \cline{2-3}
                     & $0.12$ & \num{5844122} &                     &                 &    \\ \hline
POPSIM-S & - & \num{5625395} & \num{3211965} & \num{2} & Race \\ \hline
POPSIM-D & - & \num{122880782} & \num{755182} & \num{2} & Race \\ \hline
MovieLens100K & - & \num{100000} & \num{2700} & \num{256} & - \\ \hline
MovieLens1M & - & \num{1000000} & \num{10000} & \num{256} & - \\ \hline
    WorldCities & - & \num{7518} & \num{415} & \num{256} & -\\ \hline
162bit & - & \num{37118} & \num{7203} & \num{256} & -\\ \hline
\end{tabular}
\caption{\label{tab:datasetsell2}Statistics of $\delta$-disk graphs in our experiments for $\ell_2^d$.}
\vspace{-1em}
\end{table}

\vspace{-1.3em}
\subsection{Algorithms}
\vspace{-0.2em}
We describe the algorithms compared in our experiments.

\vspace{-0.5em}
\paragraph{Our algorithm: \ours}
We implemented our practical algorithm (with provable guarantees) from Theorem~\ref{thm:approxpractical}. To generate the canonical hyper-rectangles we use a modified implementation of SciPy's \texttt{KDTree}.
For the $\pone$ problem, we use Charikar's $2$-approximation algorithm~\cite{charikar2000greedy} to compute the densest subgraph, as implemented in~\cite{ambavi2020densest}.
For the $\ptwo$ problem, we modified the implementation of the exact algorithm~\cite{goldberg1984finding} for the densest subgraph problem, as shown in Section~\ref{sec:algs} to solve the $c$-densest subgraph.

\paragraph{Baselines}
For each $\pone$ and $\ptwo$ problem, we compare our approach against three baselines.
The first two baselines share the same high level approach: They are greedy algorithms and run in two phases. In the first phase they generate a candidate set of bicliques. In the second phase they run a greedy algorithm to compute a biclique edge cover using the selected candidate bicliques.
The third baseline is the CPGR algorithm from~\cite{chavan2026speeding}.


$\bullet$ \textbf{\baseone}: This is a heuristic for the edge-RMP problem in role-mining~\cite{role-mining-survey,edge-RMP-hardness}, which is equivalent to $\pone$ for bipartite graphs. It uses the RoleMiner algorithm~\cite{10.1145/1180405.1180424} to generate a candidate set of bicliques. 

$\bullet$ \textbf{\basetwo}: 
   It constructs a candidate set of bicliques by enumerating all maximal bicliques using AMBEA~\cite{10633882}.

$\bullet$ \textbf{\basethree}: This is the state-of-the-art algorithm for constructing a biclique partition, as implemented in~\cite{chavan2026speeding}.

\paragraph{Details of \baseone\ and \basetwo}
Following candidate generation, the first two baselines employ the same greedy selection criteria.
For the $\pone$ problem, the greedy procedure chooses in each iteration the biclique that minimizes the ratio of the number of nodes over the number of uncovered edges. For the $\ptwo$ problem, the greedy procedure chooses the biclique that minimizes the ratio of the number of nodes plus $c$ over the number of uncovered edges.

We initially observed that, in all cases, \baseone\ and \basetwo\ run significantly slower than our algorithm and frequently exceed available memory limits. This behavior is primarily due to the extremely large number of candidate bicliques generated by the baselines. 
This behavior is consistent with prior observations. For example, in~\cite{edge-RMP-hardness}, the authors generated synthetic datasets with fewer than $400$ vertices and reported more than $6{,}000$ candidate bicliques, with running times exceeding $60$ seconds. In our setting, the effect is even more pronounced: On the \textbf{Adults} dataset in $\ell_\infty^d$ with $\delta=0.1$, \baseone\ requires $5{,}473$ seconds, generates \num{1130246} candidate bicliques, and reaches a peak memory usage of \num{42068}~MB. 

To ensure a fair and feasible comparison, we introduce an early-stopping criterion for \baseone\ and \basetwo. At a high level, we constrain all algorithms to operate within comparable time budgets. Specifically, we allow baselines to spend between $10\%$ and $35\%$ of our algorithm’s total execution time generating candidate bicliques.
Under this setting, both baselines still require slightly more total running time than our algorithm. Consequently, our evaluation grants the baselines an advantage by allowing them to run longer and generate more candidate bicliques, while remaining within feasible time and memory limits.
Additionally, the consecutively generated candidates by both of our baselines tend to share many vertices. Thus, to ensure the quality of candidates, we also randomly shuffle the vertex order before generating them.

\paragraph{Details of \basethree}
Across all datasets we evaluated, we observed that \basethree\ produced representations that were substantially larger than those produced by \ours. This behavior is expected, since biclique partitioning is substantially more restrictive than biclique covering (every biclique partition is a biclique cover, but not vice versa), and the objective in~\cite{chavan2026speeding} is not to compute a minimum-size biclique partition. Consequently, we omitted this baseline from most of our figures, although we still report some examples of the sizes of its computed representations.

\paragraph{Setup}
All experiments were conducted on a Windows~11 workstation equipped with an Intel Core i7-9700 CPU and 32~GB of DDR4 RAM. All algorithms were implemented in Python~3.12.9. The AMBEA algorithm~\cite{10633882} used in \basetwo, however, is a parallel implementation written in C++. It was compiled as a standalone executable and invoked via Python subprocesses. As a result, \basetwo\ enjoys an additional performance advantage compared to both \baseone\ and \ours.
Memory usage was monitored using the Python library psrecord, with a 3-second polling interval to minimize measurement overhead.

\begin{figure*}
\centering
    \begin{minipage}[t]{\textwidth}
        \begin{minipage}[t]{\textwidth}
            \centering
            \begin{minipage}[t]{0.24\linewidth}\centering \textbf{Adults}\end{minipage}
            \begin{minipage}[t]{0.24\linewidth}\centering \textbf{Credits}\end{minipage}
            \begin{minipage}[t]{0.24\linewidth}\centering \textbf{Gamma}\end{minipage}
            \begin{minipage}[t]{0.24\linewidth}\centering \textbf{POPSIM}\end{minipage}
        \end{minipage}
        \begin{minipage}[t]{\textwidth}
            \centering
            \begin{minipage}[t]{0.24\linewidth}
                \centering
                \includegraphics[width=\textwidth]{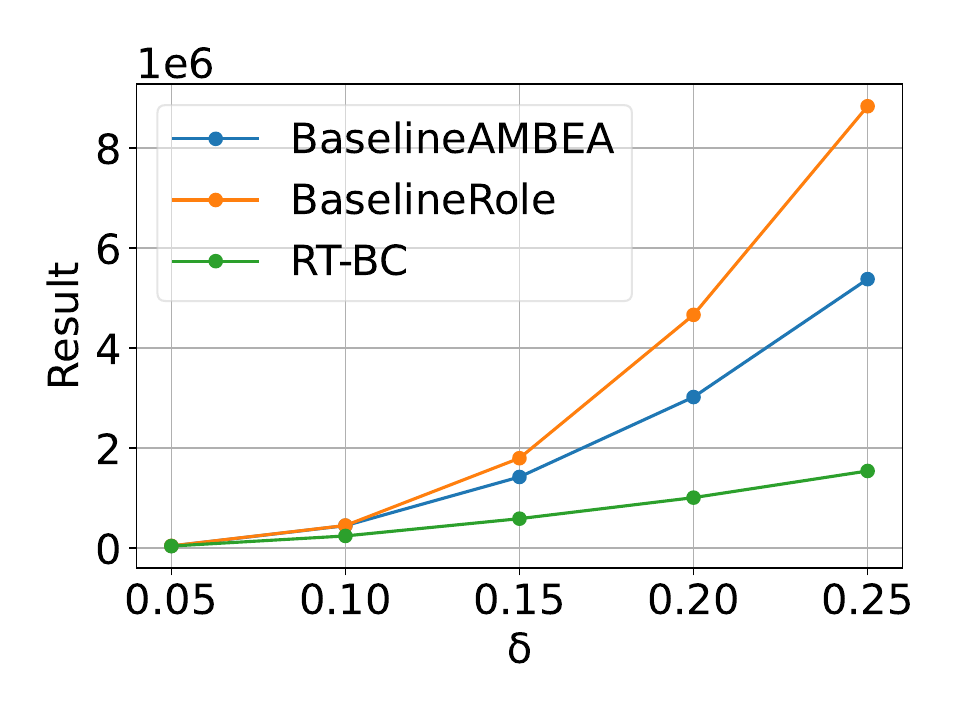}
                \vspace{-2.5em}
            \end{minipage}
            \begin{minipage}[t]{0.24\linewidth}
                \centering
                \includegraphics[width=\textwidth]{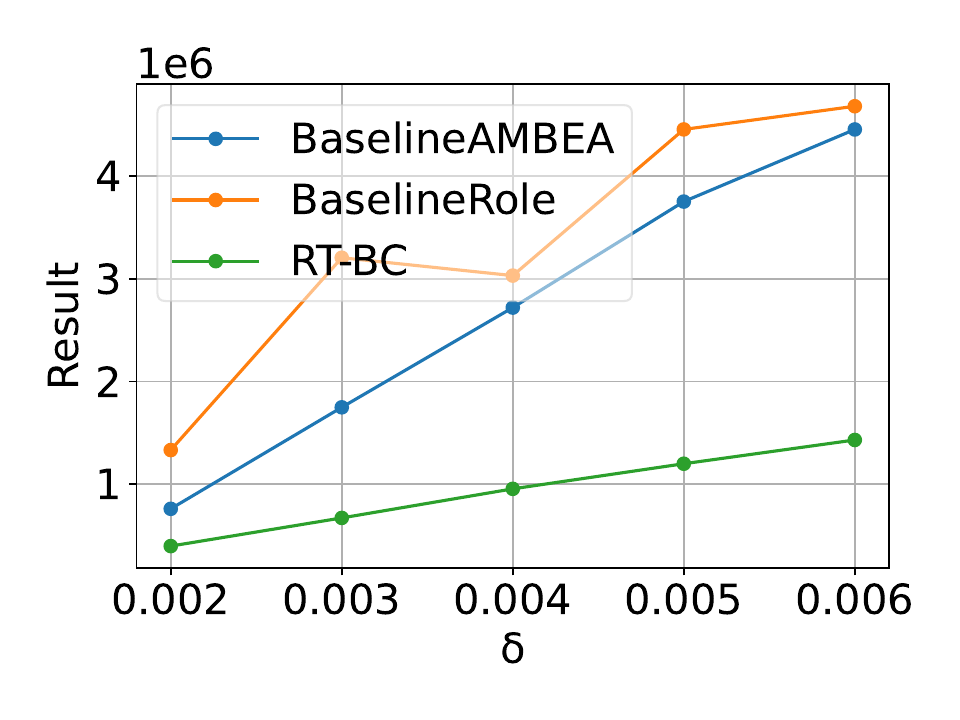}
                \vspace{-2.5em}
            \end{minipage}
            \begin{minipage}[t]{0.24\linewidth}
                \centering
                \includegraphics[width=\textwidth]{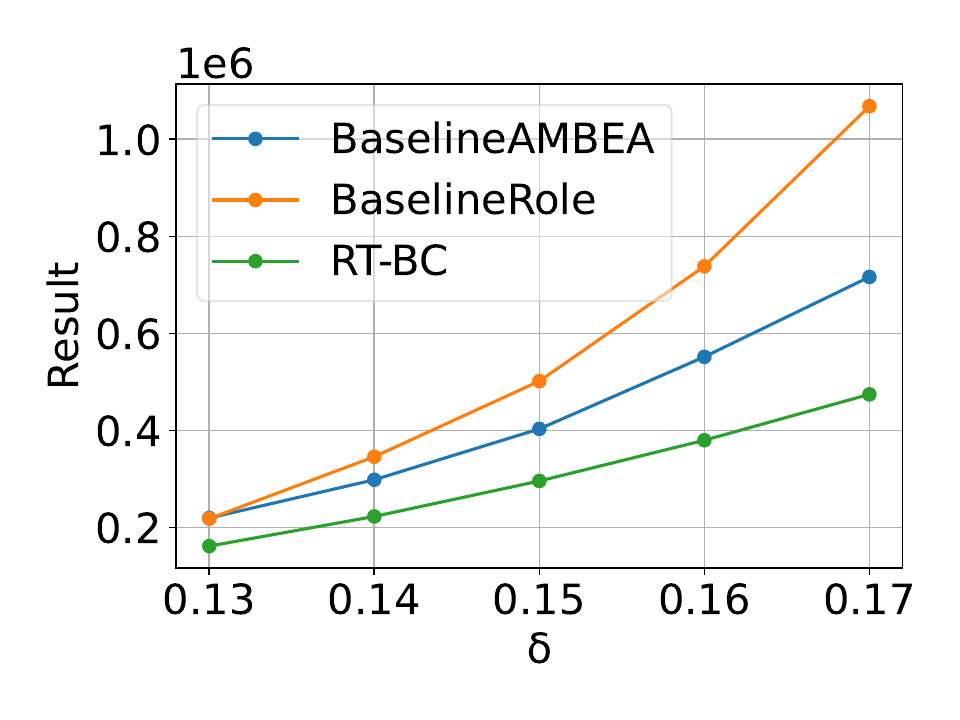}
                \vspace{-2.5em}
            \end{minipage}
            \begin{minipage}[t]{0.24\linewidth}
                \centering
                \includegraphics[width=\textwidth]{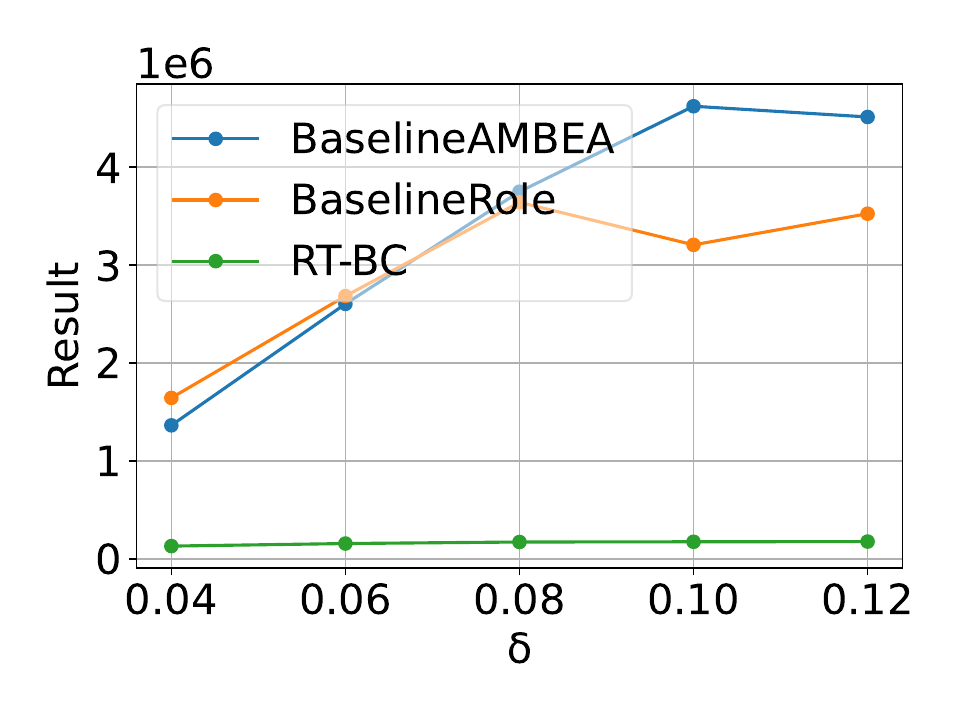}
                \vspace{-2.5em}
            \end{minipage}
            \vspace{-1em}
            \caption*{(a) Size of representation}
        \end{minipage}
        \begin{minipage}[t]{\textwidth}
            \centering
            \begin{minipage}[t]{0.24\linewidth}
                \centering
                \includegraphics[width=\textwidth]{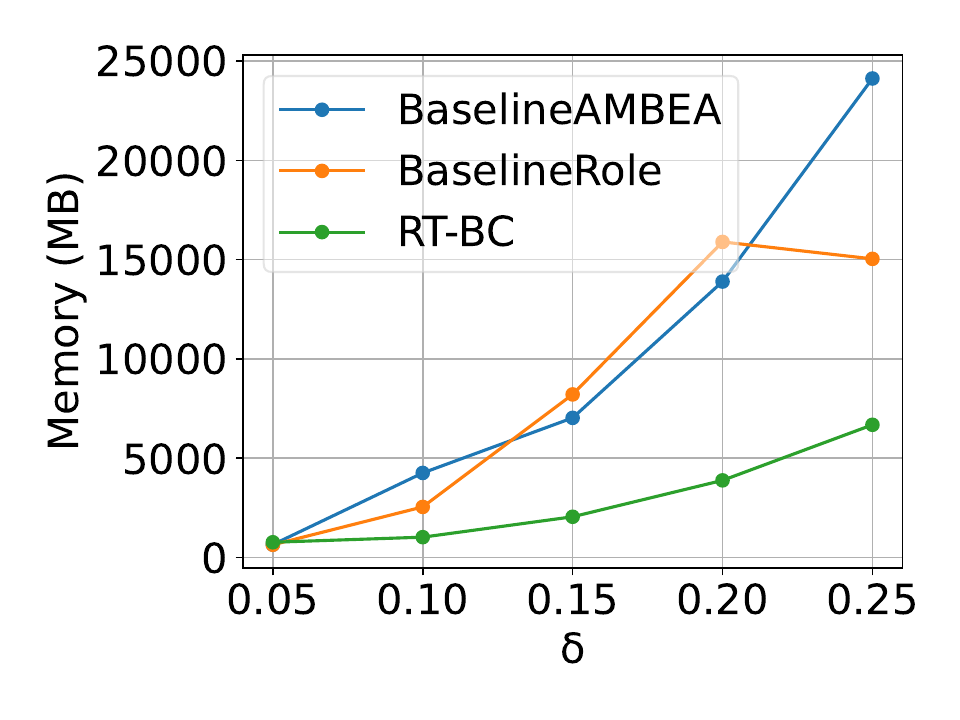}
                \vspace{-2.5em}
            \end{minipage}
            \begin{minipage}[t]{0.24\linewidth}
                \centering
                \includegraphics[width=\textwidth]{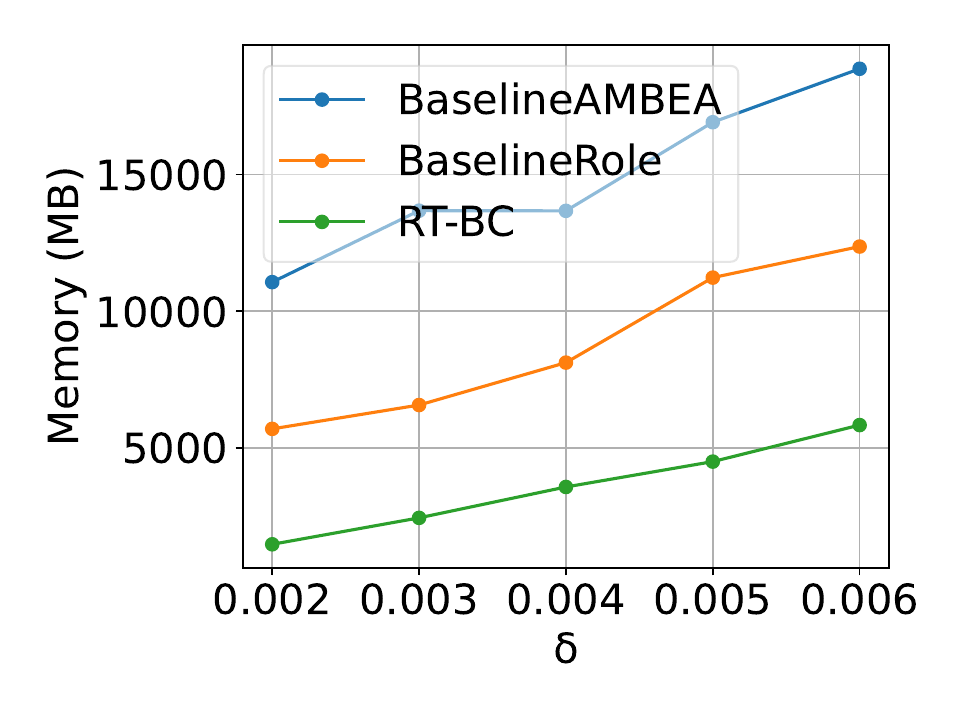}
                \vspace{-2.5em}
            \end{minipage}
            \begin{minipage}[t]{0.24\linewidth}
                \centering
                \includegraphics[width=\textwidth]{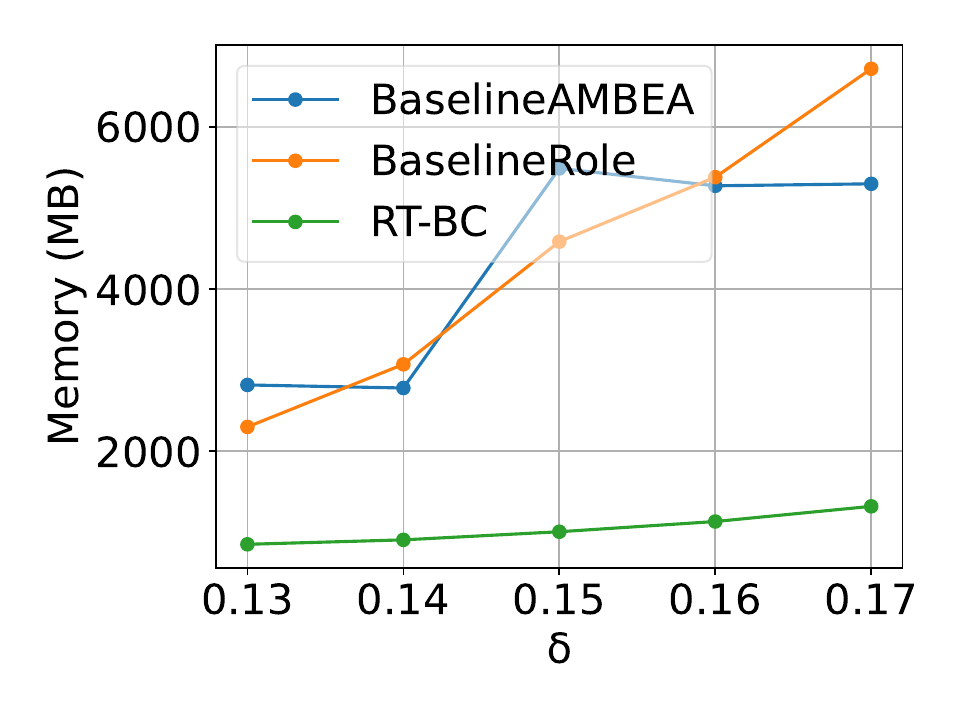}
                \vspace{-2.5em}
            \end{minipage}
            \begin{minipage}[t]{0.24\linewidth}
                \centering
                \includegraphics[width=\textwidth]{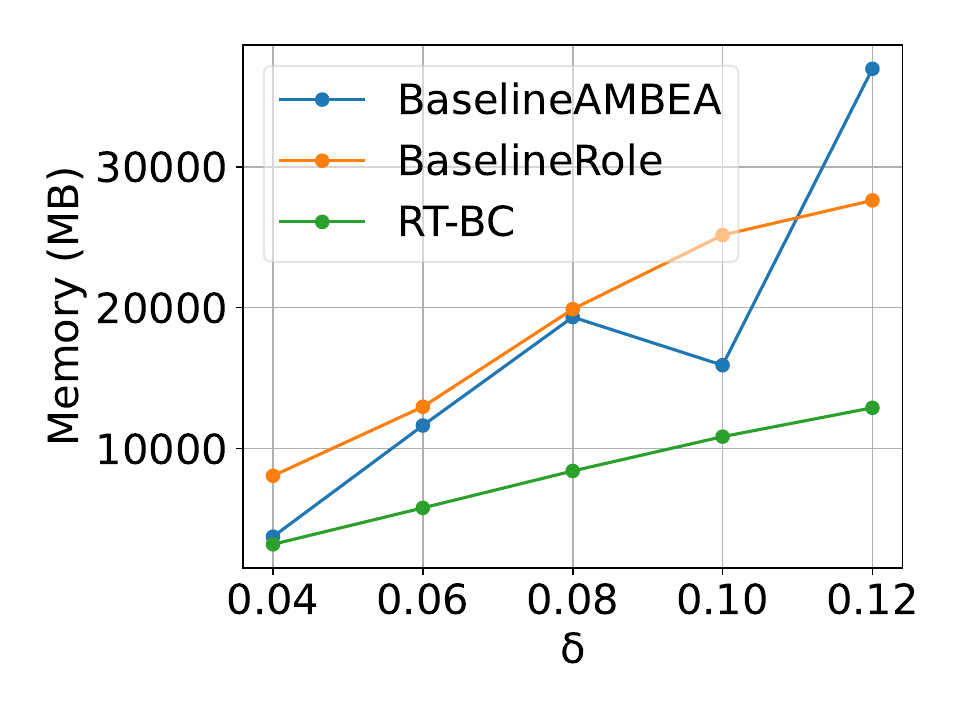}
                \vspace{-2.5em}
            \end{minipage}
            \vspace{-1em}
            \caption*{(b) Peak memory in MB}
        \end{minipage}
        \begin{minipage}[t]{\textwidth}
            \centering
            \begin{minipage}[t]{0.24\linewidth}
                \centering
                \includegraphics[width=\textwidth]{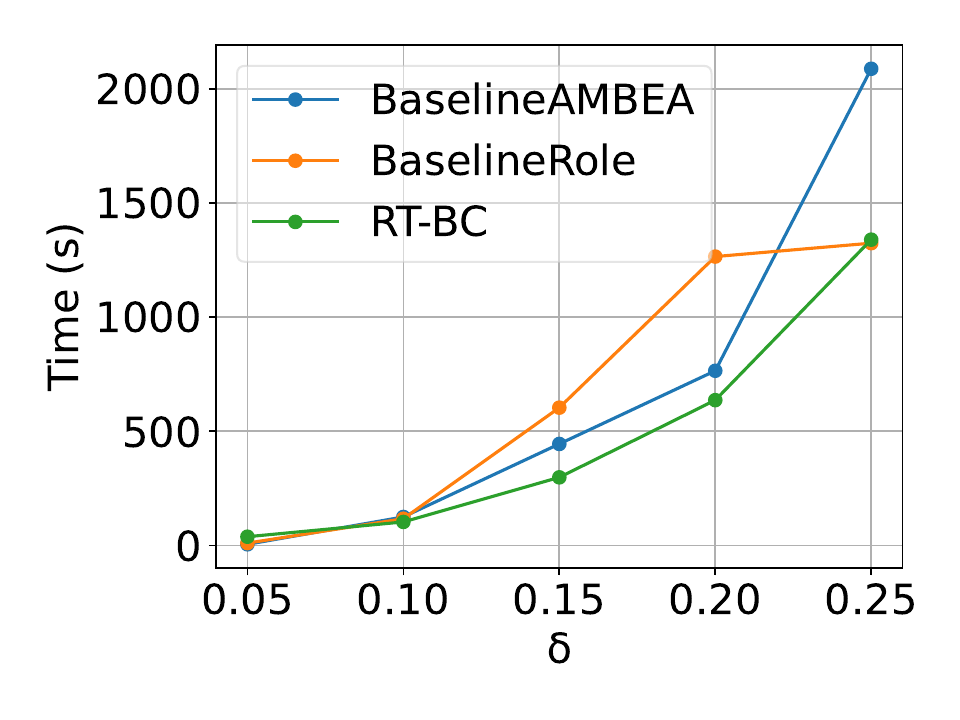}
                \vspace{-2.5em}
            \end{minipage}
            \begin{minipage}[t]{0.24\linewidth}
                \centering
                \includegraphics[width=\textwidth]{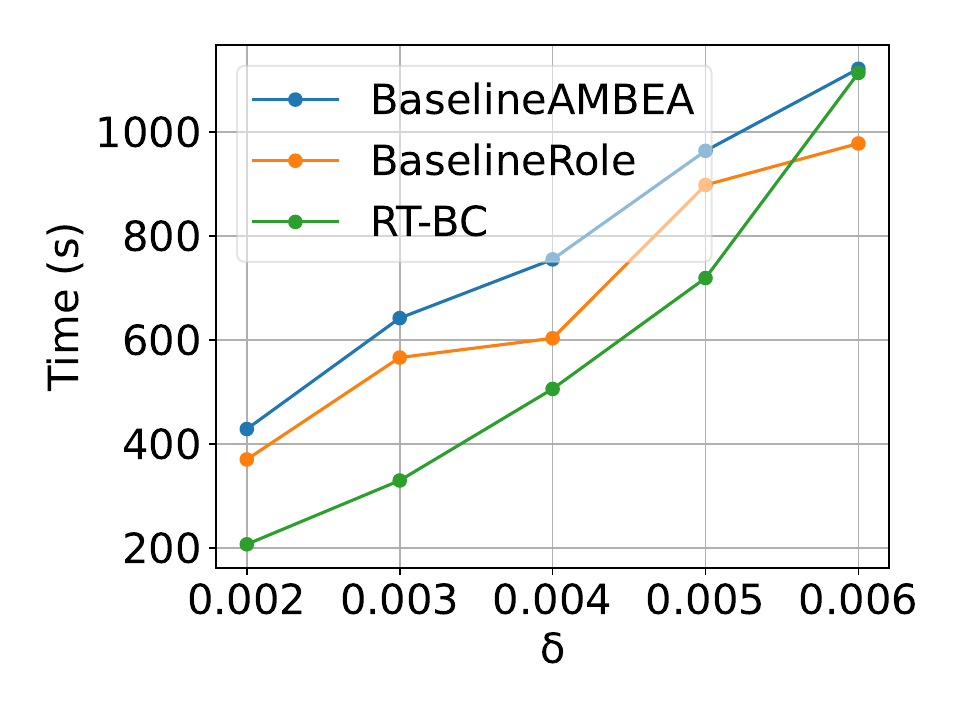}
                \vspace{-2.5em}
            \end{minipage}
            \begin{minipage}[t]{0.24\linewidth}
                \centering
                \includegraphics[width=\textwidth]{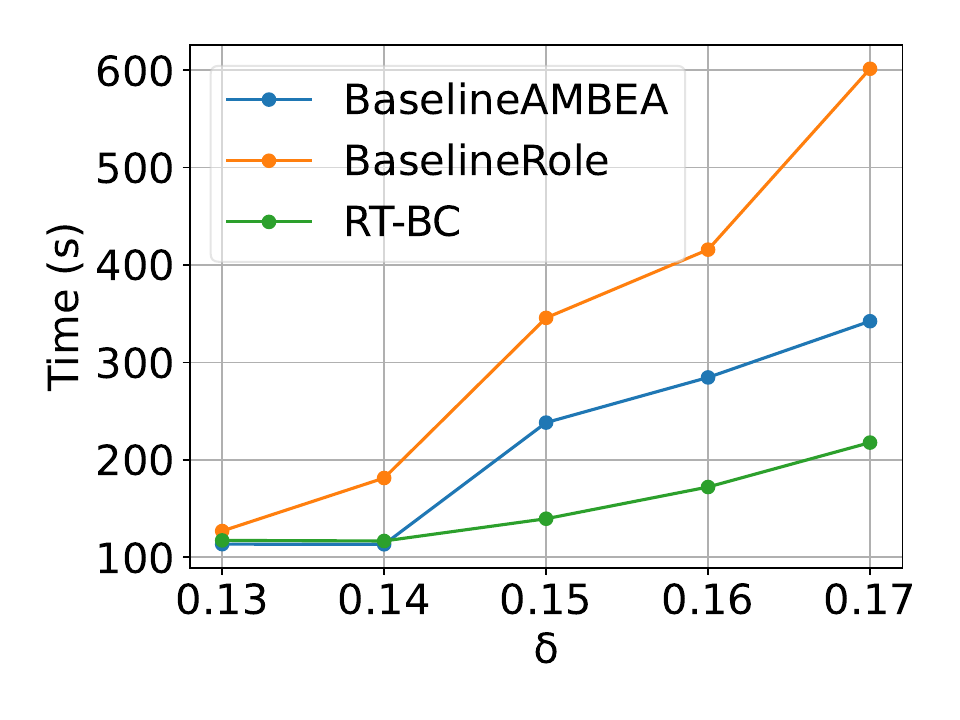}
                \vspace{-2.5em}
            \end{minipage}
            \begin{minipage}[t]{0.24\linewidth}
                \centering
                \includegraphics[width=\textwidth]{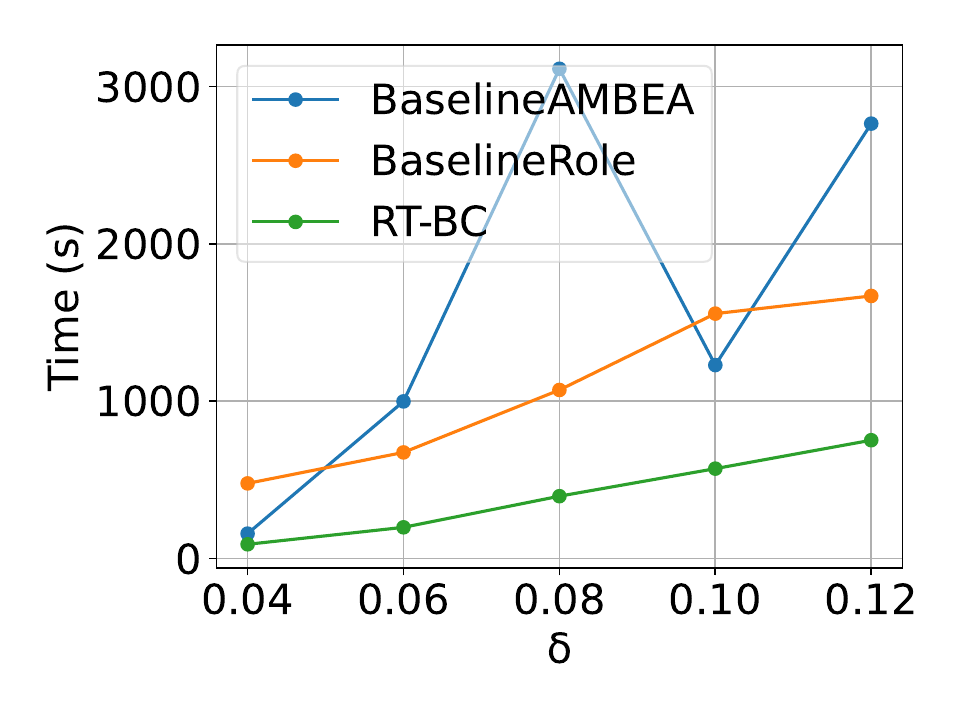}
                \vspace{-2.5em}
            \end{minipage}
            \vspace{-0.5em}
            \caption*{(c) Running time in seconds}
        \end{minipage}
    \caption{Experimental results for $\pone$ in $\ell_2^d$ across all datasets}
    \label{fig:wbcl2}
    \end{minipage}
    \vspace{0.5em}
    \begin{minipage}[t]{\textwidth}
        \begin{minipage}[t]{\textwidth}
            \centering
            \begin{minipage}[t]{0.24\linewidth}\centering \textbf{Adults}\end{minipage}
            \begin{minipage}[t]{0.24\linewidth}\centering \textbf{Credits}\end{minipage}
            \begin{minipage}[t]{0.24\linewidth}\centering \textbf{Gamma}\end{minipage}
            \begin{minipage}[t]{0.24\linewidth}\centering \textbf{POPSIM}\end{minipage}
        \end{minipage}
        \begin{minipage}[t]{\textwidth}
        \centering
        \begin{minipage}[t]{0.24\linewidth}
            \centering
            \includegraphics[width=0.975\textwidth, height=2.88cm]{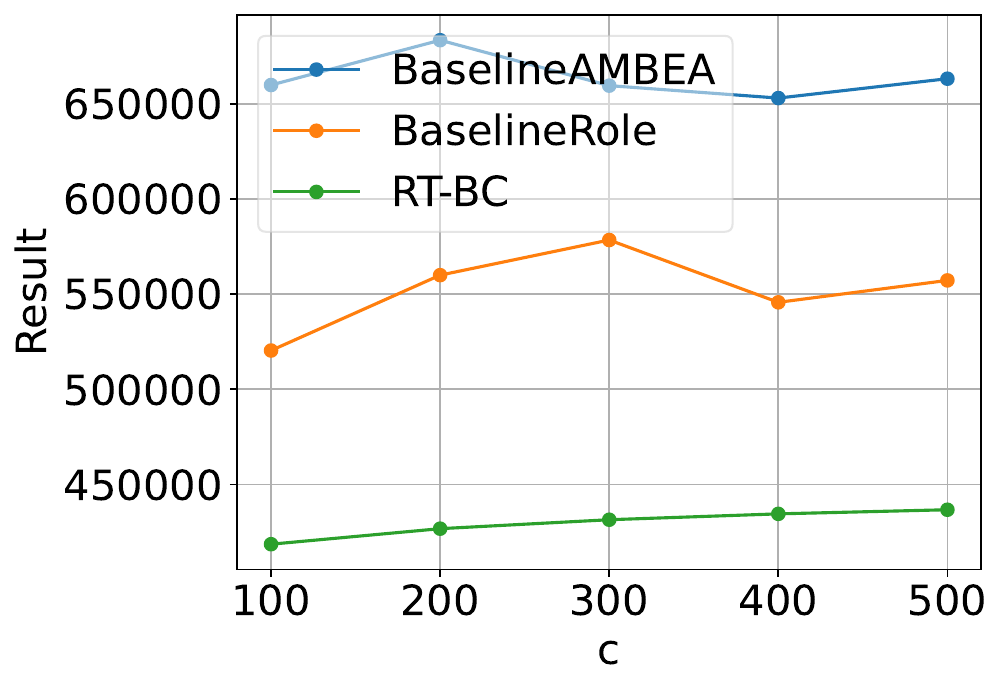}
            \vspace{-2.5em}
        \end{minipage}
        \begin{minipage}[t]{0.24\linewidth}
            \centering
            \includegraphics[width=\textwidth]{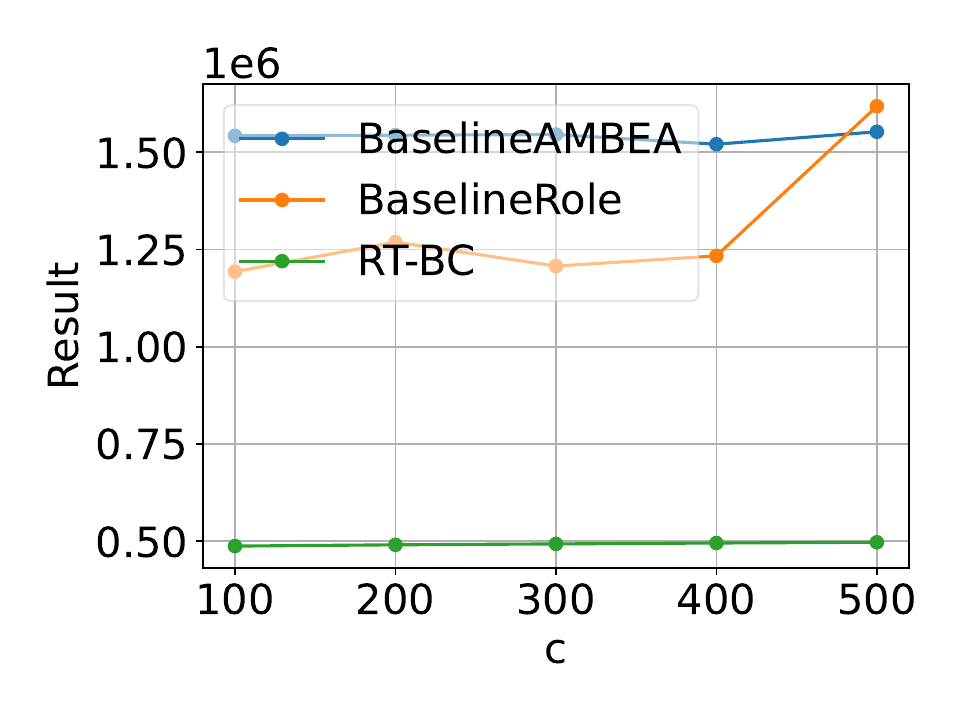}
            \vspace{-2.5em}
        \end{minipage}
        \begin{minipage}[t]{0.24\linewidth}
            \centering
            \includegraphics[width=\textwidth]{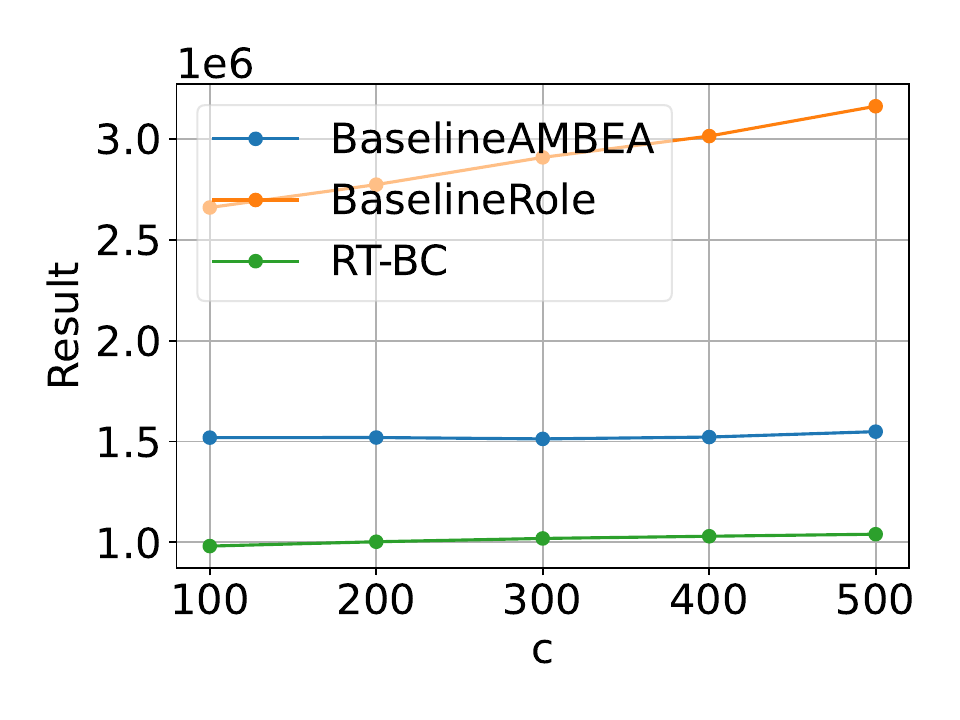}
            \vspace{-2.5em}
        \end{minipage}
        \begin{minipage}[t]{0.24\linewidth}
            \centering
            \includegraphics[width=\textwidth]{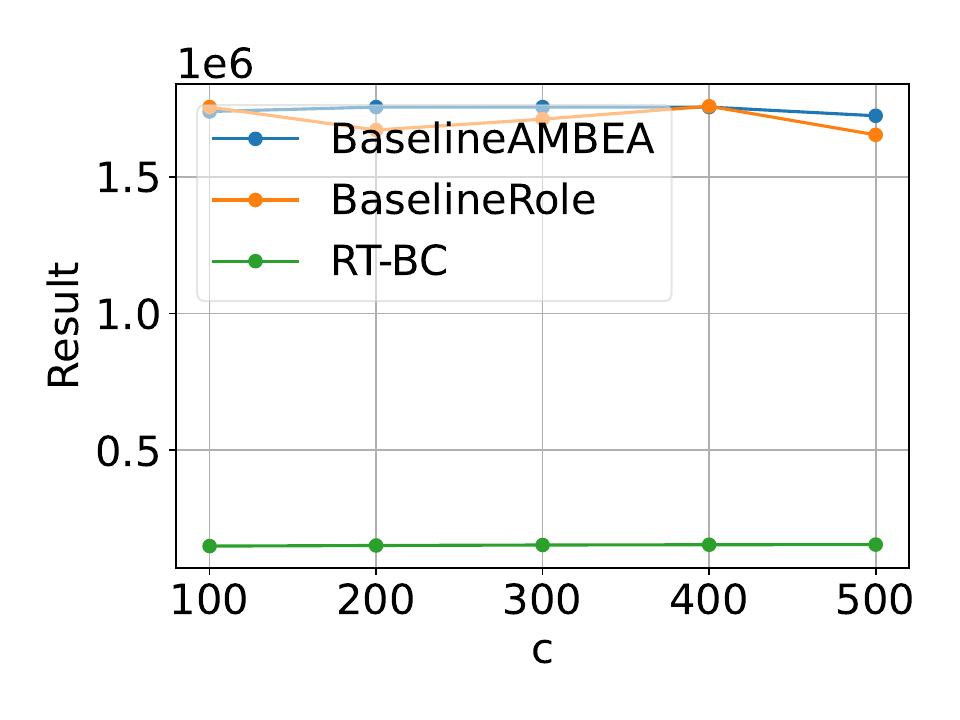}
            \vspace{-2.5em}
        \end{minipage}
        \vspace{-1em}
        \caption*{(a) Size of representation}
    \end{minipage}
        \begin{minipage}[t]{\textwidth}
            \centering
            \begin{minipage}[t]{0.24\linewidth}
                \centering
                \includegraphics[width=\textwidth]{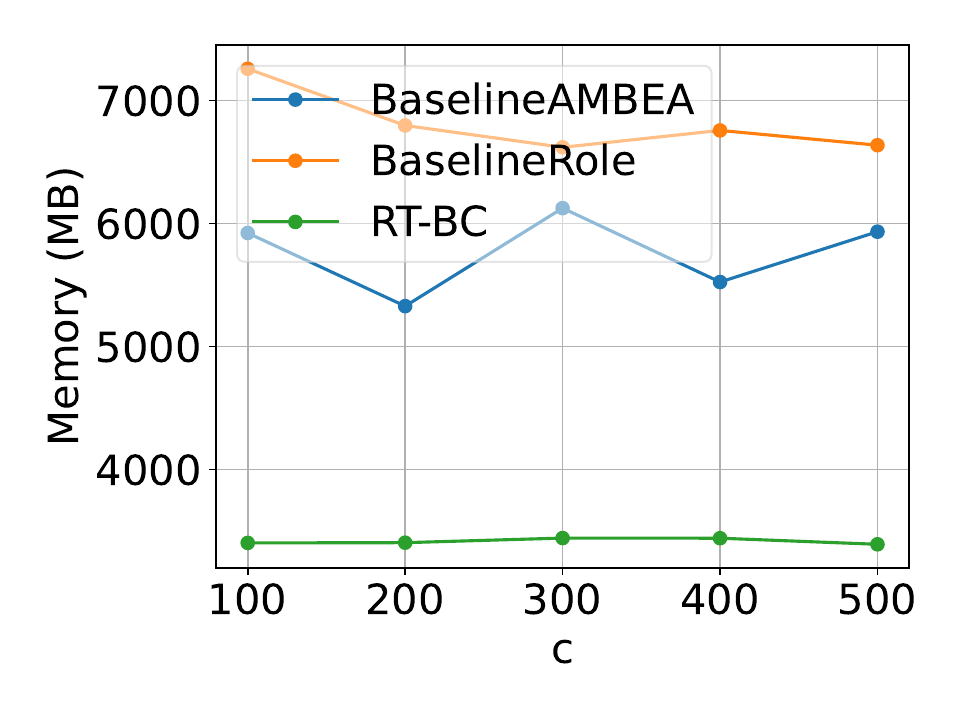}
                \vspace{-2.5em}
            \end{minipage}
            \begin{minipage}[t]{0.24\linewidth}
                \centering
                \includegraphics[width=\textwidth]{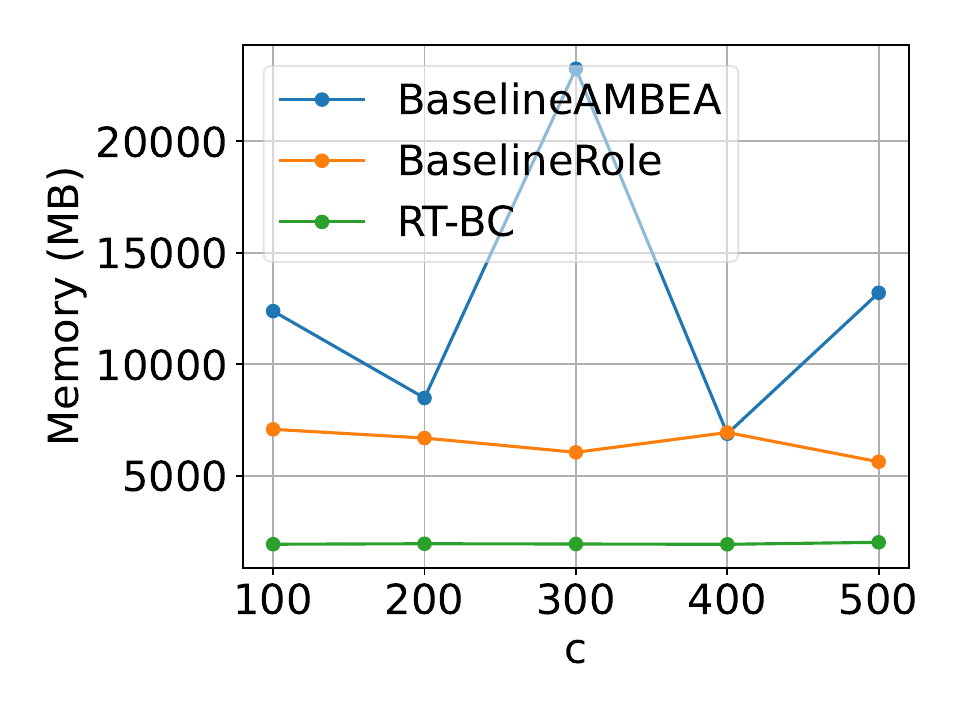}
                \vspace{-2.5em}
            \end{minipage}
            \begin{minipage}[t]{0.24\linewidth}
                \centering
                \includegraphics[width=\textwidth]{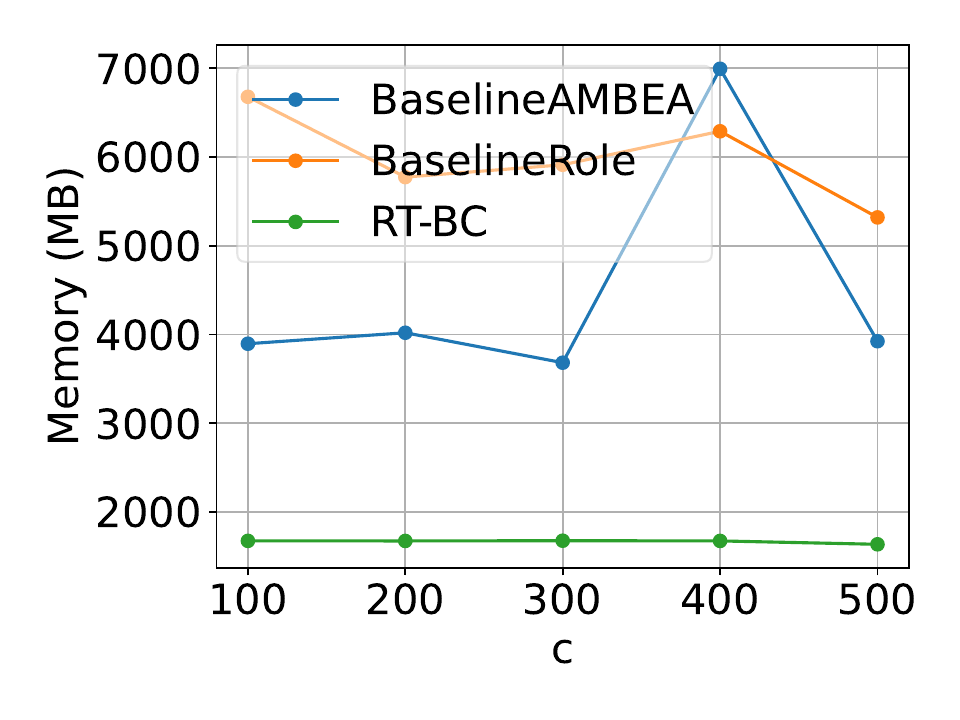}
                \vspace{-2.5em}
            \end{minipage}
            \begin{minipage}[t]{0.24\linewidth}
                \centering
                \includegraphics[width=\textwidth]{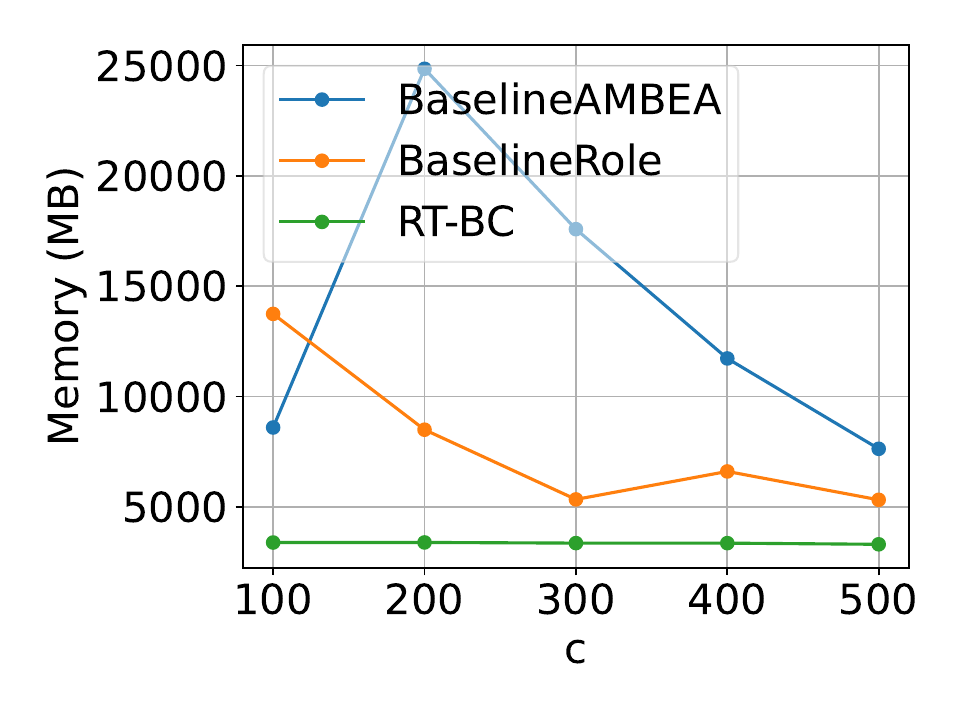}
                \vspace{-2.5em}
            \end{minipage}
            \vspace{-1em}
            \caption*{(b) Peak memory in MB}
        \end{minipage}
        \caption{Experimental results for $\ptwo$ in $\ell_\infty^d$ across all datasets}
        \label{fig:gwbclinf}
    \end{minipage}
\end{figure*}

\begin{table*}[t]
\centering
\resizebox{\textwidth}{!}{%
\begin{tabular}{|c|c|c|c|c|c|c|c|c|}
\hline
Baseline/Dataset
& Adults $\delta=0.2$
& Gamma $\delta=0.16$
& POPSIM-S
& POPSIM-D
& MovieLens100K
& MovieLens1M
& WorldCities
&  162bit\\
\hline

\baseone
& \num{4668513} & \num{738232} & \num{7038915} & \num{108138671} & \num{188227} & \num{2336652} & \num{13429} & \num{52577}\\
\hline

\basetwo
& \num{3024324} & \num{551686} & \num{5387872} & \num{122279423} & \num{108014} & \num{1079149} & \num{7427} & \num{51368}\\
\hline

\basethree
& \num{5242099} & \num{1082100} & \num{11250790} & \num{245761564} & \num{200000} & \num{1908017} & \num{15036} & \num{74236}\\
\hline

Original
& \num{6032686} & \num{1082100} & \num{11250790} & \num{245761564} & \num{200000} & \num{2000000} & \num{15036} & \num{74236}\\
\hline

\ours
& \textbf{\num{1012402}} & \textbf{\num{379675}} & \textbf{\num{3396491}} & \textbf{\num{904425}} & \textbf{\num{101152}} & \textbf{\num{952869}} & \textbf{\num{6034}} & \textbf{\num{40590}}\\
\hline

\end{tabular}
}
\caption{Representation size produced by each baseline for $\pone$ on each dataset.}
\vspace{-1em}
\label{tab:newresults}
\end{table*}

\vspace{-1em}
\subsection{Experimental results for $\pone$} \label{exp:results}
\vspace{-0.3em}
In this section, we first provide a comprehensive comparison of our algorithm (\ours) against \baseone\ and \basetwo\ on all versions of \textbf{Adults}, \textbf{Credit}, \textbf{Gamma}, and \textbf{POPSIM}, for different values of $\delta$, as reported in Table~\ref{tab:datasetsell2}. For each setting, we report the size of the returned representation, the peak memory usage during execution, and the running time of each algorithm. We omit \basethree\ from these experiments because, in all cases, it returns representations that are substantially larger than those of the other baselines. We discuss the performance of \basethree\ separately later.

Experiments are conducted under both $\ell_2$ and $\ell_\infty$ metrics. Since the observed trends are consistent across the two settings, we present the results for the $\ell_2$ metric here and defer the $\ell_\infty$ results to 
Appendix~\ref{appndx:exp}.
As discussed above, to ensure a fair comparison, we apply an early-stopping criterion that allows the baselines to run slightly longer than our algorithm, as illustrated in Figure~\ref{fig:wbcl2}(c). 
This enables a direct comparison of the quality of the resulting representations, as well as the peak memory consumption, under comparable time budgets.

As shown in Figure~\ref{fig:wbcl2}(a), our algorithm \ours\ consistently outperforms both \baseone\ and \basetwo\ in terms of representation size across all datasets and for all values of $\delta$. The only exception occurs for the \textbf{Adults} dataset with $\delta=0.05$, where all algorithms produce representations of comparable size. In this instance, the corresponding bipartite graph is sparse, with a low average degree leaving limited room for improvement.
As the graphs become larger and denser, however, the advantage of \ours\ becomes increasingly pronounced. For example, on real-world datasets, \ours\ produces representations that are up to four times smaller than those returned by both baselines (e.g., on the \textbf{Credit} dataset). The improvement is even more substantial on the semi-synthetic \textbf{POPSIM} dataset, where the representation produced by \ours\ is approximately $\mathbf{26}$ times smaller than those produced by the baselines. 
This behavior can be attributed to the low intrinsic dimensionality of the \textbf{POPSIM} dataset and the highly geometric manner in which it is generated. The tree data structure employed by \ours\
performs particularly well in low-dimensional settings. By exploring the canonical hyper-rectangles, the algorithm is able to identify regions $x$ whose corresponding subgraphs $G_x$ capture near-optimal bicliques, leading to significantly more compact representations.

Next, in Figure~\ref{fig:wbcl2}(b), we compare the peak memory usage across all datasets. \ours\ consistently uses significantly less memory than both baselines. In some cases, our method requires up to three times less memory than the baselines (e.g., \textbf{Gamma} dataset).
These results are consistent with our theoretical analysis. As shown in Theorem~\ref{thm:approxpractical}, our algorithm uses asymptotically only linear space with respect to the size of the input graph. In contrast, the memory consumption of the baselines is dominated by the total size of the candidate bicliques they generate, which is typically very large.

Finally, \baseone\ and \basetwo\ exhibit similar behavior, producing biclique edge covers of comparable size. In some cases, \basetwo\ outperforms \baseone\ due to its parallel C++ implementation of the candidate biclique generation step, which allows it to explore more candidates within the same time budget.

\vspace{-0.3em}
\paragraph{Comprehensive comparison of all baselines on small, large, sparse, and dense bipartite graphs}
Table~\ref{tab:newresults} reports the representation size returned by \ours\ and all baselines (\baseone, \basetwo, and \basethree) across a diverse collection of bipartite graphs; for reference, we also report the size of the original edge-based representation, obtained without applying any compression algorithm. To ensure a fair comparison, we configure all baselines so that their running times are roughly comparable to that of \ours. The datasets in this comparison consist of two representative geometric bipartite graphs, \textbf{Adults} and \textbf{Gamma}; two variants of \textbf{POPSIM}, capturing both sparse and very large dense instances; and four real, non-geometric bipartite graphs, namely \textbf{MovieLens100K}, \textbf{MovieLens1M}, \textbf{WorldCities}, and \textbf{162bit}, after first embedding them into a geometric space. Across all real bipartite graphs, \ours\ consistently produces smaller representations than all compared baselines. Finally, \basethree\ often returns a representation whose size is close to, or even the same as, that of the original graph, suggesting that biclique partitioning is not an effective approach for $\pone$ on these instances. There are two main reasons for this. First, as discussed above, constructing a biclique partition is substantially more restrictive than constructing a biclique cover. Second, the implementation of \basethree\ from~\cite{chavan2026speeding} is restricted to graphs satisfying $m \geq 2\cdot (\max\{|\V|,|\U|\})^{3/2}$, a condition that does not hold for many of the graphs in our evaluation.

\vspace{-1em}
\subsection{Experimental results for $\ptwo$} \label{exp:Gresults}

In Figure~\ref{fig:gwbclinf}, we report the representation size and peak memory usage of our algorithm and the baselines for the $\ptwo$ problem under the $\ell_\infty$ metric, for varying values of $c$ and a fixed value of $\delta$. Specifically, we use $\delta=0.1$ for \textbf{Adult}, $\delta=0.002$ for \textbf{Credit}, $\delta=0.13$ for \textbf{Gamma}, and $\delta=0.04$ for \textbf{POPSIM}. Fixing $\delta$ isolates the effect of $c$ on the quality and memory footprint of the resulting representations.
As in the $\pone$ experiments, all algorithms are evaluated under comparable time budgets to ensure a fair comparison. Since the running-time trends are consistent with those observed for $\pone$, we defer the detailed timing results to 
Appendix~\ref{appndx:exp}.

Consistent with our earlier observations, our algorithm \ours\ outperforms both baselines across all evaluation metrics, consistently producing smaller representations while using less memory. Notably, the performance gap between our method and the baselines becomes even more pronounced in the $\ptwo$ setting. For example, on the \textbf{Gamma} dataset with $\delta=0.13$, under the $\pone$ objective \ours\ produces a representation that is 2-times smaller than those returned by the baselines. In contrast, for the $\ptwo$ problem under the same value of $\delta$ and with $c=500$, \ours\ returns a representation that is 3-times smaller than the baselines. 
Similar trends are observed for the remaining bipartite graphs, including MovieLens and the two matrix datasets; therefore, we omit them from the figures.

We observe occasional memory spikes for $\basetwo$. This behavior is due to the combination of early stopping and the randomized vertex ordering used at each run to avoid repeatedly exploring the same regions of the graph.

%% file: relwork.tex
\vspace{-1em}
\section{Related work}
\label{sec:rel}
\vspace{-0.4em}

Lossless compact representations for graphs have been studied extensively~\cite{bannach2024faster,buehrer2008scalable,dhulipala2016compressing,rossi2018graphzip,fan2021making}. 
While these techniques often achieve high compression ratios, the resulting representations are typically not directly usable by general-purpose graph algorithms and may require substantial reconstruction overhead.

Feder and Motwani~\cite{FEDERcompress} introduced a graph compression method that preserves path information by partitioning edges into bicliques; Chavan et al.~\cite{chavan2026speeding} later improved its running time. In contrast, we study biclique coverings rather than partitions and explicitly formulate the optimization problem of minimizing the representation size. Unlike prior heuristic approaches, we provide approximation algorithms with provable guarantees for meaningful graph classes.

The $\pone$ problem has been studied in theory~\cite{tarjan1975complexity,chung1983decomposition,davoodi2016edge,tuza1984covering}. This body of work primarily focuses on establishing upper and lower bounds on the cost of an optimal solution.
The decision version was later shown to be $\mathsf{NP}$-Complete in~\cite{Drawing-Hardness,edge-RMP-hardness}. For an overview of $\pone$ and its data mining applications, see~\cite{role-mining-survey}.
To the best of our knowledge, no approximation algorithms or inapproximability results are known. Existing approaches are limited to heuristics without theoretical performance guarantees~\cite{role-mining-survey,newbery1989edge,edge-RMP-hardness}.

A related and well-studied variant is the bipartite dimension problem where the goal is to find the minimum number of bicliques whose union of edges covers all edges of the graph. It is known to be $\mathsf{NP}$-hard~\cite{orlin1977contentment, gottlieb2005efficient} and admits strong inapproximability results~\cite{simon1990approximate, gruber2007inapproximability, chalermsook2014nearly, feige1998zero, khot2006better, zuckerman2006linear}.
On the positive side, the problem is fixed-parameter tractable~\cite{fleischner2009covering, nor2012mod, chandran2017parameterized}
and exact polynomial-time algorithms are known for some graph classes
~\cite{amilhastre1998complexity, lubiw1991weighted, muller1996edge}. 

A variety of problems have been studied on $\delta$-disk graphs, including the densest subgraph problem~\cite{har2025approximating}, the maximum clique problem~\cite{clark1990unit}, and others~\cite{agarwal2014near, chekuri2020fast}. 
Recently, Lahn et al.~\cite{lahn2021faster} designed an efficient algorithm for the maximum cardinality bipartite matching problem on $\delta$-disk bipartite graphs. 

%% file: conclusion.tex
\vspace{-1em}
\section{Future Work}
\vspace{-0.5em}

Several directions remain for future work. For large $c$ the $\ptwo$ problem reduces to the bipartite dimension problem, which is $\mathsf{NP}$-hard to approximate~\cite{simon1990approximate, gruber2007inapproximability}. It is an open question whether similar inapproximability results hold for smaller values of $c$. Another direction is to design approximation or exact algorithms for $\pone$ and $\ptwo$ on graph classes beyond intersection bipartite graphs. Finally, it would be interesting to integrate our representations into graph compression pipelines and evaluate the resulting compression ratios on real-world graphs.


%% file: appendix.tex
\appendix

\section{Missing proofs from Section~\ref{sec:hardnessnew}}
\label{appndx:NPhard}

\begin{proof}[Proof of Lemma~\ref{lem:gadget-biclique-count}]
An unselected vertex $w_i$ requires one biclique containing $z_i$,
whereas a selected vertex requires two such bicliques. Moreover, a
single biclique cannot contain both $z_i$ and $z_j$ for $i\neq j$,
since
\[
N(z_i)\cap N(z_j)=\emptyset.
\]
Hence, if $k'$ vertices are selected, the vertex gadgets require at
least
\[
(n-k')+2k'=n+k'
\]
bicliques.

Now consider an edge $e=\{w_i,w_j\}\in F$. Consider the four edges
\[
(p_e^1,u_e^1),\qquad
(p_e^2,u_e^2),\qquad
(p_e^3,u_e^3),\qquad
(p_e^4,u_e^4).
\]
No two of these edges can belong to the same biclique. Indeed, for
any two distinct edges in this collection, at least one of the two
corresponding cross edges is absent from $\G$. Therefore, at least
four distinct bicliques are necessary to cover them.

Each of these bicliques contains one of the private vertices
$p_e^1,\ldots,p_e^4$. Consequently, none of them is a biclique
already counted for a vertex gadget, since no private vertex $p_e^r$
is adjacent to any $z_i$. Furthermore, a biclique containing a
private vertex of the gadget corresponding to $e$ cannot contain a
private vertex of a different edge gadget. Thus, the four bicliques
can be charged uniquely to the gadget corresponding to $e$.

It remains to consider the case in which neither $w_i$ nor $w_j$
is selected. Consider the following five edges:
\[
(x_i,u_e^1),\qquad
(y_j,u_e^2),\qquad
(y_i,u_e^3),\qquad
(x_j,u_e^4),\qquad
(p_e^1,u_e^5).
\]
No two of these five edges can be contained in the same biclique.
For two of the first four edges this follows because each of
$x_i,y_j,y_i,x_j$ is adjacent to exactly its corresponding vertex
among $u_e^1,u_e^2,u_e^3,u_e^4$ within this gadget. Moreover,
$u_e^5$ is adjacent to none of $x_i,y_j,y_i,x_j$, so the last edge
cannot share a biclique with any of the first four.

Hence, at least five distinct bicliques are needed to cover these
five edges. We claim that each of these five bicliques contains a
private vertex of the gadget $e$. This is immediate for the
biclique covering $(p_e^1,u_e^5)$. Consider, for example, a
biclique covering $(x_i,u_e^1)$. Since $w_i$ is not selected, by
the normal form described above there is no biclique whose
$\V$-side is exactly $\{x_i\}$. Therefore, such a biclique must
contain another vertex on its $\V$-side. The only other neighbors
of $u_e^1$ are $p_e^1$ and $p_e^2$, so the biclique must contain
one of these private vertices. The same argument applies to
$(y_j,u_e^2)$, $(y_i,u_e^3)$, and $(x_j,u_e^4)$.

Thus, all five required bicliques contain private vertices of the
gadget $e$. They are therefore distinct from the vertex-gadget
bicliques and cannot simultaneously be charged to any other edge
gadget. Hence, the lower bounds are additive over the edge gadgets.
\end{proof}

\begin{proof}[Proof of Lemma~\ref{lem:gwbc-hardness}]
We first prove the forward direction. Suppose that $C\subseteq W$
is a vertex cover of $J$ of size $k\leq K$.

For every vertex $w_i\notin C$, we create the biclique
\[
\{x_i,y_i\}\times\{z_i\}.
\]
For every vertex $w_i\in C$, we instead create the two bicliques
\[
\{x_i\}\times\{z_i\}
\qquad\text{and}\qquad
\{y_i\}\times\{z_i\}.
\]
These bicliques cover all edges incident to the vertices $z_i$.
They use
\[
(n-k)+2k=n+k
\]
bicliques and initially contribute
\[
3(n-k)+4k=3n+k
\]
to the total biclique size.

We next cover the edges of every edge gadget. Consider an edge
$e=\{w_i,w_j\}\in F$, where $i<j$. Since $C$ is a vertex cover,
at least one of $w_i,w_j$ belongs to $C$.

Suppose first that $w_i\in C$. We add $u_e^1$ to the $\U$-side
of the existing biclique whose $\V$-side is $\{x_i\}$, and add
$u_e^3$ to the $\U$-side of the existing biclique whose $\V$-side
is $\{y_i\}$. We then introduce the following four bicliques:
\[
\{p_e^1,p_e^2\}\times\{u_e^1,u_e^5\},
\]
\[
\{p_e^3,p_e^4\}\times\{u_e^3,u_e^5\},
\]
\[
\{y_j,p_e^2,p_e^3\}\times\{u_e^2\},
\]
and
\[
\{x_j,p_e^4,p_e^1\}\times\{u_e^4\}.
\]
These bicliques, together with the two existing bicliques containing
$x_i$ and $y_i$, cover all edges of the gadget corresponding to
$e$.

If instead $w_j\in C$, we use the symmetric construction. We add
$u_e^2$ to the $\U$-side of the existing biclique whose $\V$-side
is $\{y_j\}$ and add $u_e^4$ to the $\U$-side of the existing
biclique whose $\V$-side is $\{x_j\}$. We then introduce
\[
\{x_i,p_e^1,p_e^2\}\times\{u_e^1\},
\]
\[
\{y_i,p_e^3,p_e^4\}\times\{u_e^3\},
\]
\[
\{p_e^2,p_e^3\}\times\{u_e^2,u_e^5\},
\]
and
\[
\{p_e^4,p_e^1\}\times\{u_e^4,u_e^5\}.
\]

Therefore, every edge of $J$ contributes exactly four new
bicliques. The four new bicliques have total size $16$, while the
two additions of gadget vertices to already existing bicliques
increase the total biclique size by $2$. Hence, every edge gadget
contributes exactly $18$ to the total biclique size.

Consequently, the resulting biclique edge cover $\edgecover$
satisfies
\[
|\edgecover|=n+k+4m
\]
and
\[
\mu(\edgecover)=3n+k+18m.
\]
It follows that
\begin{align*}
\sigma(\edgecover)
&=\mu(\edgecover)+c|\edgecover|\\
&=3n+k+18m+c(n+k+4m)\\
&=3n+18m+c(n+4m)+(1+c)k\\
&\leq 3n+18m+c(n+4m)+(1+c)K\\
&=\tau_c.
\end{align*}
This proves the forward direction.

For the reverse direction, suppose that $\G$ admits a biclique edge
cover $\edgecover$ satisfying
\[
\sigma(\edgecover)\leq\tau_c.
\]
By applying the normalization described above, we may assume that
$\edgecover$ is in normal form without increasing its cost.

Let $C'\subseteq W$ be the set of selected vertices, and let
\[
k'=|C'|.
\]
Let $F'\subseteq F$ denote the set of edges having at least one
endpoint in $C'$, and define
\[
q=|F\setminus F'|.
\]
Thus, $q$ is the number of edges of $J$ having neither endpoint
selected.

By Lemma~\ref{lem:gadget-biclique-count}, the vertex gadgets
require at least $n+k'$ bicliques. Each of the $m-q$ edge gadgets
corresponding to an edge in $F'$ requires at least four additional
bicliques, whereas each of the $q$ remaining edge gadgets requires
at least five additional bicliques. Hence,
\begin{align}
|\edgecover|
&\geq n+k'+4(m-q)+5q \notag\\
&=n+k'+4m+q.
\label{eq:gwbc-num-bicliques}
\end{align}

Similarly, by the size lower bounds from
\cite{edge-RMP-hardness}, the vertex gadgets contribute at least
$3n+k'$ to the total biclique size. Each edge gadget corresponding
to an edge in $F'$ contributes at least $18$, whereas every edge
gadget corresponding to an edge in $F\setminus F'$ contributes at
least $20$. Therefore,
\begin{align}
\mu(\edgecover)
&\geq 3n+k'+18(m-q)+20q \notag\\
&=3n+k'+18m+2q.
\label{eq:gwbc-total-size}
\end{align}

Combining Equations~\eqref{eq:gwbc-num-bicliques}
and~\eqref{eq:gwbc-total-size}, and using $c\geq0$, we obtain
\begin{align*}
\sigma(\edgecover)
&=\mu(\edgecover)+c|\edgecover|\\
&\geq
3n+k'+18m+2q+c(n+k'+4m+q)\\
&=
3n+18m+c(n+4m)+(1+c)k'+(2+c)q.
\end{align*}
Since $\sigma(\edgecover)\leq\tau_c$, it follows from
the definition of $\tau_c$ that
\[
(1+c)k'+(2+c)q\leq(1+c)K.
\]
Since $c\geq0$, we have $2+c\geq1+c$, and therefore
\[
(1+c)(k'+q)
\leq
(1+c)k'+(2+c)q
\leq
(1+c)K.
\]
Since $1+c>0$, we conclude that
\[
k'+q\leq K.
\]

Starting from $C'$, for every edge in $F\setminus F'$ we add an
arbitrary endpoint of that edge. Since there are $q$ such edges,
the resulting set $C$ is a vertex cover of $J$ and satisfies
\[
|C|\leq k'+q\leq K.
\]
Thus, $J$ admits a vertex cover of size at most $K$.
\end{proof}

\begin{proof}[Proof of Theorem~\ref{thm:gwbc-hardness}]
First, dGWBC belongs to NP. Given a collection of bicliques, we can
verify in polynomial time that every selected subgraph is a
biclique, that all edges of the input graph are covered, and that
the total cost is at most the prescribed threshold. Moreover, we
may assume that a biclique edge cover contains at most $|\E|$
bicliques. Indeed, if a biclique does not contain an edge that is
uncovered by all other bicliques, then that biclique can be removed
without increasing the cost. Hence, a polynomial-size certificate
always exists.

For NP-hardness, let $c\geq0$ be arbitrary. Given an instance
$(J,K)$ of Vertex Cover, we construct the graph $\G$ and the
threshold
\[
\tau_c=3n+K+18m+c(n+K+4m)
\]
as described above. The graph $\G$ contains $O(n+m)$ vertices and
edges, and its size is independent of the value of $c$. Moreover,
$\tau_c$ can be computed in polynomial time.

By Lemma~\ref{lem:gwbc-hardness}, $J$ has a vertex cover of size
at most $K$ if and only if $\G$ admits a biclique edge cover
$\edgecover$ satisfying
\[
\sigma(\edgecover)\leq\tau_c.
\]
Therefore, dGWBC is NP-hard for any rational value $c\geq0$.
Together with membership in NP, this proves the theorem.
\end{proof}

\section{Simpler proof that $\ptwo$ problem is $\mathsf{NP}$-Complete for constant $c$}
\label{sec:gwbc-hardold}

It is known that covering the edges of a bipartite graph with the minimum number of bicliques (the bipartite dimension problem) is NP-complete~\cite{orlin1977contentment}. Similarly, minimizing the total size of the bicliques in a cover (the $\dpone$ problem) is also NP-complete~\cite{edge-RMP-hardness}. 
The $\ptwo$ problem generalizes both objectives by jointly accounting for the number and total size of the selected bicliques. Indeed, when $c=0$, $\ptwo$ reduces to $\pone$, while for sufficiently large $c$ (e.g., $c\ge n^6$), it reduces to the bipartite dimension problem (the term $c\kappa$ dominates the weight term $\sum_{\cover_i\in\edgecover}|\nodes(\cover_i)|$). Since both extremes are NP-complete, we show that for any fixed integer value of $c$, the decision version $\dptwo$ is also NP-complete.


To prove the hardness of $\dptwo$, we present a polynomial-time reduction from the $\dpone$ problem. Let $\G(\V, \U, \E)$ be a bipartite graph and let $\tau$ be a positive integer representing the target maximum weight for the $\dpone$ problem over $\G$. We construct a corresponding graph and target cost for the $\dptwo$ problem as follows.
Let $m = |\E|$ be the number of edges in $\G$. We choose an integer $k$ such that $k > cm$. We define the graph $\G'$ to be a $k$-blow-up of $\G$, denoted as $k\G$. For every vertex $x \in \nodes(\G)$, we create $k$ copies in $k\G$, denoted by $\{x_1, x_2, \dots, x_k\}$. If $(u, v) \in \E$, we connect every copy of $u$ to every copy of $v$ in $k\G$. Thus, if $u \in \V$, $v \in \U$, and $(u, v) \in \E$, then for all $i, j \in [k]$, there is an edge between $u_i$ and $v_j$ in $k\G$. More formally, we set $\nodes(k\G) = \{x_i \mid x \in \nodes(\G), i \in [k]\}$ and $\edges(k\G) = \{(u_i, v_j) \mid (u,v) \in \edges(\G), i,j \in [k]\}$. An example of this construction is shown in Figure~\ref{fig:gwbc_reduction_example}.

We set the target cost for the $\dptwo$ problem on $k\G$ to be $\tau' = k\tau + cm$. We now show that a solution to $\pone$ on $\G$ with cost at most $\tau$ corresponds to a solution to $\ptwo$ on $k\G$ with cost at most $\tau'$, and vice versa.

\input{bigraph}

\begin{lemma}
\label{lemma:forward}
There exists a biclique edge cover $\edgecover$ for $\G$ such that $\mu(\edgecover) \leq \tau$, if and only if there exists a biclique edge cover $\edgecover'$ for $k\G$ such that $\sigma(\edgecover') \leq k\tau + cm$.
\end{lemma}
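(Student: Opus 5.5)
The plan is to prove the two directions separately. In the forward direction I blow up each biclique of $\edgecover$ directly. In the reverse direction I use an averaging argument over the $k$ ``layers'' of copies, which is where the choice $k>cm$ enters.

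\emph{Forward direction.} Let $\edgecover$ be a biclique edge cover of $\G$ with $\mu(\edgecover)\le\tau$.
\begin{itemize}
\item First I normalize so that $|\edgecover|\le m$. Repeatedly discard any biclique all of whose edges are covered by the other bicliques. This does not increase $\mu$. Afterwards every remaining biclique covers some edge that no other biclique covers, so there are at most $m$ bicliques.
\item For each $\cover_i\in\edgecover$, define $k\cover_i$ in $k\G$ with $\V$-side $\{v_a : v\in\Vnodes(\cover_i), a\in[k]\}$ and $\U$-side $\{u_b : u\in\Unodes(\cover_i), b\in[k]\}$.
\item $k\cover_i$ is a biclique of $k\G$: every pair $(v,u)$ with $v\in\Vnodes(\cover_i)$, $u\in\Unodes(\cover_i)$ is an edge of $\G$, hence every $(v_a,u_b)$ is an edge of $k\G$.
\item These bicliques cover $k\G$: an edge $(v_a,u_b)$ of $k\G$ comes from an edge $(v,u)$ of $\G$, which lies in some $\cover_i$, and then $(v_a,u_b)\in\edges(k\cover_i)$.
\item Each $k\cover_i$ has $k|\nodes(\cover_i)|$ vertices. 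Hence $\sigma(\edgecover')=k\mu(\edgecover)+c|\edgecover|\le k\tau+cm$.
\end{itemize}

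\emph{Reverse direction.} Let $\edgecover'$ be a biclique edge cover of $k\G$ with $\sigma(\edgecover')\le k\tau+cm$. Since $c\ge 0$, this gives $\mu(\edgecover')\le k\tau+cm<k(\tau+1)$, using $k>cm$.
\begin{itemize}
\item For each layer $j\in[k]$ and each $\cover'\in\edgecover'$, let $\cover'_j$ be the subgraph of $\G$ with $\Vnodes(\cover'_j)=\{v : v_j\in\Vnodes(\cover')\}$ and $\Unodes(\cover'_j)=\{u : u_j\in\Unodes(\cover')\}$, together with all edges of $\G$ between them.
\item $\cover'_j$ is a biclique of $\G$. For $v,u$ in its two sides, $(v_j,u_j)$ is an edge of the biclique $\cover'$. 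By the definition of $k\G$, $(v_j,u_j)\in\edges(k\G)$ implies $(v,u)\in\E$.
\item The family $\edgecover_j=\{\cover'_j\}$, with empty pieces discarded, covers $\G$. For $(v,u)\in\E$, the edge $(v_j,u_j)$ is covered by some $\cover'\in\edgecover'$, and then $(v,u)\in\edges(\cover'_j)$.
\item Each vertex $x_j$ of $\cover'$ contributes exactly one vertex to exactly one layer $j$. Hence $\sum_{j=1}^k\mu(\edgecover_j)=\mu(\edgecover')<k(\tau+1)$.
\item By averaging, some $j$ satisfies $\mu(\edgecover_j)<\tau+1$. Since $\mu$ is an integer, $\mu(\edgecover_j)\le\tau$, which is the required cover of $\G$.
\end{itemize}

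\emph{Main obstacle.} The delicate step is the reverse direction. A biclique of $k\G$ may contain several copies of the same vertex and may mix layers, so it does not simply ``project'' to a biclique of $\G$ at no loss. Restricting to a single layer $j$ avoids any double counting. The additive slack $cm$ must then be absorbed, which is exactly what $k>cm$ and integrality of $\mu$ do. I will also check that discarding empty restrictions keeps the bicliques valid and does not affect the cover.
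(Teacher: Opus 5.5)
Your proof is correct, and your forward direction is the paper's: blow up each biclique. Your pruning argument is in fact the right justification of $|\edgecover|\le m$. The paper's remark about the trivial single-edge cover is looser, since that cover need not satisfy $\mu\le\tau$.

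The reverse direction is where you differ. Both arguments average over the $k$ copies and absorb the slack $cm$ using $k>cm$ and integrality. The paper, however, chooses the copy \emph{per vertex}. For each $x$ it takes the copy $x_{i(x)}$ contained in the fewest bicliques of $\edgecover'$, and maps each $S'\in\edgecover'$ to $\{x : x_{i(x)}\in\nodes(S')\}$. Write $n_{x_j}$ for the number of bicliques of $\edgecover'$ containing $x_j$. The paper's construction then gives one explicit cover of cost at most $\sum_x\min_j n_{x_j}\le\frac1k\mu(\edgecover')$. This bound is at least as good as that of your best layer, since $\sum_x\min_j n_{x_j}\le\min_j\sum_x n_{x_j}$. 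The price is that its biclique and covering properties rely on the cross-layer edges $(u_{i(u)},v_{i(v)})$ of $k\G$ with $i(u)\neq i(v)$.

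You instead restrict everything to one common layer $j$, obtain the exact identity $\sum_j\mu(\edgecover_j)=\mu(\edgecover')$, and pick a good $j$ by averaging. This uses only the diagonal edges $(v_j,u_j)$. In your route, the complete connections between copies are needed only in the forward direction, to keep the number of bicliques equal to $|\edgecover|$. Either route suffices for the lemma.
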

\begin{proof}
We begin by showing the first direction.
Let $\edgecover = \{\cover_1, \dots, \cover_r\}$ be a biclique edge cover of $\G$ satisfying $\mu(\edgecover) = \sum_{j=1}^r |\nodes(\cover_j)| \leq \tau$. We construct a cover $\edgecover'$ for $k\G$ by expanding each biclique in $\edgecover$. For each $\cover_j \in \edgecover$, we define a corresponding subgraph $\cover^k_j$ in $k\G$ such that $\nodes(\cover^k_j) = \{v_i \mid v \in \nodes(\cover_j), i \in [k]\}$. Since $\cover_j$ is a biclique in $\G$, it follows from the construction of $k\G$ that for every $v \in \Vnodes(\cover_j)$ and $u \in \Unodes(\cover_j)$, all edges between copies of $u$ and copies of $v$ exist in $k\G$. Therefore, $\cover^k_j$ is a biclique in $k\G$. 

The collection $\edgecover' = \{\cover^k_1, \dots, \cover^k_r\}$ forms a valid biclique edge cover for $k\G$ since every edge $(u_i, v_j)$ in $\edges(k\G)$ corresponds to an edge $(u, v)$ in $\edges(\G)$, which is covered by some $\cover_l$. Therefore, the edge $(u_i, v_j)$ is covered by $\cover^k_l \in \edgecover'$.
The size of each blown-up biclique is $|\nodes(\cover^k_j)| = k \cdot |\nodes(\cover_j)|$, by definition. Therefore, the total weight portion of the cost of the cover $\edgecover'$ is $\sum_{j=1}^r |\nodes(\cover^k_j)| = k \cdot \mu(\edgecover) \leq k\tau$.
Regarding the constant cost component of $\sigma$, note that $|\edgecover'| = |\edgecover| = r$. Since a trivial edge cover consisting of individual edges has size $m$ (since $|\E|=m$), we can assume without loss of generality that $r \leq m$. Thus, the total cost for $\edgecover'$ is:
\begin{equation*}
\sigma(\edgecover') = c|\edgecover'| + \sum_{j=1}^r |\nodes(\cover^k_j)| \leq cm + k\tau.
\end{equation*}
This completes the first part of the lemma.

Next, we show the other direction. Let $\edgecover'$ be a biclique edge cover of $k\G$ satisfying the cost constraint. For all vertices $x \in \nodes(k\G)$ let $n_x(\edgecover')$ denote the number of bicliques in $\edgecover'$ that contain the vertex $x$. The total weight component of the cost function $\sigma(\edgecover')$ can be written as $\sum_{S' \in \edgecover'} |\nodes(S')| = \sum_{x \in \nodes(k\G)} n_x(\edgecover')$. 
For each vertex $x \in \nodes(\G)$, we identify the copy in $k\G$ that contributes the least to the weight of the cover. Formally, we define $i(x) = \arg\min_{j \in [k]} \{n_{x_j}(\edgecover')\}$ to be the index of the copy of $x$ appearing in the fewest bicliques in $\edgecover'$. We construct a cover $\edgecover$ for $\G$ by projecting the bicliques of $\edgecover'$ onto these specific copies. Specifically, for each $S' \in \edgecover'$, we construct a set $\pi(S') = \{x \in \nodes(\G) \mid x_{i(x)} \in \nodes(S')\}$. We discard any empty sets resulting from this projection. Finally, we set $\edgecover = \{\pi(S') \mid S' \in \edgecover'\}$.

We first verify that $\edgecover$ is a valid cover for $\G$. Consider any edge $(u, v) \in \E$. By the definition of $k\G$, the copies $u_{i(u)}$ and $v_{i(v)}$ are connected by an edge in $k\G$. Therefore, there must be some biclique $S' \in \edgecover'$ covering the edge $\{u_{i(u)}, v_{i(v)}\}$. By our construction, $u$ and $v$ will both be included in $\pi(S')$, implying $\pi(S')$ covers the edge $(u, v)$. Furthermore, it is easy to check that for all $S' \in \edgecover'$, the subgraph $\pi(S')$ is a biclique in $\G$, since $S'$ is a biclique in $k\G$, by definition. 
Thus, every $\cover \in \edgecover$ is a biclique, and $\edgecover$ is a valid biclique cover for $\G$.

Now we analyze the cost $\mu(\edgecover)$. By definition, for all $x \in \nodes(\G)$, the number of bicliques in $\edgecover$ containing $x$ (i.e., $n_{x}(\edgecover)$) is at most $n_{x_{i(x)}}(\edgecover')$. By our choice of $i(x)$, we have $n_{x_{i(x)}}(\edgecover') \leq \frac{1}{k} \sum_{j=1}^k n_{x_j}(\edgecover')$. Summing over all vertices in $\G$, we obtain:
\begin{align*}
    \mu(\edgecover) &= \sum_{x \in \nodes(\G)} n_{x_{i(x)}}(\edgecover') \leq \sum_{x \in \nodes(\G)} \frac{1}{k} \sum_{j=1}^k n_{x_j}(\edgecover')
    \\ &= \frac{1}{k} \sum_{x \in \nodes(k\G)} n_x(\edgecover') 
    \leq  \frac{1}{k} \left( \sum_{x \in \nodes(k\G)} n_x(\edgecover') + c|\edgecover'| \right)
    \\ &= \frac{1}{k}\sigma(\edgecover') \leq \frac{1}{k}(k\tau + cm) = \tau + \frac{cm}{k}.
\end{align*}
Since $\mu(\edgecover)$ must be an integer (as it is a sum of vertex counts) and we chose $k > cm$, we have $\frac{cm}{k} < 1$. This implies $\mu(\edgecover) \leq \tau$, and the result follows.
\end{proof}


\begin{theorem}
For any fixed integer $c \geq 0$, the $\dptwo$ problem is NP-complete.
\end{theorem}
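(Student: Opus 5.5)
The plan is the standard NP-completeness template: membership in NP, then NP-hardness by the blow-up reduction from $\dpone$ developed above, with correctness supplied by Lemma~\ref{lemma:forward}.

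\emph{Membership in NP.} A certificate is a collection of bicliques $\edgecover'$. We verify in polynomial time that each member is a biclique of the input graph, that every edge is covered, and that $c|\edgecover'|+\sum_{S'\in\edgecover'}|\nodes(S')|\le\tau'$. The certificate has polynomial size because we may assume $|\edgecover'|\le|\E|$. Indeed, any biclique that covers no edge left uncovered by the others can be deleted, and since $c\ge0$ this does not increase the cost.

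\emph{NP-hardness.} Fix an integer $c\ge 0$ and take an instance $(\G,\tau)$ of $\dpone$, which is NP-complete by~\cite{edge-RMP-hardness}.
\begin{enumerate}
\item Set $m=|\E|$ and choose $k=cm+1$. Since $c$ is a fixed constant, $k$ is polynomial in the input size.
\item Build $k\G$. It has $k|\nodes(\G)|$ vertices and $k^2m$ edges, so it is polynomial in size and constructible in polynomial time.
\item Set the threshold $\tau'=k\tau+cm$. This is computable in polynomial time and has polynomially many bits.
\end{enumerate}
By Lemma~\ref{lemma:forward}, $\G$ has a cover with $\mu\le\tau$ if and only if $k\G$ has a cover with $\sigma\le\tau'$. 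This is therefore a polynomial-time many-one reduction, so $\dptwo$ is NP-hard. Together with membership, $\dptwo$ is NP-complete.

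The only delicate point is the size bound in step 1. The blow-up factor $k$ must exceed $cm$, which is polynomial only because $c$ is a fixed constant. This is exactly why the theorem is restricted to fixed integer $c$. For $c$ given in binary or growing with the input, the blown-up graph could be exponential, and one would instead appeal to Theorem~\ref{thm:gwbc-hardness}. Integrality of $c$ is used to make $\sigma$ integer-valued, but the rounding step inside Lemma~\ref{lemma:forward} only needs $\mu(\edgecover)$ integral together with $cm/k<1$. So nothing more needs to be checked beyond citing that lemma.
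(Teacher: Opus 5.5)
Your proposal is correct and follows the paper's proof essentially verbatim: membership in NP, then the $k$-blow-up reduction from $\dpone$ with $k=cm+1$ and threshold $\tau'=k\tau+cm$, with correctness supplied by Lemma~\ref{lemma:forward}. Your justification that a certificate can be assumed to contain at most $|\E|$ bicliques is a small addition; the paper makes this point in its proof of Theorem~\ref{thm:gwbc-hardness}, not in this one.
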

\vspace{-1em}
\begin{proof}
First, we observe that $\dptwo$ is in NP. Given a collection of subgraphs, we can verify in polynomial time whether each subgraph is a biclique, whether they cover all edges in $\E$, and whether the total cost $\sigma$ is at most the given threshold.

Next, we show NP-hardness. 
We choose $k = cm + 1$ (so that we have $k > cm$). Since $m \leq n^2$ 
we have that 
$k$ is bounded by a polynomial in $n$. Consequently, the constructed graph $k\G$, which has $k \cdot n$ vertices, has size polynomial in the size of the input graph $\G$.
By Lemma~\ref{lemma:forward}, 
the instance of $\dpone$ has a solution with cost at most $\tau$ if and only if the instance of $\dptwo$ has a solution with cost at most $k\tau + cm$.
\end{proof}

\section{Missing details from Section~\ref{sec:algs}}
\label{appndx:algs}

\subsection{Algorithm for the $c$-densest subgraph problem}
\label{appndx:cdensest}
We first review the exact algorithm for the densest subgraph problem.
Given $\G$, Goldberg~\cite{goldberg1984finding} created a max-flow instance to check whether there exists a subgraph $X$ such that $\frac{|\edges(X)|}{|\nodes(X)|}\geq \gamma$, for any real number $\gamma$. They construct the directed graph $G'(V', E')$ such that $V'=\V\cup\U\cup\{s,t\}$, where $s$ is the source and $t$ is the sink node.
Furthermore $E'=\E\cup (\bigcup_{v\in \V}(s,v))\cup(\bigcup_{u\in \U}(s,u))\cup (\bigcup_{v\in \V}(v,t))\cup(\bigcup_{u\in \U}(u,t))$.
Each edge $(v,u)\in\E$, with $v\in\V$ and $u\in\U$, is replaced by two directed edges $(v,u)$ and $(u,v)$, each of capacity $1$.
The capacity of each edge $(s,v)$ (or $(s,u)$) is $\deg(v)$ (or $\deg(u)$), and the capacity of each edge $(v,t)$ (or $(u,t)$) is $2\gamma$, where $\deg(v)$ is the degree of node $v$ in graph $\G$. 
Let $\bar{X}$ be the subgraph of $\G$ such that $\nodes(\bar{X})=(\V\cup\U)\setminus \nodes(X)$ and $\edges(\bar{X})=\{(u,v)\in \E\mid v\in \Vnodes(\bar{X}), u\in\Unodes(\bar{X})\}$. 
Let also $\edges(X,\bar{X})$ be the edges in $\E$ between nodes in $\nodes(X)$ and $\nodes(\bar{X})$.
They show that i) $\frac{|\edges(X)|}{|\nodes(X)|}\geq \gamma \Leftrightarrow \sum_{v\in \nodes(\bar{X})}\deg(v)+|\edges(X,\bar{X})|+2\gamma |\nodes(X)|\leq 2|\E|$, and ii) the cost of the $s$-$t$ cut of $G'$ defined by the nodes $\nodes(X)$ in the $s$ side and nodes $\nodes(\bar{X})$ in the $t$ side is exactly $\sum_{v\in \nodes(\bar{X})}\deg(v)+|\edges(X,\bar{X})|+2\gamma |\nodes(X)|$. Hence by the max-flow min-cut theorem, they compute a max-flow on $G'$
and if its value is at most $2|E|$ and $X\neq \emptyset$ then they run their procedure for a larger value of $\gamma$. Otherwise, they run their procedure for smaller values of $\gamma$. By running a binary search on the values of $\gamma$ they get the densest subgraph from $\G$, which is the subgraph $X$ returned in the last step of the binary search where the cost of the $s$-$t$ min-cut was at most $2|\E|$ and $X\neq \emptyset$.

For the $c$-densest subgraph problem, we have a similar construction. In fact, the graph $G'$ we compute is exactly the same. We decide whether there exists a subgraph $X$ such that $\frac{|\edges(X)|}{|\nodes(X)|+c}\geq \gamma$. Following the analysis in~\cite{gionis2016dense, goldberg1984finding} we get $\frac{|\edges(X)|}{|\nodes(X)|+c}\geq \gamma \Leftrightarrow \sum_{v\in \nodes(\bar{X})}\deg(v)+|\edges(X,\bar{X})|+2\gamma |\nodes(X)|\leq 2|\E|-2\gamma\cdot c$. We compute a max-flow on $G'$
and if its value is at most $2|E|-2\gamma\cdot c$ then we run our procedure for a larger value of $\gamma$. Otherwise, we run our procedure for smaller values of $\gamma$.
We note that we can also run a binary search on $\gamma$ since the function $$f(\gamma)= \min_{X\subseteq G}\left(\sum_{v\in \nodes(\bar{X})}\deg(v)+|\edges(X,\bar{X})|+2\gamma |\nodes(X)|\right)- 2|\E|+2\gamma\cdot c$$ is a non-decreasing function.

Overall, notice that $\gamma$ can take $O(|\E|\cdot (|\V|+|\U|))$ different values so the binary search executes $O(\log(\max\{|\V|,|\U|,c\}))$ steps. In each step of the binary search a maximum flow is computed on $G'$ that contains $O(|\V|+|\U|)$ nodes and $O(|\E|)$ edges. Using~\cite{orlin2013max}, the max-flow (min $s$-$t$ cut) on $G'$ is computed in $O((|\V|+|\U|)|\E|)$ time. Using the recent breakthrough of the max-flow problem~\cite{chen2025maximum}, we can compute the max-flow in each iteration of the binary search in $|\E|^{1+o(1)}$ time.

Next, we consider the approximation algorithm. 
Chekuri et al.~\cite{chekuri2022densest} consider the problem of
maximizing
$\frac{|\edges(X)|}{g(|\nodes(X)|)}$
and show that the greedy peeling algorithm gives a $1/3$-approximation in linear time
when $g$ is concave and satisfies $g(0)=0$. Their analysis uses the
assumption $g(0)=0$ to establish that $t/g(t)$ is nondecreasing.
In our case, $g(t)=t+c$ does not satisfy $g(0)=0$ when $c>0$;
however,
$\frac{t}{g(t)}=\frac{t}{t+c}$
is nondecreasing for every $c\geq 0$. Hence, the same analysis applies
and yields an $O(1)$-approximation for the $c$-densest subgraph problem.

Putting everything together, we conclude with Lemma~\ref{lem:denssub}.


\section{Missing details from Section~\ref{sec:ext}}
\label{appndx:exetnsions}

\subsection{Extension to other graph structures}
\label{appndx:extintersection}
For intersection bipartite graphs we extend the proof of
Lemma~\ref{lem:corr}. We first recall why the transformation to
$\Re^{2d}$ preserves the edge relation. For every $u\in\U$, define
\[
\widehat{\rho}_u=
(-\infty,\rect_u^{(1)+}]\times\cdots\times
(-\infty,\rect_u^{(d)+}]
\times
[\rect_u^{(1)-},\infty)\times\cdots\times
[\rect_u^{(d)-},\infty).
\]
We show that for every $v\in\V$ and $u\in\U$,
\begin{equation}
\label{eq:rects}
\rect_v\cap\rect_u\neq\emptyset
\quad\Longleftrightarrow\quad
p_v\in\widehat{\rho}_u.
\end{equation}

Indeed, the two hyper-rectangles $\rect_v$ and $\rect_u$ intersect
if and only if, for every coordinate $j\in[d]$, the intervals
$[\rect_v^{(j)-},\rect_v^{(j)+}]$ and
$[\rect_u^{(j)-},\rect_u^{(j)+}]$ intersect. This is equivalent to
\[
\rect_v^{(j)-}\leq \rect_u^{(j)+}
\qquad\text{and}\qquad
\rect_v^{(j)+}\geq \rect_u^{(j)-}
\]
for every $j\in[d]$. Equivalently,
\begin{align*}
\rect_v^{(1)-}&\in(-\infty,\rect_u^{(1)+}],\ldots,
\rect_v^{(d)-}\in(-\infty,\rect_u^{(d)+}],\\
\rect_v^{(1)+}&\in[\rect_u^{(1)-},\infty),\ldots,
\rect_v^{(d)+}\in[\rect_u^{(d)-},\infty),
\end{align*}
which, by the definition of $p_v$, is precisely
$p_v\in\widehat{\rho}_u$. Hence Equation~\eqref{eq:rects} holds.

Now consider the $i$-th iteration of the greedy procedure and let
\[
y=\bigcap_{u\in\Unodes(S_i^*)}\widehat{\rho}_u.
\]
Since $S_i^*$ is a biclique, for every
$v\in\Vnodes(S_i^*)$ and every $u\in\Unodes(S_i^*)$ we have
$\rect_v\cap\rect_u\neq\emptyset$. By Equation~\eqref{eq:rects},
$p_v\in\widehat{\rho}_u$ for every
$u\in\Unodes(S_i^*)$. Therefore,
\[
\{p_v\mid v\in\Vnodes(S_i^*)\}\subseteq y.
\]

Since $y$ is a hyper-rectangle in $\Re^{2d}$, all steps of the proof
of Lemma~\ref{lem:corr} follow almost verbatim, with
$O(\log^{2d}n_V)$ canonical hyper-rectangles instead of
$O(\log^d n_V)$.

\paragraph{Interval bipartite graphs}
A graph $\G(\V,\U,\E, \mathcal{I}_V, \mathcal{I}_U)$ is an interval bipartite graph if for every $v\in \V$ (resp. $u\in \U$), $v$ is associated with an interval $I_v=[I_v^-, I_v^+]\in \mathcal{I}_V$ (resp. $u$ is associated with$I_u=[I_u^-, I_u^+]\in\mathcal{I}_U)$ and $(v,u)\in \E$ if and only if $I_v\cap I_u\neq \emptyset$. Notice that this is an intersection bipartite graph in $\Re^1$.
Hence, the next theorem follows directly from Theorem~\ref{thm:approxG}.

\begin{theorem}
\label{thm:intGraph}
    Given an interval bipartite graph $\G(\V,\U,\E,\mathcal{I}_V,\mathcal{I}_U)$ such that $n_V=|\V|$, $n_U=|\U|$ (and without loss of generality $n_V\leq n_U$), and $|\E|=m$,  there exists an $O(\log(n_U)\cdot\log^{2} (n_V))$-approximation algorithm for the $\ptwo$ (and $\pone$) problem for any $c\geq 0$, that runs in $O(n_V\cdot (n_U+m)\cdot m\cdot \log(n_V))$ time.
\end{theorem}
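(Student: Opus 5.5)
The plan is to obtain Theorem~\ref{thm:intGraph} as the special case $d=1$ of Theorem~\ref{thm:approxG}. First I will check that the interval bipartite graph is an intersection bipartite graph in $\Re^1$. Each interval $I_v=[I_v^-,I_v^+]$ is a one-dimensional hyper-rectangle $\rect_v$, and similarly for $I_u$. The edge condition $I_v\cap I_u\neq\emptyset$ is then exactly $\rect_v\cap\rect_u\neq\emptyset$. So $\G(\V,\U,\E,\mathcal{I}_V,\mathcal{I}_U)$ is an instance $\G(\V,\U,\E,\rects_V,\rects_U)$ with $d=1$.

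Next I will instantiate the construction of Subsection~\ref{subsec:inbigraph} with $d=1$. Each $v$ is mapped to the point $p_v=(I_v^-,I_v^+)\in\Re^2$. Each $u$ is mapped to the quadrant $\hat{\rho}_u=(-\infty,I_u^+]\times[I_u^-,\infty)$. By Equation~\eqref{eq:rects}, $I_v\cap I_u\neq\emptyset$ holds if and only if $p_v\in\hat{\rho}_u$.

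The range tree is then built over $n_V$ points in $\Re^{2}$. This gives $O(n_V\log n_V)$ canonical rectangles. Any query rectangle decomposes into $O(\log^2 n_V)$ disjoint canonical pieces. The correctness argument of Lemma~\ref{lem:corr}, as extended in Appendix~\ref{appndx:extintersection}, shows that each greedy step has ratio within $O(\log^{2}n_V)$ of the best biclique. The weighted set cover analysis of Subsection~\ref{subsec:expApprox} then gives an $O(\log(n_U)\log^2(n_V))$-approximation for $\ptwo$, for every $c\ge 0$. Taking $c=0$ gives the same guarantee for $\pone$.

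For the running time, I will substitute $2d-1=1$ into the bound of Theorem~\ref{thm:approxG}, as in Lemma~\ref{lem:runtime}. There are $O(n_V\log n_V)$ canonical rectangles, each processed in $O(n_U+m)$ time by $\mathsf{DenSub}$ (Lemma~\ref{lem:denssub}). There are at most $m$ greedy iterations. Together these give $O(n_V(n_U+m)m\log n_V)$.

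No step here is a real obstacle. The only point needing care is confirming that the one-dimensional case fits the hypotheses of Theorem~\ref{thm:approxG}: $d=1=O(1)$, and degenerate (point) intervals are still valid hyper-rectangles. Beyond that, the result is a direct substitution into the earlier theorem.
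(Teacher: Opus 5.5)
Your proposal is correct and matches the paper's proof, which observes that an interval bipartite graph is an intersection bipartite graph in $\Re^1$ and applies Theorem~\ref{thm:approxG} with $d=1$. Your explicit lifting of each interval to a point and a quadrant in $\Re^2$, and the substitution $2d-1=1$ in the running-time bound, are just the details that this one-line reduction leaves implicit.
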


\paragraph{$k$-NN bipartite graphs}
A graph $\G(\V,\U,\E,k)$ is a $k$-NN bipartite graph in $\Re^d$ if
$(v,u)\in\E$ if and only if $u$ is one of the $k$-nearest neighbors
of $v$ in $\U$ and $v$ is one of the $k$-nearest neighbors of $u$
in $\V$. We assume that there are no ties in the $k$-nearest-neighbor
distances; ties can be handled by a consistent arbitrary perturbation.

For every $v\in\V$ and $u\in\U$, let
\[
R_v=\dist_\infty(v,\textsf{k-NN}(v,\U))
\qquad\text{and}\qquad
R_u=\dist_\infty(u,\textsf{k-NN}(u,\V)),
\]
respectively. Hence,
\[
(v,u)\in\E
\quad\Longleftrightarrow\quad
\dist_\infty(v,u)\leq R_v
\text{ and }
\dist_\infty(v,u)\leq R_u.
\]

We map every $v\in\V$ to a point $p_v\in\Re^{3d}$ defined as
\[
p_v=
\big(
v^{(1)},\ldots,v^{(d)},
v^{(1)}-R_v,\ldots,v^{(d)}-R_v,
v^{(1)}+R_v,\ldots,v^{(d)}+R_v
\big).
\]
For every $u\in\U$, we define the hyper-rectangle
$\rho_u\subseteq\Re^{3d}$ as
\begin{align*}
\rho_u
={}&
\prod_{j=1}^d
[u^{(j)}-R_u,u^{(j)}+R_u]
\times
\prod_{j=1}^d
(-\infty,u^{(j)}]
\times
\prod_{j=1}^d
[u^{(j)},\infty).
\end{align*}
By construction,
\[
p_v\in\rho_u
\]
if and only if, for every $j\in[d]$,
\[
u^{(j)}-R_u\leq v^{(j)}\leq u^{(j)}+R_u
\]
and
\[
v^{(j)}-R_v\leq u^{(j)}\leq v^{(j)}+R_v.
\]
These conditions are equivalent to
\[
\dist_\infty(v,u)\leq R_u
\qquad\text{and}\qquad
\dist_\infty(v,u)\leq R_v,
\]
respectively. Therefore,
\begin{equation}
\label{eq:knn-lifting}
(v,u)\in\E
\quad\Longleftrightarrow\quad
p_v\in\rho_u.
\end{equation}

We can now apply the algorithm from Section~\ref{subsec:mainalg}
in $\Re^{3d}$. 

The correctness and approximation analysis now follow from
Lemma~\ref{lem:corr}. In particular, for the biclique $S_i^*$ used
in the $i$-th iteration of that proof, let
\[
y=\bigcap_{u\in\Unodes(S_i^*)}\rho_u.
\]
Since $S_i^*$ is a biclique, Equation~\eqref{eq:knn-lifting} implies
that
\[
\{p_v\mid v\in\Vnodes(S_i^*)\}\subseteq y.
\]
Since $y$ is a hyper-rectangle in $\Re^{3d}$, it can be decomposed
into $O(\log^{3d}n_V)$ canonical hyper-rectangles. Thus, all steps
of the proof of Lemma~\ref{lem:corr} follow verbatim, with
$O(\log^{3d}n_V)$ canonical hyper-rectangles instead of
$O(\log^d n_V)$.

\begin{theorem}
\label{thm:kNNGraph}
Given a $k$-NN bipartite graph $\G(\V,\U,\E,k)$ in $\Re^d$, where
$d=O(1)$, such that $n_V=|\V|$, $n_U=|\U|$ (and without loss of
generality $n_V\leq n_U$), and $|\E|=m$, there exists an
$O\bigl(\log(n_U)\cdot\log^{3d}(n_V)\bigr)$-approximation algorithm for the $\ptwo$ (and $\pone$) problem for
any $c\geq0$. The algorithm runs in
$O\bigl(
n_V\cdot(n_U+m)\cdot m\cdot\log^{3d-1}(n_V)
\bigr)$
time.
\end{theorem}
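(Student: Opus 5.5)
The plan is to reduce to the machinery behind Theorem~\ref{thm:approx}, now run in $\Re^{3d}$. Build a range tree $\rangetree$ on the lifted point set $P=\{p_v\mid v\in\V\}\subseteq\Re^{3d}$. Then run Algorithm~\ref{alg:gwbc-rangetree} verbatim, with two changes: use $V_x=\{v\mid p_v\in x\}$, and use $U_x=\{u\mid x\subseteq\rho_u\}$, where $\rho_u$ is the $3d$-dimensional hyper-rectangle defined above. Computing the radii $R_v,R_u$ needs a preprocessing step. For each vertex we sort its distances to the other side, which takes polynomial time and is dominated by the main loop.

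For correctness, I first check that every returned $S_i$ is a genuine biclique. If $v\in V_x$ and $u\in U_x$, then $p_v\in x\subseteq\rho_u$, so Equation~\eqref{eq:knn-lifting} gives $(v,u)\in\E$. Next comes the approximation guarantee of the greedy step. Fix an iteration and the optimal ratio biclique $S_i^*$, and set $y=\bigcap_{u\in\Unodes(S_i^*)}\rho_u$. This is an intersection of axis-aligned (possibly unbounded) hyper-rectangles, so it is again one. By Equation~\eqref{eq:knn-lifting} it contains every $p_v$ with $v\in\Vnodes(S_i^*)$. The range tree in dimension $3d$ decomposes $y\cap P$ into $O(\log^{3d}n_V)$ disjoint canonical boxes. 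By pigeonhole, one box $z'$ carries at least a $1/O(\log^{3d}n_V)$ fraction of the uncovered edges of $S_i^*$. Every $u\in\Unodes(S_i^*)$ satisfies $z'\subseteq y\subseteq\rho_u$, where $z'$ is taken minimal so that it sits inside $y$. Hence $U_{z'}\supseteq\Unodes(S_i^*)$ and $V_{z'}\supseteq\Vnodes(S_i^*)\cap z'$. With these containments, inequalities~\eqref{eq:approx1}--\eqref{eq:approx3} carry over with $\log^{3d}$ in place of $\log^d$. Lemma~\ref{lem:denssub} supplies the $O(1)$ factor. The greedy set-cover argument of Subsection~\ref{subsec:expApprox} then yields $O(\log n_U\cdot\log^{3d}n_V)$.

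For the running time, I repeat the count of Lemma~\ref{lem:runtime}. There are $O(n_V\log^{3d-1}n_V)$ canonical boxes. Each $G_x$ is built and processed in $O(n_U+m)$ time, and there are at most $m$ iterations. This gives the stated bound.

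I expect the main subtle step to be the minimality convention for canonical boxes. We need $z'\subseteq y$, not merely $z'\cap P\subseteq y$. I would handle this as in Lemma~\ref{lem:corr}: replace each canonical box by the bounding box of its points. The bounding box of points lying in the convex set $y$ lies in $y$. Hence $z'\subseteq\rho_u$ for every $u$ in $\Unodes(S_i^*)$.
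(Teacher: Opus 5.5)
Your proposal is correct and follows the paper's proof essentially verbatim. Both lift via Equation~\eqref{eq:knn-lifting}, observe that $y=\bigcap_{u\in\Unodes(S_i^*)}\rho_u$ is an axis-aligned box in $\Re^{3d}$ containing the lifted points of $\Vnodes(S_i^*)$, and rerun the proof of Lemma~\ref{lem:corr} and the count of Lemma~\ref{lem:runtime} with $\log^{3d}$ in place of $\log^{d}$.

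There is one slip in the step you flagged. The axis-aligned bounding box of points lying in a general convex set need not lie in that set: the points $(1,0)$ and $(0,1)$ lie in the triangle with vertices $(0,0),(1,0),(0,1)$, yet their bounding box contains $(1,1)$. The correct justification for $z'\subseteq y$ is that $y$ is a product of intervals, so the coordinatewise minima and maxima of points in $y$ stay in $y$.

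A smaller point concerns computing $R_v,R_u$. Sorting costs $O(n_Vn_U\log n_U)$, which is not always dominated by the stated main-loop bound when $m$ is tiny. Linear-time selection of the $k$-th smallest distance costs $O(n_Vn_U)$ and is always dominated.
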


\subsection{Extension to any $\ell_\base^d$ metric: Theoretical approach}
\label{appndx:extellalpha}

We first define the notion of an \emph{$\eps$-biclique}. Given a graph $\G$ and an arbitrarily small constant $\eps\in(0,1)$, let $\G'(\V',\U',\E')$ be a graph where $\V'=\V$, $\U'=\U$, and $\E'=\{(v,u)\mid \dist_\base(v,u)\leq (1+\eps)\delta\}$. Note that $\E'$ contains all edges of $\G$, as well as additional pairs $(v,u)$ with distance greater than $\delta$ but at most $(1+\eps)\delta$.
An $\eps$-biclique $\cover'$ of $\G$ is defined as any biclique of the graph $\G'$. Observe that every biclique of $\G$ is also an $\eps$-biclique. An \emph{$\eps$-biclique edge cover} $\edgecover$ is a collection of $\eps$-bicliques whose union covers all edges of $\G$.

Instead of a range tree, we use the construction and properties of the \emph{BBD-tree}~\cite{arya2000approximate, arya1998optimal}. A BBD-tree is a hierarchical geometric data structure designed for range aggregation queries in $\Re^d$, where $d=O(1)$. Let $\rangetree'$ denote the BBD-tree constructed over $\V$. The BBD-tree induces a collection $\canonical$ of $O(|\V|\log |\V|)$ canonical axis-aligned hyper-rectangles.\footnote{A hyper-rectangle in the canonical set of a BBD-tree may contain a hole; that is, each region $x\in\canonical$ can be represented as the difference of two hyper-rectangles. For simplicity, we treat each such region $x$ as a standard hyper-rectangle, since all properties and arguments we claim remain unchanged.}
For any convex region $\rho\subseteq \Re^d$, there exists a subset $C(\rho)\subseteq \canonical$ of $O(\eps^{-d}+\log |\V|)$ pairwise disjoint hyper-rectangles such that $\V\cap \rho \subseteq \bigcup_{x\in C(\rho)}(\V\cap x)\subseteq \V\cap (\oplus_{\eps}\rho)$. Here, $\oplus_{\eps}\rho$ denotes the Minkowski sum of $\rho$ with a ball of radius $\eps\cdot \mathsf{radius}(\rho)$, that is, the region obtained by expanding $\rho$ by $\eps\cdot \mathsf{radius}(\rho)$ in all directions, where $\mathsf{radius}(\rho)$ denotes the radius of $\rho$.
As in the range tree construction, we assume without loss of generality that each hyper-rectangle $x\in \canonical$ is minimal with respect to $\V$.

The algorithm for general $\ell_\base^d$ metrics closely follows the algorithm from Theorem~\ref{thm:approx}, with the following modifications: (i) the hash table $B$ is defined over all pairs $(v,u)$ with $v\in \V$ and $u\in \U$; (ii) we use a BBD-tree $\rangetree'$ to get $\canonical$ constructed over $\V$; and (iii) in each iteration, for every hyper-rectangle $x\in \canonical$, we define a graph $G_x(V_x,U_x,E_x)$ where $V_x=\V\cap x$, $U_x=\{u\in \U\mid x\cap \V\subseteq \oplus_{\eps}\rho_u\}$, and $E_x=\{(v,u)\mid v\in V_x,\ u\in U_x,\ \dist_\base(v,u)\leq (1+\eps)\delta,\ B(v,u)=1\}$.

Using the same arguments as in the correctness proof of Lemma~\ref{lem:corr}, while accounting for the multiplicative $(1+\eps)$ approximation and the covering guarantees of the BBD-tree, it is straightforward to show that the algorithm returns an $\eps$-biclique edge cover $\edgecover$ satisfying $\sigma(\edgecover)\leq O((\eps^{-d}+\log n_V)\log n_U)\cdot \sigma(\edgecover^*)$, where $\edgecover^*$ denotes an optimal biclique edge cover of $\G$. We emphasize that $\edgecover^*$ is not an optimal $\eps$-biclique cover.
Skipping the remaining details, we conclude with the following theorem.
\begin{theorem}
\label{thm:approxanybase}
Given a \pbg\ $\G(\V,\U,\E)$ in $\ell_\base^d$, where $d=O(1)$ and $\base\geq 1$, $n_V=|\V|$, $n_U=|\U|$ (with $n_V\leq n_U$), and $|\E|=m$, there exists an algorithm that returns an $\eps$-biclique edge cover $\edgecover$ such that $\sigma(\edgecover)\leq O((\eps^{-d}+\log n_V)\log n_U)\cdot \sigma(\edgecover^*)$, where $\edgecover^*$ is an optimal solution to the $\ptwo$ problem for any $c\geq 0$. The algorithm runs in $O(n_V^3n_U^2\log n_V)$ time and uses $O(n_Vn_U+ n_V \log n_V)$ space.
\end{theorem}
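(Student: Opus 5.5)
The plan follows the approximate-greedy framework of Subsection~\ref{subsec:expApprox}, applied to a relaxed set system. The universe is still $\E$, but the sets are now all $\eps$-bicliques of $\G$, i.e., bicliques of $\G'$. Each such set has weight $c+|\nodes(\cover)|$ and contains the edges of $\E$ it covers. Every biclique of $\G$ is an $\eps$-biclique, so the optimum of this relaxed system is at most $\sigma(\edgecover^*)$. Suppose that in each iteration we select an $\eps$-biclique $S_i$ whose ratio of weight to newly covered edges of $\E$ is at most $\hat\gamma$ times the best ratio over bicliques of $\G$. Then the standard charging argument for greedy set cover gives $\sigma(\edgecover)\le O(\hat\gamma\log n_U)\,\sigma(\edgecover^*)$. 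The argument is to charge each edge its price at the time it is covered and to compare against $\edgecover^*$, whose sets remain available. So it suffices to show $\hat\gamma=O(\eps^{-d}+\log n_V)$.

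First I check validity. The algorithm lets $V_x=\V\cap x$ and $U_x=\{u: x\cap\V\subseteq\oplus_\eps\rho_u\}$, where $\rho_u$ denotes the $\ell_\base$ ball of radius $\delta$ around $u$. For $v\in V_x$ and $u\in U_x$ we then have $\dist_\base(v,u)\le(1+\eps)\delta$. Hence the node set returned by $\mathsf{DenSub}$ induces a biclique of $\G'$, that is, an $\eps$-biclique. Only pairs that are true edges of $\G$ and are still uncovered should count toward $b_x$. I would therefore restrict the hash table $B$ to count the pairs in $\E$, while treating the remaining pairs only as permitted biclique pairs.

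Next, for the approximation, fix the optimal biclique $S_i^*$ of $\G$ in iteration $i$ and let $y=\bigcap_{u\in\Unodes(S_i^*)}\rho_u$. This $y$ is convex and contains $\Vnodes(S_i^*)$. Applying the BBD-tree cover property to $y$ gives $O(\eps^{-d}+\log n_V)$ canonical cells covering $\V\cap y$ and contained in $\V\cap\oplus_\eps y$. The cells are pairwise disjoint, so by pigeonhole one cell $z'$ carries at least a $1/O(\eps^{-d}+\log n_V)$ fraction of the uncovered edges of $S_i^*$. This gives the analogue of Eq.~\eqref{eq:approx1}. Then I need $U_{z'}\supseteq\Unodes(S_i^*)$, i.e., $z'\cap\V\subseteq\oplus_\eps\rho_u$ for each $u\in\Unodes(S_i^*)$. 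With that, $G_{z'}$ contains the subgraph formed by $\Vnodes(S_i^*)\cap z'$, $\Unodes(S_i^*)$, and their uncovered edges, and the constant-factor guarantee of Lemma~\ref{lem:denssub} together with the minimality of the chosen cell give the analogues of Eqs.~\eqref{eq:approx2} and~\eqref{eq:approx3}.

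The main obstacle is precisely this containment. The expansion $\oplus_\eps y$ is taken relative to $\mathsf{radius}(y)$, and $y$ is an intersection of balls, not a single ball $\rho_u$. I would first observe that $y\subseteq\rho_u$ and $\mathsf{radius}(y)\le\delta$, so the expansion of $y$ by $\eps\,\mathsf{radius}(y)$ lies inside the expansion of $\rho_u$ by $\eps\delta$. This gives $\oplus_\eps y\subseteq\oplus_\eps\rho_u$, which is what is needed. If the BBD-tree guarantee is stated only for balls, I would instead query it with each ball $\rho_u$ separately, or else refine the canonical cells. That fallback is the part I expect to need care. Norm-equivalence constants between $\ell_\base$ and $\ell_2$ are absorbed into $\eps$.

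Finally, the running time follows the count in Lemma~\ref{lem:runtime}. There are $O(n_V\log n_V)$ cells, and building each $G_x$ over all pairs costs $O(n_Vn_U)$ per cell. There are at most $m\le n_Vn_U$ iterations, giving $O(n_V^3n_U^2\log n_V)$ total time. The space is $O(n_Vn_U+n_V\log n_V)$, for the table $B$ plus the tree.
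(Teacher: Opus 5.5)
Your proposal is correct and follows the paper's argument, which the paper only sketches: approximate greedy set cover over $\eps$-bicliques, BBD-tree cells for the convex region $y=\bigcap_{u\in\Unodes(S_i^*)}\rho_u$, pigeonhole over the $O(\eps^{-d}+\log n_V)$ cells, the $c$-densest subgraph step, and the same cells-times-iterations running-time count. Your explicit check that $\oplus_\eps y\subseteq\oplus_\eps\rho_u$ (from $y\subseteq\rho_u$ and monotonicity of the radius) and your choice to count only uncovered pairs of $\E$ toward $b_x$, which is what the charging argument needs, are details the paper leaves implicit; the fallback you worry about is unnecessary, since the BBD-tree guarantee is stated for arbitrary convex ranges.
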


\section{Additional experiments}
\label{appndx:exp}

\begin{figure*}[h]
    \centering
    \begin{tabular}{cccc}
        Adults & Credits & Gamma & POPSIM\\
        \includegraphics[width=0.22\textwidth]{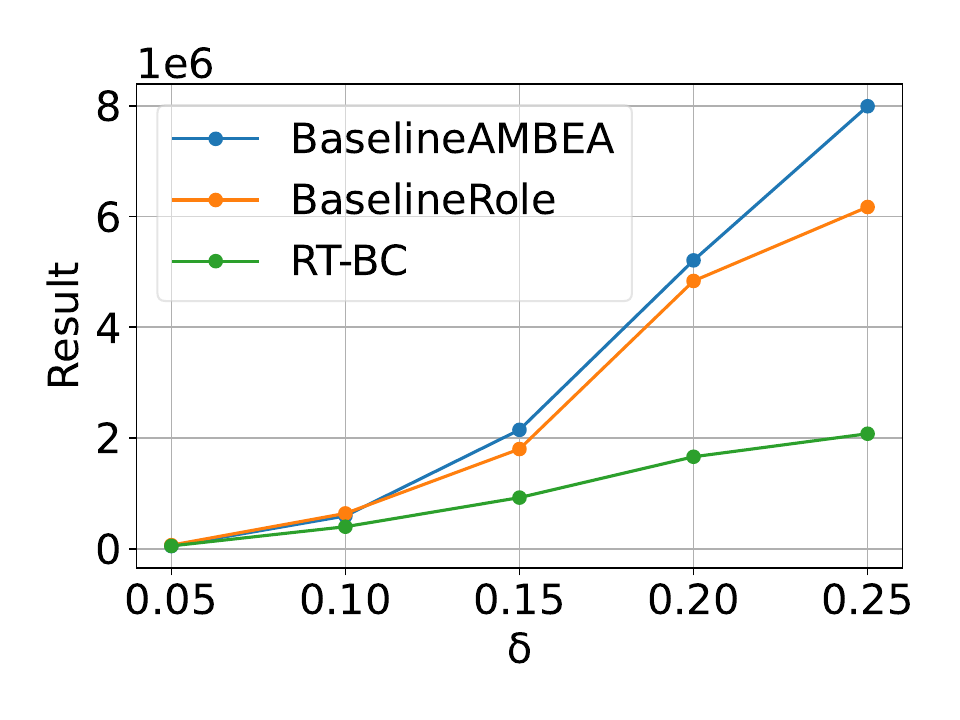} & 
        \includegraphics[width=0.22\textwidth]{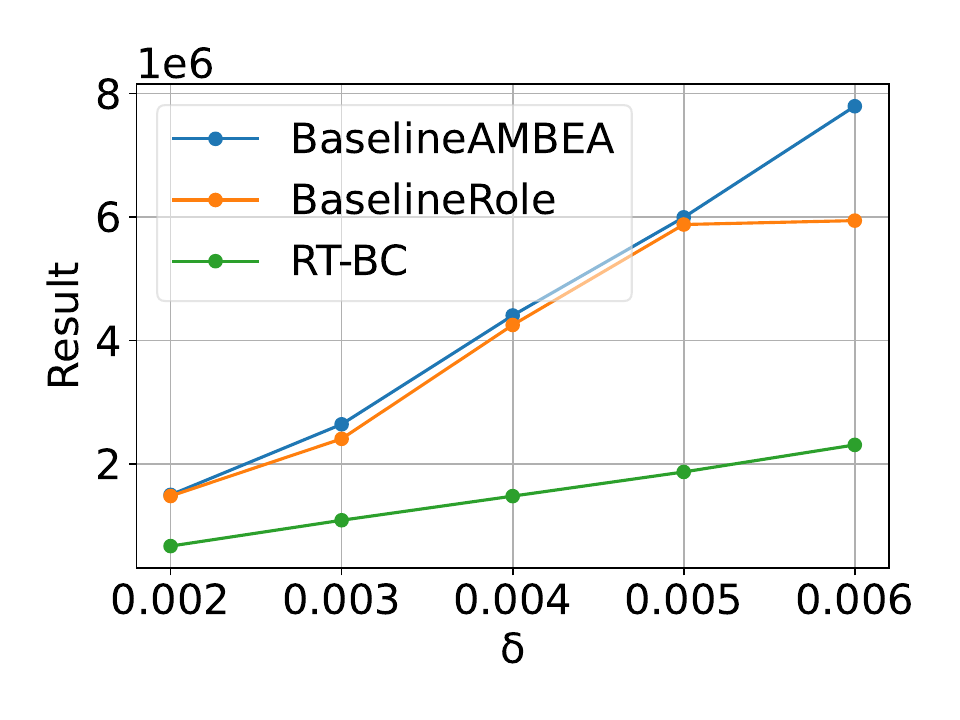} & 
        \includegraphics[width=0.22\textwidth]{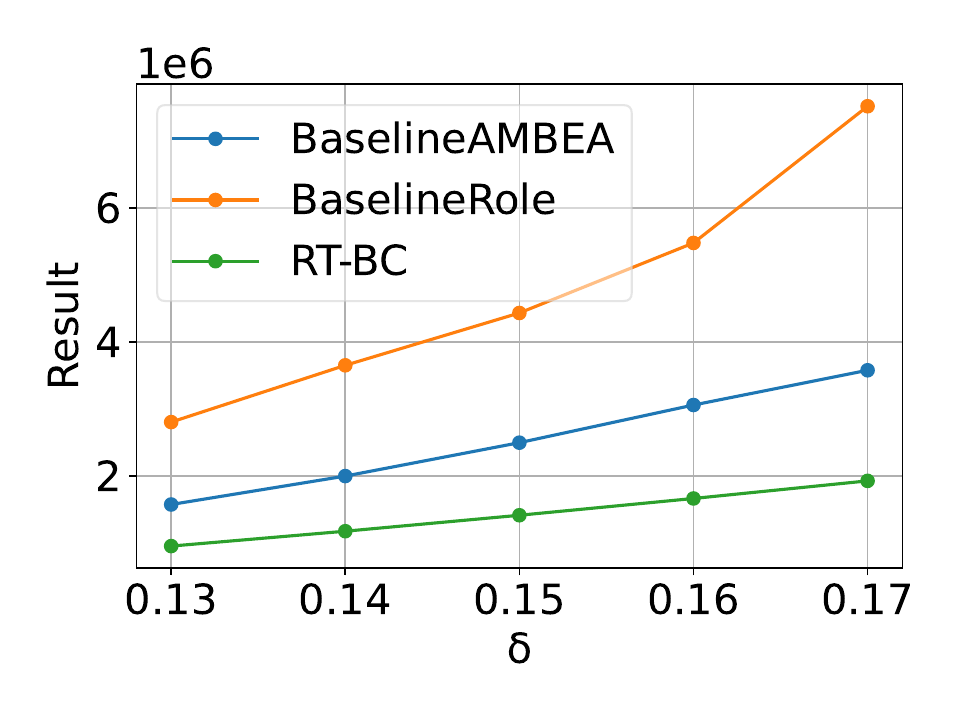} & 
        \includegraphics[width=0.22\textwidth]{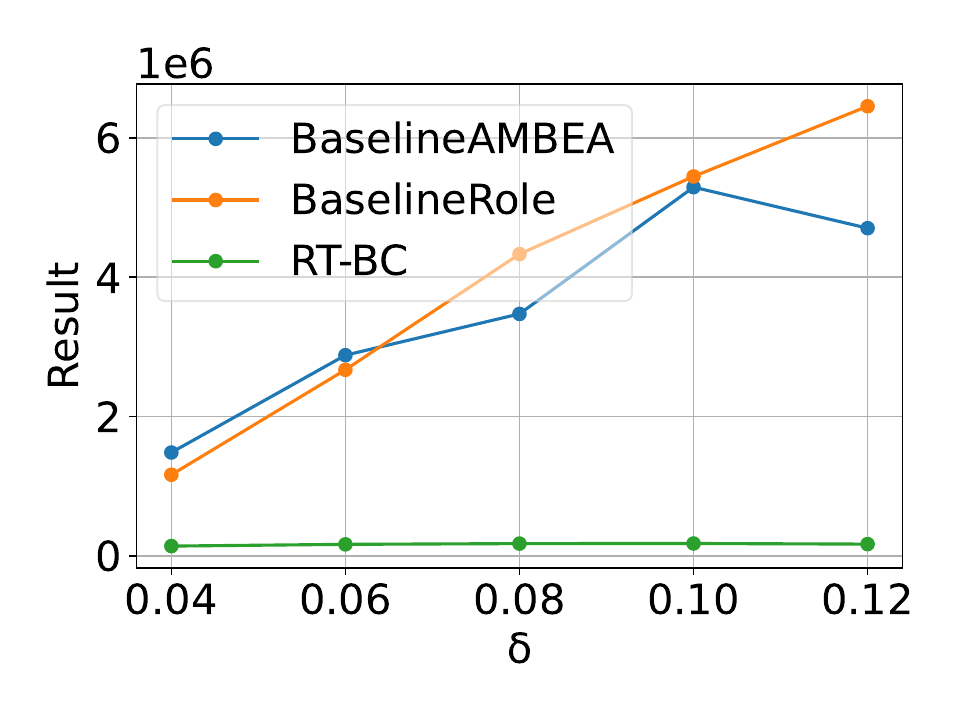} \\
        \multicolumn{4}{c}{(a) Size of representation}\\[1em]
        
        \includegraphics[width=0.22\textwidth]{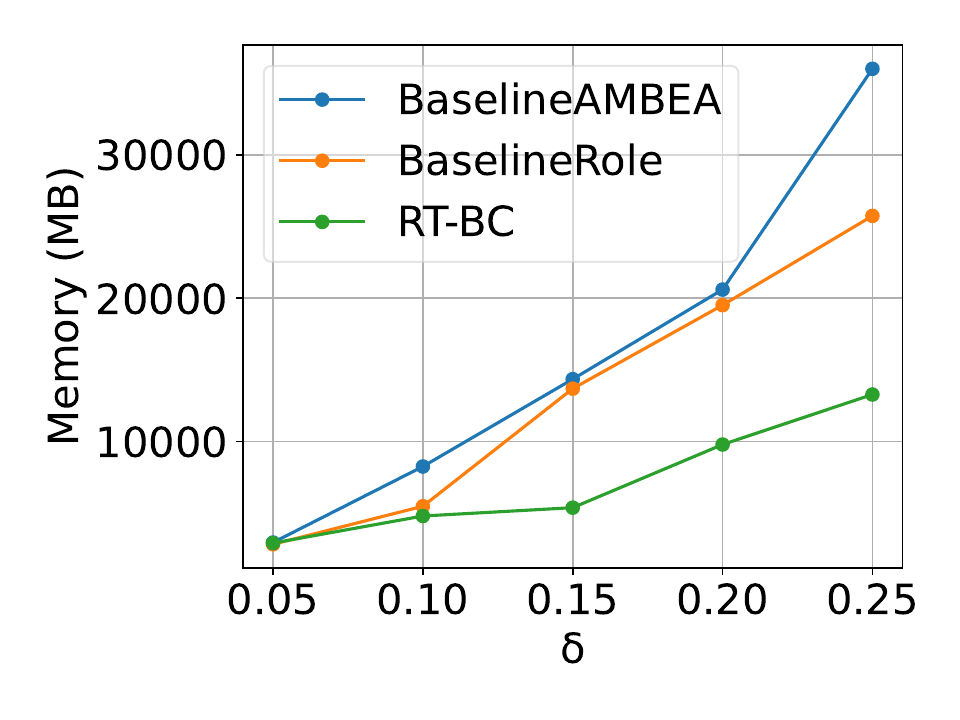} & 
        \includegraphics[width=0.22\textwidth]{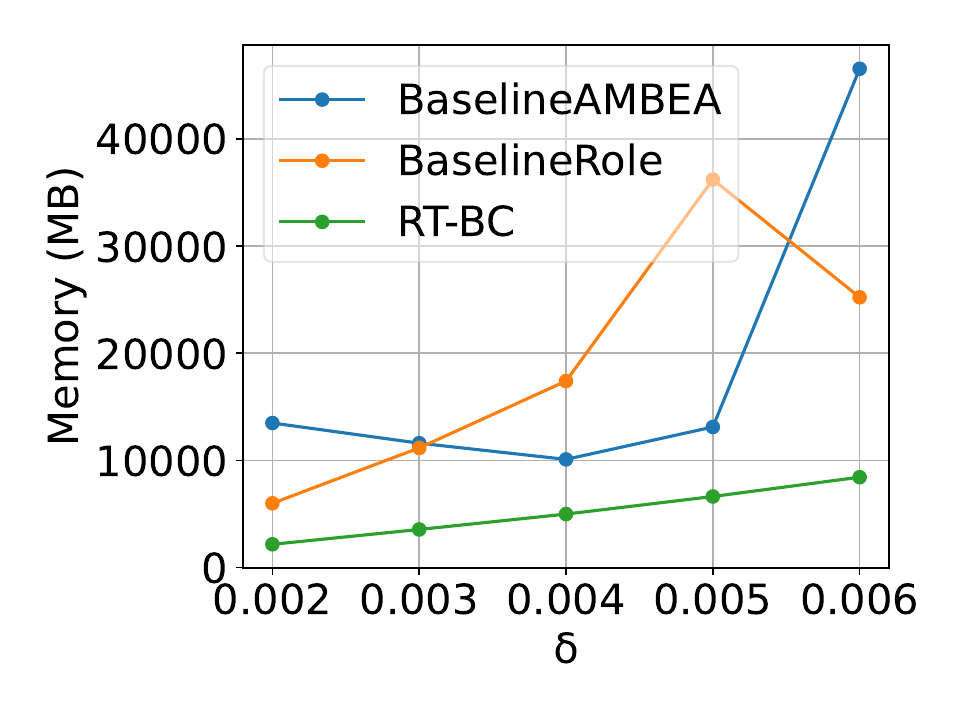} & 
        \includegraphics[width=0.22\textwidth]{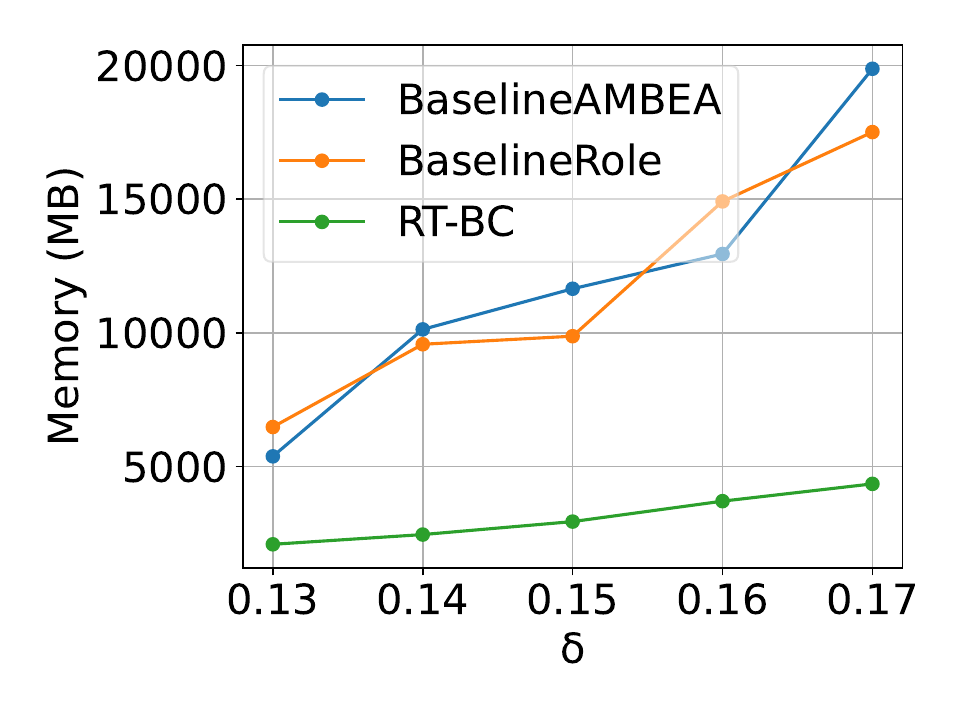} & 
        \includegraphics[width=0.22\textwidth]{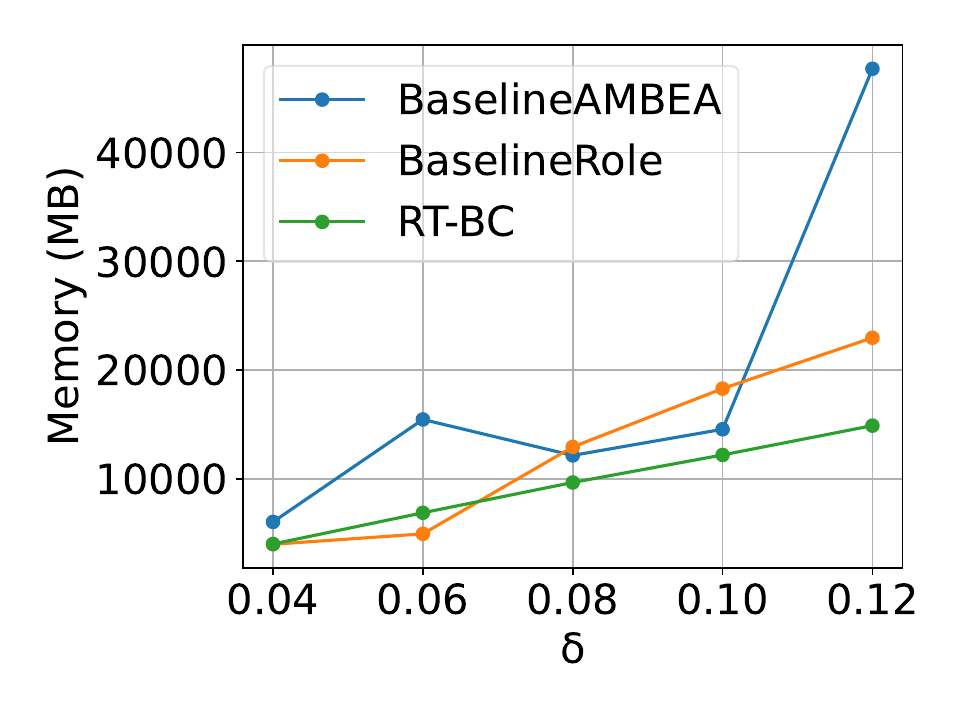} \\
        \multicolumn{4}{c}{(b) Peak memory in MB}\\[1em]
        
        \includegraphics[width=0.22\textwidth]{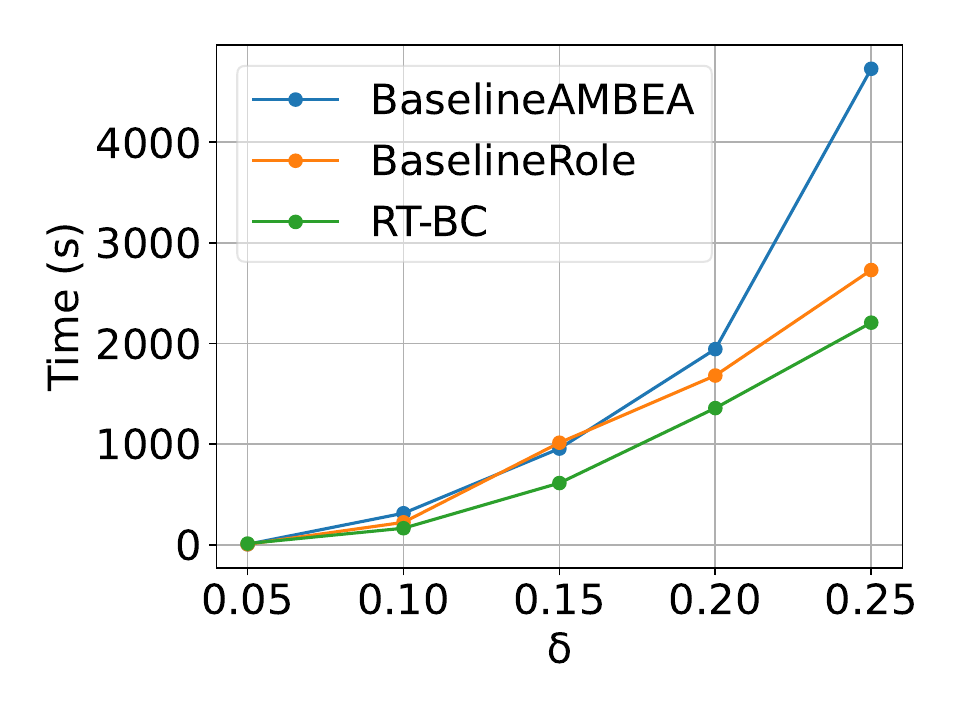} & 
        \includegraphics[width=0.22\textwidth]{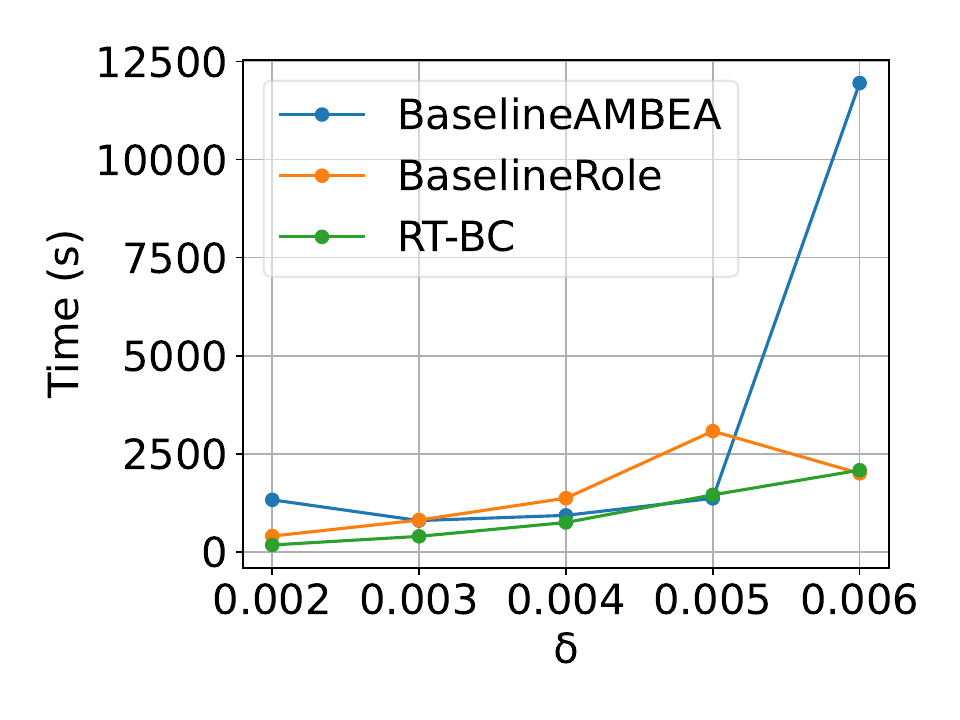} & 
        \includegraphics[width=0.22\textwidth]{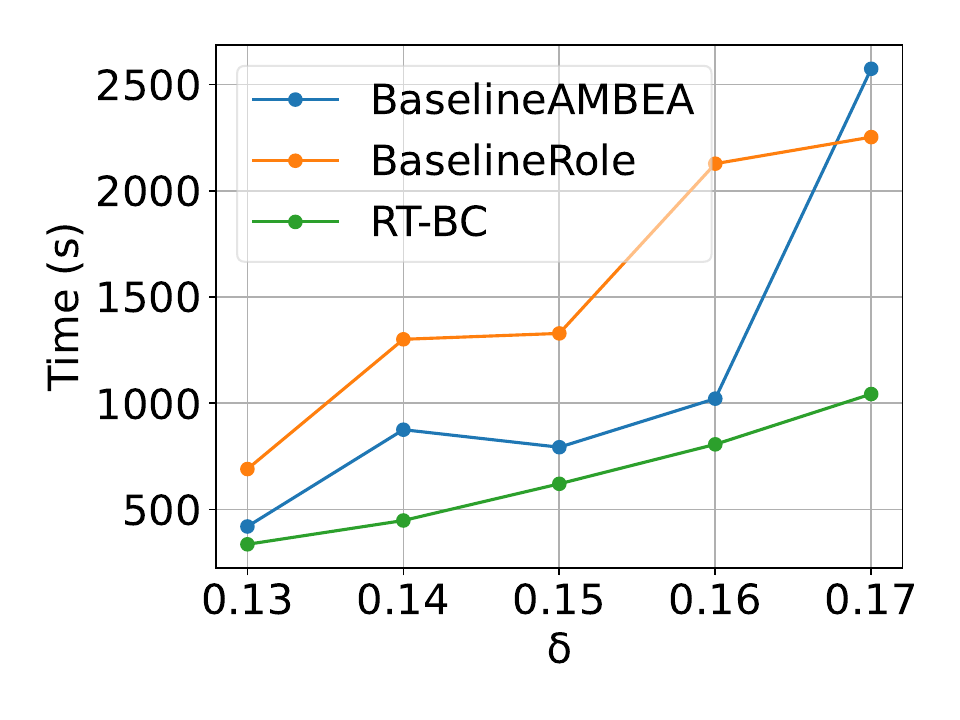} & 
        \includegraphics[width=0.22\textwidth]{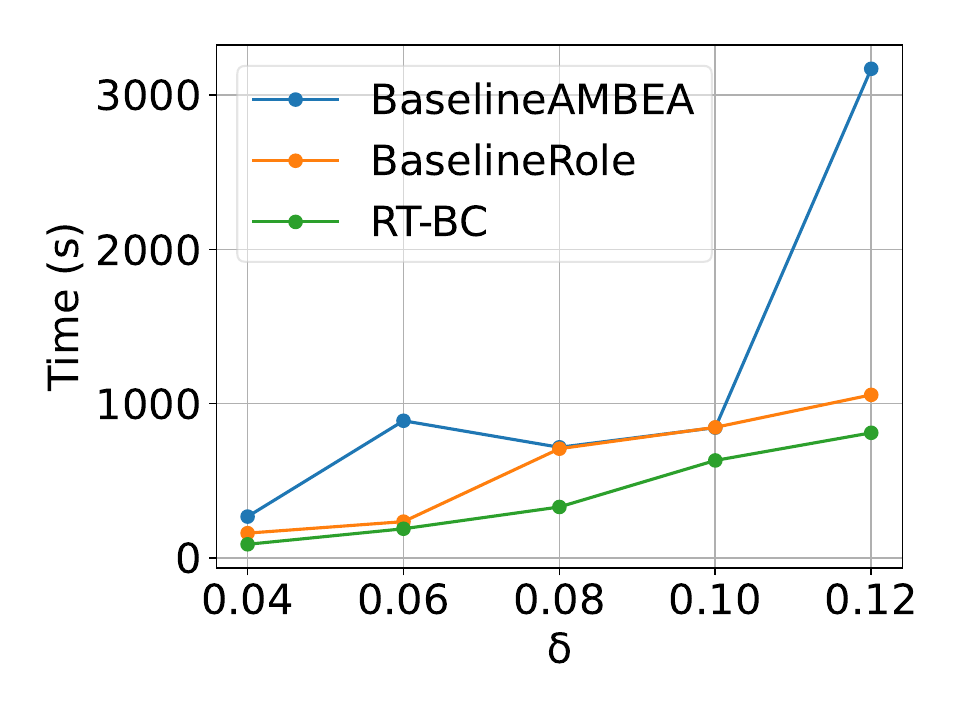} \\
        \multicolumn{4}{c}{(c) Running time in seconds}\\[1em]
    \end{tabular}

    \caption{Experimental results for $\pone$ in $\ell_\infty^d$ across all datasets}
    \label{fig:wbclinf}
\end{figure*}

\begin{figure*}[t]
    \centering
    \begin{tabular}{cccc}
        Adults & Credits & Gamma & POPSIM\\

        \includegraphics[width=0.22\textwidth]{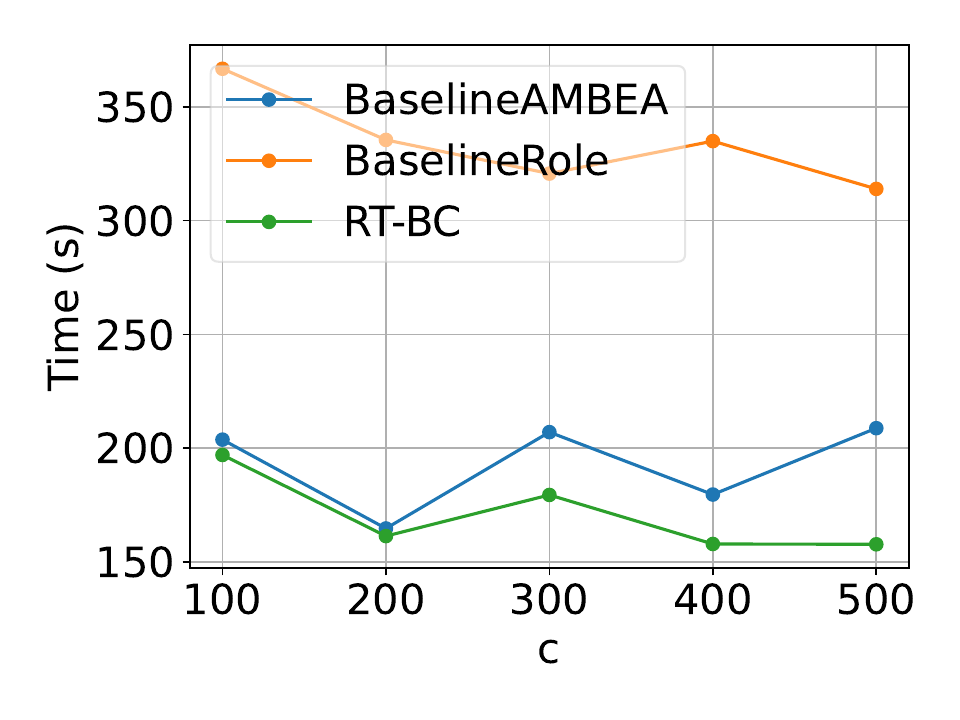} & 
        \includegraphics[width=0.22\textwidth]{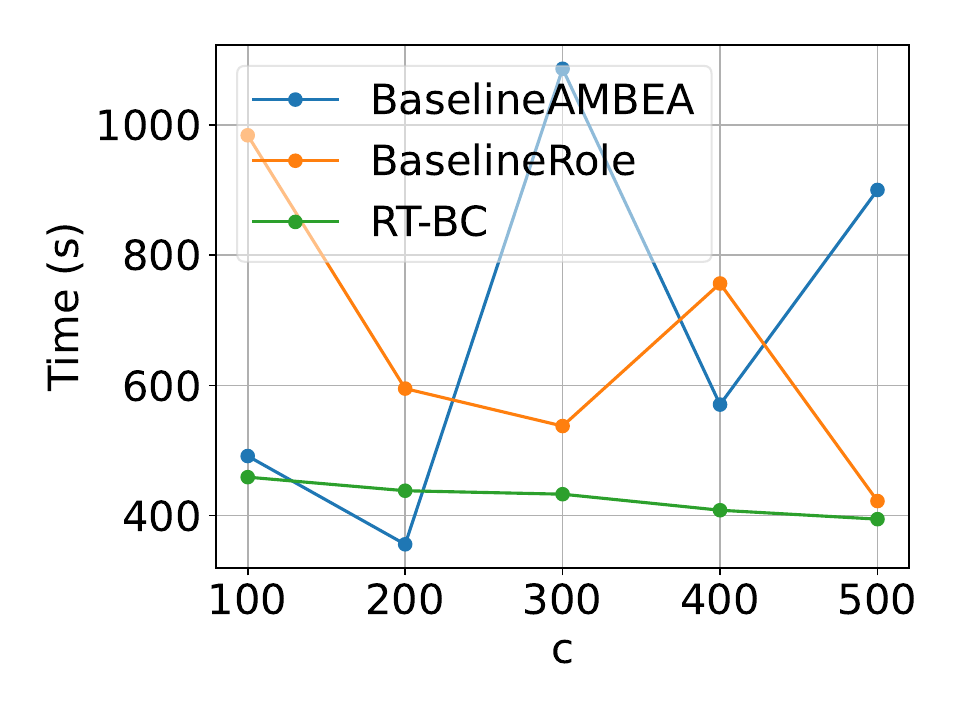} & 
        \includegraphics[width=0.22\textwidth]{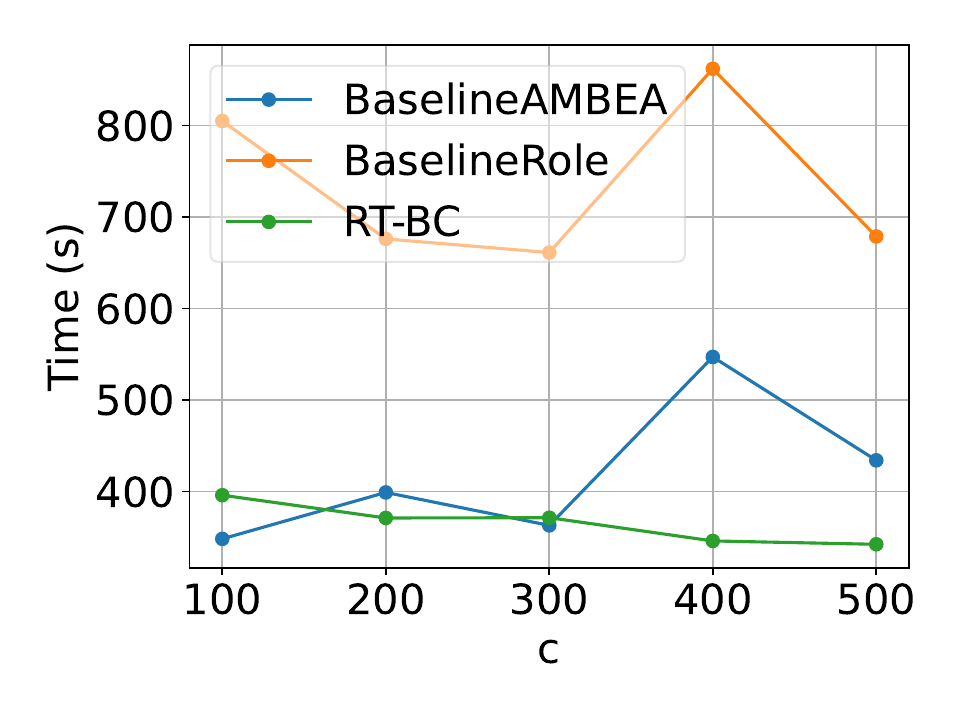} & 
        \includegraphics[width=0.22\textwidth]{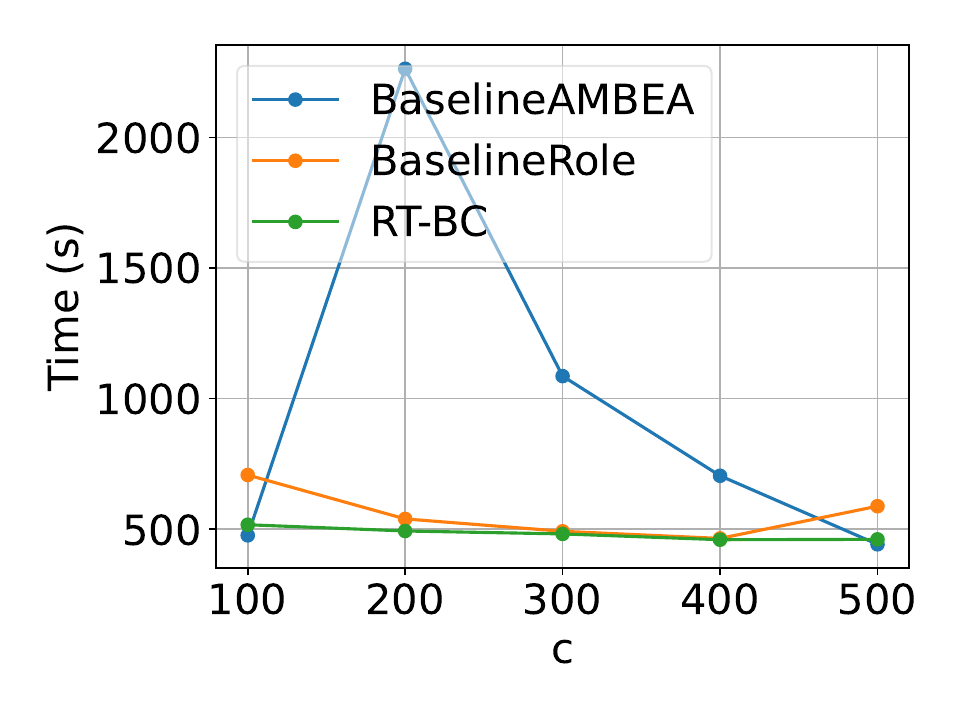} \\
        \multicolumn{4}{c}{(a) Running time in seconds}\\[1em]
    \end{tabular}

\vspace{-0.5em}    \caption{Experimental results for $\ptwo$ in $\ell_\infty^d$ across all datasets}
    \label{fig:wbclinfTime}
    \vspace{-1em}
\end{figure*}

\begin{table*}[h]
\centering
\begin{tabular}{|c|c|c|c|c|c|}
\hline
Dataset & $\delta$ & $m$ & $n$ & $d$ & Grouping Attribute \\
\hline
\multirow{5}{*}{Adults} & $0.05$ & $64,280$ & \multirow{5}{*}{\num{14846}} & \multirow{5}{*}{5} & \multirow{5}{*}{Sex} \\ \cline{2-3}
                     & $0.1$ & \num{854381} &                     &                    & \\ \cline{2-3}
                     & $0.15$ & \num{2589297} &                     &                  &   \\ \cline{2-3}
                     & $0.2$ & \num{5990399} &                     &                   &  \\ \cline{2-3}
                     & $0.25$ & \num{8714044} &                     &                  &   \\ \hline
\multirow{5}{*}{Credit} & $0.002$ & \num{1538057} & \multirow{5}{*}{\num{28726}} & \multirow{5}{*}{6} & \multirow{5}{*}{Sex} \\ \cline{2-3}
                     & $0.003$ & \num{2912661} &                     &                 &    \\ \cline{2-3}
                     & $0.004$ & \num{4437255} &                     &                 &    \\ \cline{2-3}
                     & $0.005$ & \num{6034989} &                     &                 &    \\ \cline{2-3}
                     & $0.006$ & \num{7805033} &                     &                 &    \\ \hline
\multirow{5}{*}{Gamma} & $0.13$ & \num{1511959} & \multirow{5}{*}{\num{18905}} & \multirow{5}{*}{10} & \multirow{5}{*}{Class} \\ \cline{2-3}
                     & $0.14$ & \num{1961614} &                     &                 &    \\ \cline{2-3}
                     & $0.15$ & \num{2477054} &                     &                 &    \\ \cline{2-3}
                     & $0.16$ & \num{3057096} &                     &                 &    \\ \cline{2-3}
                     & $0.17$ & \num{3702052} &                     &                 &    \\ \hline
\multirow{5}{*}{POPSIM} & $0.04$ & \num{1753508} & \multirow{5}{*}{8,591} & \multirow{5}{*}{2} & \multirow{5}{*}{Race} \\ \cline{2-3}
                     & $0.06$ & \num{3090592} &                     &                 &    \\ \cline{2-3}
                     & $0.08$ & \num{4328986} &                     &                 &    \\ \cline{2-3}
                     & $0.1$ & \num{5442100} &                     &                  &   \\ \cline{2-3}
                     & $0.12$ & \num{6449440} &                     &                 &    \\ \hline
\end{tabular}
\caption{\label{tab:datasetsellinfty}Statistics of the $\delta$-disk graphs used in our experiments for $\ell_\infty^d$.}
\end{table*}

%% file: bigraph.tex
\usetikzlibrary{shapes, fit, positioning, calc, backgrounds}

\begin{figure}[t]
    \centering
    \resizebox{\columnwidth}{!}{%
    \begin{tikzpicture}[
        scale=0.85, transform shape,
        vertex/.style={circle, draw, thick, fill=white, minimum size=16pt, inner sep=0pt, font=\small},
        copyvertex/.style={circle, draw, fill=gray!10, minimum size=13pt, inner sep=0pt, font=\scriptsize},
        group/.style={draw=black!40, dashed, rounded corners, inner sep=3pt},
        edge/.style={thick},
        samecopy/.style={thick, black},
        diffcopy/.style={thick, red, dashed, opacity=0.8},
        partlabel/.style={font=\small\bfseries, gray}
    ]

    \node[vertex] (a) at (0, 1.5) {$a$};
    \node[vertex] (b) at (0, 0) {$b$};
    \node[partlabel, above=0.15cm of a] {Part $\V$};

    \node[vertex] (c) at (2.5, 1.5) {$c$};
    \node[vertex] (d) at (2.5, 0) {$d$};
    \node[partlabel, above=0.15cm of c] {Part $\U$};
    
    \draw[edge] (a) -- (c);
    \draw[edge] (a) -- (d);
    \draw[edge] (b) -- (d);
    
    \node at (1.25, -0.8) {\textbf{(a) Graph $\G$}};

    \draw[->, thick, gray!60] (3.1, 0.75) -- 
        node[midway, above, black] {\scriptsize $k$-blow-up} 
        node[midway, below, black] {\scriptsize $(k=2)$} 
        (4.2, 0.75);

    \begin{scope}[xshift=5cm]
        
        \node[copyvertex] (a1) at (0, 1.8) {$a_1$};
        \node[copyvertex] (a2) at (0, 1.2) {$a_2$};
        \begin{pgfonlayer}{background}
            \node[fit=(a1)(a2), group, label={[gray, font=\scriptsize]left:Copies of $a$}] {};
        \end{pgfonlayer}

        \node[copyvertex] (b1) at (0, 0.3) {$b_1$};
        \node[copyvertex] (b2) at (0, -0.3) {$b_2$};
        \begin{pgfonlayer}{background}
            \node[fit=(b1)(b2), group, label={[gray, font=\scriptsize]left:Copies of $b$}] {};
        \end{pgfonlayer}

        \node[copyvertex] (c1) at (3.5, 1.8) {$c_1$};
        \node[copyvertex] (c2) at (3.5, 1.2) {$c_2$};
        \begin{pgfonlayer}{background}
            \node[fit=(c1)(c2), group, label={[gray, font=\scriptsize]right:Copies of $c$}] {};
        \end{pgfonlayer}

        \node[copyvertex] (d1) at (3.5, 0.3) {$d_1$};
        \node[copyvertex] (d2) at (3.5, -0.3) {$d_2$};
        \begin{pgfonlayer}{background}
            \node[fit=(d1)(d2), group, label={[gray, font=\scriptsize]right:Copies of $d$}] {};
        \end{pgfonlayer}

        \foreach \i in {1,2} { \foreach \j in {1,2} {
            \pgfmathparse{\i==\j ? "samecopy" : "diffcopy"} \edef\style{\pgfmathresult}
            \draw[\style] (a\i) -- (c\j);
        }}

        \foreach \i in {1,2} { \foreach \j in {1,2} {
            \pgfmathparse{\i==\j ? "samecopy" : "diffcopy"} \edef\style{\pgfmathresult}
            \draw[\style] (a\i) -- (d\j);
        }}

        \foreach \i in {1,2} { \foreach \j in {1,2} {
            \pgfmathparse{\i==\j ? "samecopy" : "diffcopy"} \edef\style{\pgfmathresult}
            \draw[\style] (b\i) -- (d\j);
        }}

        \node at (1.75, -0.8) {\textbf{(b) Graph $k\G$}};
    \end{scope}
    \end{tikzpicture}%
    }

    \caption{Illustration of the blow-up graph construction with $k=2$. (a) A bipartite graph $\G$ with vertices $\V=\{a,b\}$ and $\U=\{c,d\}$. (b) The constructed graph $k\G$. Solid black edges connect corresponding copies, while dashed red edges connect non-corresponding copies.}
    \label{fig:gwbc_reduction_example}
    \vspace{-1.2em}
\end{figure}